\listfiles
\documentclass[a4paper,11pt,reqno]{amsart}
\usepackage[a4paper]{geometry}
\geometry{left=3cm,right=3cm,top=2.5cm}
\usepackage{amssymb,amsmath}
\usepackage{mathtools}
\usepackage{enumerate}
\usepackage{bookmark}
\usepackage{hyperref}
\usepackage[initials,backrefs]{amsrefs}
\numberwithin{equation}{section}
\theoremstyle{plain}
\newtheorem{theorem}{Theorem}

\newtheorem{lemma}{Lemma}
\newtheorem{corollary}{Corollary}
\theoremstyle{definition}
\newtheorem{definition}{Definition}
\newtheorem*{assi*}{(I) Short-range interaction}
\newtheorem*{assp*}{(P) Log-H\"older continuity condition}
\newtheorem*{dskn*}{$\dskn$}
\newtheorem*{dsknn*}{$\dsknn$}
\theoremstyle{remark}
\newtheorem{remark}{Remark}
\newcommand{\prob}[1]{\DP\left\{#1\right\}}
\newcommand{\esm}[1]{\mathbb{E}\left[\,#1\,\right]}
\newcommand{\Bone}{\mathbf{1}}
\newcommand{\BA}{\mathbf{A}}
\newcommand{\BB}{\mathbf{B}}
\newcommand{\BC}{\mathbf{C}}

\newcommand{\BG}{\mathbf{G}}
\newcommand{\BH}{\mathbf{H}}

\newcommand{\BK}{\mathbf{K}}
\newcommand{\BM}{\mathbf{M}}
\newcommand{\BP}{\mathbf{P}}
\newcommand{\BU}{\mathbf{U}}
\newcommand{\BV}{\mathbf{V}}

\newcommand{\BX}{\mathbf{X}}
\newcommand{\Bzero}{\mathbf{0}}

\newcommand{\CH}{\mathcal{H}}
\newcommand{\CJ}{\mathcal{J}}

\newcommand{\DC}{\mathbb{C}}

\newcommand{\DN}{\mathbb{N}}
\newcommand{\DP}{\mathbb{P}}
\newcommand{\DR}{\mathbb{R}}
\newcommand{\DZ}{\mathbb{Z}}
\newcommand{\BDelta}{\mathbf{\Delta}}
\newcommand{\BPsi}{\mathbf{\Psi}}
\newcommand{\Bx}{\mathbf{x}}
\newcommand{\By}{\mathbf{y}}
\newcommand{\Bz}{\mathbf{z}}
\newcommand{\Bu}{\mathbf{u}}
\newcommand{\Bv}{\mathbf{v}}
\newcommand{\FB}{\mathfrak{B}}

\newcommand{\rA}{\mathrm{A}}
\newcommand{\rB}{\mathrm{B}}

\newcommand{\rR}{\mathrm{R}}
\newcommand{\rS}{\mathrm{S}}
\newcommand{\rT}{\mathrm{T}}
\DeclareMathOperator{\card}{card}
\DeclareMathOperator{\diam}{diam}
\DeclareMathOperator{\dist}{dist}
\DeclareMathOperator{\supp}{supp}

\newcommand{\ee}{\mathrm{e}}

\newcommand{\comp}{\mathrm{c}}
\newcommand{\fui}{\mathrm{FI}}

\newcommand{\pai}{\mathrm{PI}}
\newcommand{\sep}{\mathrm{sep}}

\newcommand{\condI}{\mathbf{(I)}}
\newcommand{\condP}{\mathbf{(P)}}
\newcommand{\dsk}[1]{\mathbf{(DS.}k,#1,N\mathbf{)}}
\newcommand{\dsn}[2]{\mathbf{(DS.}#1#2,$\,N\mathbf{)}$}

\newcommand{\dskn}{\mathbf{(DS.}k,N\mathbf{)}}
\newcommand{\dsknn}{\mathbf{(DS.}k,n,N\mathbf{)}}

\newcommand{\dskonn}{\mathbf{(DS.}k+1,n,N\mathbf{)}}
\newcommand{\tto}[1]{\smash{\mathop{\,\,\,\, \longrightarrow \,\,\,\, }\limits_{#1}}}
\begin{document}
\title[Multi-particle Anderson localization at low energy]
{Localization in the multi-particle tight-binding Anderson model at low energy}
\author[T.~Ekanga]{Tr\'esor Ekanga$^{\ast}$}
\address{$^{\ast}$
Universit\'e Paris Diderot Paris 7, UMR 7586 
Institut de Math\'ematiques de Jussieu Paris rive Gauche (IMJ-PRG) et Centre National de la Recherche Scientifique (CNRS),
Batiment Sophie Germain,
13 rue Albert Einstein,
75013 Paris,
France}
\email{ekanga@math.jussieu.fr}
\subjclass[2010]{Primary 47B80, 47A75. Secondary 35P10}
\keywords{multi-particle, localization, random operators, low energy}
\date{\today}
\begin{abstract}
We consider the multi-particle Anderson tight-binding model and prove that its lower spectral edge is non-random under some mild assumptions on the inter-particle interaction and the random external potential.
We also adapt to the low energy regime the multi-particle multi-scale analysis initially developed by Chulaevsky and Suhov in the high disorder limit, if the marginal probability distribution of 
the i.i.d. random variables is log-H\"older continuous and obtain the spectral exponential and strong dynamical localization near the bottom of the spectrum. 
\end{abstract}
\maketitle
\section{Introduction}\label{sec:intro}

Multi-particle Anderson localization theory is a relatively recent direction in the spectral theory of random operators.  The general structure of an $N$-particle Hamiltonian of a lattice quantum system with interaction is as follows:
\begin{equation}\label{eq:hamiltonian}
\BH^{(N)}(\omega)=-\BDelta+\sum_{j=1}^NV(x_j,\omega)+\BU, \quad x_1,\dots,x_N\in\DZ^d,
\end{equation}
acting in $\ell^2((\DZ^d)^n)$, where $\BDelta$ is the nearest-neighbor lattice Laplacian in $(\DZ^d)^N\cong\DZ^{Nd}$ and $\BU\colon(\DZ^d)^N\to\DR$ is the potential of inter-particle interaction. The external potential $V\colon\DZ^d\times\Omega\to\DR$ is a random field relative to a probability space $(\Omega,\FB,\DP)$ and  acts on $\ell^2((\DZ^d)^N)$ as a multiplication operator by $V(x_1,\omega)+\cdots V(x_N,\omega)$ for $\Bx=(x_1,\ldots,x_N)\in(\DZ^d)^N$.

The first mathematically rigorous results have been obtained by Chulaevsky and Suhov \cites{CS09a,CS09b} with the help of the Multi-Scale Analysis (MSA) and by Aizenman and Warzel \cites{AW09,AW10} who used the Fractional-Moment Method (FMM). In both cases, it was assumed that the random potential field is i.i.d., and the interaction $\BU$ has finite range. 

In \cite{AW09}, the authors proved the multi-particle spectral and dynamical localization for strongly disordered and weakly interacting systems. Due to technical requirements of the FMM, it was assumed in \cite{AW09} that the distribution of the i.i.d. random variables $\{V(x,\omega):x\in\DZ^d\}$ has a bounded density satisfying an additional technical condition. While in \cite{CS09b}, the multi-particle spectral localization was proven for strongly disordered systems with a H\"older continuous distribution.

In the present paper, we study localization near the bottom of the spectrum of $\BH^{(N)}(\omega)$. This complements the results of \cites{AW09,AW10,CS09a,CS09b}, which were concerned with the high-disorder regime. The recent paper \cite{KN13}  also studies localization for $\BH^{(N)}(\omega)$, but again in the strong disorder regime, improving the estimates of \cite{CS09b}. 

Under some mild assumptions on the random variables and the interaction, we first prove in Theorem \ref{thm:inf.spectrum} that the bottom of the spectrum of $\BH^{(N)}(\omega)$ is non-random and equals $0$ almost surely and we emphasize that the result is both crucial to prove localization at low energy and not obvious for multi-particle systems. 

We then prove exponential and strong dynamical localization in the Hilbert-Schmidt norm near the bottom of the spectrum of $\BH^{(N)}(\omega)$ in Theorems \ref{thm:low.energy.exp.loc} and \ref{thm:low.energy.dynamical.loc} respectively. This is done under the assumption that the common probability distribution of the i.i.d. random variables is log-H\"older continuous.

The full description of $\BH^{(N)}(\omega)$ is given in Section \ref{sec:model.main.results}. In contrast to the single-particle theory, when $\BH^{(N)}(\omega)$ is restricted to disjoint cubes, the corresponding operators are not necessarily independent. This is why, following \cite{CS09b}, we introduce in Section \ref{sec:Np.scheme}, the notion of separable cubes, along with some basic geometric results, and prove a Wegner estimate, as well as an initial length scale estimate. Unfortunately, we could not directly use the multi-particle multiscale analysis developed in \cite{CS09b}, for example this scheme uses the fact that a certain parameter $p(N,g)\rightarrow\infty$ as the disorder $|g|\rightarrow \infty$. Since we are concerned with the low energy regime in this paper, we had to modify the multiscale induction, and this is the object of Section \ref{sec:MP.induction}. We finally prove our main results Theorems \ref{thm:inf.spectrum}, \ref{thm:low.energy.exp.loc} and \ref{thm:low.energy.dynamical.loc} in Section \ref{sec:proof.results}. The result on lower spectral edges Theorem \ref{thm:inf.spectrum} uses the Borel-Cantelli Lemma. The  proof of spectral localization Theorem \ref{thm:low.energy.exp.loc}, is based on the approach of von Dreifus and Klein \cite{DK89}, while the proof of dynamical localization Theorem \ref{thm:low.energy.dynamical.loc}, is based on an adaptation of the argument of Germinet and Klein \cite{GK01} to the lattice. Let us stress that the paper \cite{CS09b} did not prove dynamical localization, and the scheme we present in Section \ref{sec:low.energy.dynamical.loc} can also be used for that model.

\section{The model and the main results}\label{sec:model.main.results}

\subsection{Basic notations}

We are interested in the $N$-particle Hamiltonian \eqref{eq:hamiltonian} under some assumptions on $\BDelta$, $V$ and $\BU$. Fix an arbitrary integer $N\geq 2$. Since multi-scale analysis applied to this Hamiltonian requires also the consideration of Hamiltonians for $n$-particle subsystems for any $1\leq n\leq N$, we introduce notations for such $n$.
Given an integer $\nu\ge 1$, we generally equip $\DZ^{\nu}$ with the max-norm $|\,\cdot\,|$ defined by
\begin{equation}\label{eq:max.norm}
|x|=\max\{|x_1|,\dots,|x_{\nu}|\},
\end{equation}
for $x=\bigl(x_1,\dots,x_{\nu}\bigr)\in\DZ^{\nu}$. Occasionally (e.g., in the definition of the Laplacian)
we will use another norm defined by
\begin{equation}\label{eq:sum.norm}
|x|_1=|x_1|+\dots+|x_{\nu}|.
\end{equation}
Let $n\geq 1$ and $d\geq 1$ be two integers. A configuration of $n$ distinguishable quantum particles $\{x_1,\ldots,x_n\}$ in the lattice $\DZ^d$ is represented by a lattice vector $\Bx\in(\DZ^d)^n\cong \DZ^{nd}$ with coordinates $x_j=(x_j^{(1)},\dots,x_j^{(d)})\in\DZ^d$, $j=1,\dots,n$. 
 The $nd$-dimensional lattice nearest-neighbor Laplacian $\BDelta$ is defined by 
\begin{equation}           \label{eq:def.Delta}
(\BDelta\BPsi)(\Bx)
=\sum_{\substack{\By\in\DZ^{nd}\\|\By-\Bx|_1=1}}\left(\BPsi(\By)-\BPsi(\Bx)\right)
=\sum_{\substack{\By\in\DZ^{nd}\\|\By-\Bx|_1=1}}\BPsi(\By)-2dn\BPsi(\Bx),
\end{equation}
for any $\BPsi\in\ell^2(\DZ^{nd})$, $\Bx\in\DZ^{nd}$. Note that $\BDelta$ is bounded and $-\BDelta$ is nonnegative.
 We will consider  random Hamiltonians $\BH^{(n)}(\omega)$ for $n=1,\ldots,N$ of the form
\[
\BH^{(n)}(\omega)=-\BDelta+\sum_{j=1}^nV(x_j,\omega)+\BU,\tag{\ref{eq:hamiltonian}}
\]
acting in the Hilbert space $\mathcal{H}^{(n)}=\ell^2(\DZ^{dn})$. It is easy to see that \eqref{eq:max.norm} and \eqref{eq:sum.norm} are compatible with the identification $\DZ^{dn}\cong(\DZ^d)^n$, i.e., for $\{x_1,\ldots,x_n\}$ with $x_j\in\DZ^d$  the set of coordinates of $\Bx\in(\DZ^d)^n$,
\begin{align*}
|\Bx|=\max_{1\leq j\leq n}|x_j|,
\quad
|\Bx|_1=|x_1|_1+\dots+|x_n|_1.
\end{align*}

\subsection{Assumptions}

\subsubsection{Assumptions on $\BU$}

\begin{assi*}
The potential of inter-particle interaction
\[
\BU\colon(\DZ^d)^n\to\DR
\]
is bounded and of the form
\begin{equation}\label{eq:def.U}
\BU(\Bx)=\sum_{1\leq i<j\leq n}\Phi(|x_i - x_j|),
\end{equation}
where the points $\{x_i,i=1,\ldots,n\}$ represent the coordinates of $\Bx\in(\DZ^d)^n$ and 
$\Phi\colon\DN \to \DR_+$
is a compactly supported non-negative function:
\begin{align}\label{eq:cond.U}
\exists\, r_0\in\DN:\quad
\supp\, \Phi \subset [0,r_0].
\end{align}
We will call $r_0$ the ``range'' of the interaction $\BU$.
\end{assi*}

\subsubsection{Assumptions on $V$}
Set $\Omega=\DR^{\DZ^d}$ and $\FB=\bigotimes_{\DZ^d}\mathcal{B}(\DR)$ where $\mathcal{B}(\DR)$ is the Borel sigma-algebra on $\DR$. Let $\mu$ be a probability measure on $\DR$ and define $\DP=\bigotimes_{\DZ^d}\mu$ on $\Omega$.

The external random potential $V\colon\DZ^d\times\Omega\to\DR$ is an i.i.d. random field
relative to  $(\Omega,\FB,\DP)$ and is defined by $V(x,\omega)=\omega_x$ for $\omega=(\omega_i)_{i\in\DZ^d}$.
The common probability distribution function, $F_V$,  of the i.i.d. random variables $V(x,\cdot)$, $x\in\DZ^d$ associated to the measure $\mu$ is defined by
\[
F_V: t \mapsto \prob{V(0,\omega)\leq t }.
\]
\begin{assp*}
It is assumed that $0\in\supp \mu\subset[0,+\infty)$. Further, the probability distribution function $F_V$ is log-H\"older continuous:  More precisely,
\begin{align}\label{eq:assumption.3prime}
&s(F_V,\varepsilon) := \sup_{a\in\DR}(F_V(a+\varepsilon)-F_V(a))
\leq\frac{C}{|\ln\epsilon|^{2A}}\\
&\text{for some }C\in(0,\infty)\text{ and }A>\frac{3}{2}\times4^Np+9Nd.\notag
\end{align}
Note that this last condition depends on the parameter $p$ which will be introduced  in Section \ref{sec:Np.scheme}.
\end{assp*}

\begin{remark}
In the course of the scale induction, the bound \eqref{eq:assumption.3prime} will be used in a situation where $\epsilon=\ee^{-L^{\beta}}$, $\beta=1/2$ and $L>0$ is a large integer. With such a value of $\epsilon$, \eqref{eq:assumption.3prime} takes the form
\begin{equation}\label{eq:assumption.3prime.L}
s(F_V,\ee^{-L^\beta})\leq C\,L^{-A}.
\end{equation}
Such a power-law estimate is sufficient for the purposes of the MSA. Further, the results of this paper are valid if the external random potential is correlated but satisfies the independence at large distance condition (IAD). In that case an additional condition will be needed in order to ensure a large deviation estimate for the random process which is automatic in the i.i.d. case \cites{K08,St01}.
\end{remark}
\subsection{Statement of the results}
For any $n=1,\ldots,N$, we denote by $\sigma(\BH^{(n)}(\omega))$ the spectrum of $\BH^{(n)}(\omega)$ and $E^{(n)}_0(\omega)$ the infimum of $\sigma(\BH^{(n)}(\omega))$.
\begin{theorem}[The lower spectral edges are non-random]\label{thm:inf.spectrum}
Let $1\leq n\leq N$. Under  assumptions  $\condI$ and $\condP$, we have with probability one:
\[
[0,4nd]\subset\sigma(\BH^{(n)}(\omega))\subset[0,+\infty).
\]
Consequently,
\[
E^{(n)}_0:= \inf\sigma(\BH^{(n)}(\omega))=0 \quad \emph{a.s.}
\]
\end{theorem}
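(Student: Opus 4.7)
The plan is to prove the two inclusions separately, the identity $E^{(n)}_0 = 0$ then being an immediate consequence. The inclusion $\sigma(\BH^{(n)}(\omega)) \subset [0,+\infty)$ is immediate from positivity: assumption $\condP$ gives $\supp\mu\subset[0,+\infty)$, so $V(x,\omega)\geq 0$ for every $x$ almost surely; assumption $\condI$ gives $\Phi\geq 0$, hence $\BU\geq 0$; and $-\BDelta$ was already noted to be nonnegative. Thus $\BH^{(n)}(\omega)\geq 0$ almost surely, so its spectrum lies in $[0,+\infty)$.

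For the reverse inclusion $[0,4nd]\subset\sigma(\BH^{(n)}(\omega))$ a.s., I would construct, for each fixed $E\in[0,4nd]$, a Weyl sequence by combining product-state approximate eigenfunctions of $-\BDelta$ with a Borel--Cantelli argument. Under the identification $\DZ^{nd}\cong(\DZ^d)^n$ the Laplacian splits as $-\BDelta=\sum_{j=1}^{n}(-\Delta_j)$, so its spectrum equals $[0,4nd]$. Write $E=E_1+\cdots+E_n$ with each $E_j\in[0,4d]$, and pick normalized truncated-plane-wave approximate eigenfunctions $\phi_L^{(j)}\in\ell^2(\DZ^d)$ of $-\Delta$ at energy $E_j$, each supported in a cube of side $2L+1$ around a point $u_j\in\DZ^d$, so that $\|(-\Delta-E_j)\phi_L^{(j)}\|\to 0$ as $L\to\infty$. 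The product $\BPsi_L:=\phi_L^{(1)}\otimes\cdots\otimes\phi_L^{(n)}$ is a unit vector in $\ell^2(\DZ^{nd})$ with $\|(-\BDelta-E)\BPsi_L\|\to 0$, and its support is contained in the product of the single-particle cubes. Provided the centers satisfy $|u_i-u_j|>2L+r_0$ for $i\ne j$, any $\Bx\in\supp\BPsi_L$ has pairwise coordinates farther apart than $r_0$, so $\BU\BPsi_L\equiv 0$ by the compact-support hypothesis on $\Phi$. Provided in addition that $V(x,\omega)<\varepsilon$ for every $x$ in the union of these cubes, the multiplication operator $\sum_j V(x_j,\omega)$ contributes at most $n\varepsilon$ in norm on $\BPsi_L$.

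The remaining probabilistic point is that, $\DP$-a.s., for every $L\in\DN$ and every rational $\varepsilon>0$ one can find $n$ such well-separated low-potential centers. On the sublattice $(2L+r_0+1)\DZ^d$ the single-cube events
\[
A_u=\bigl\{\,V(x,\omega)<\varepsilon\text{ for every }x\text{ in the cube of side }2L+1\text{ around }u\,\bigr\}
\]
are independent with common probability $\mu([0,\varepsilon))^{(2L+1)^d}$, which is strictly positive by $\condP$ because $0\in\supp\mu$. The second Borel--Cantelli lemma then yields infinitely many good cubes a.s., out of which any $n$ suffice. Combining everything, $\|(\BH^{(n)}(\omega)-E)\BPsi_L\|$ can be made arbitrarily small, so $E\in\sigma(\BH^{(n)}(\omega))$ a.s. To upgrade from an $\omega$-set depending on $E$ to a single full-measure set on which the whole interval lies in the spectrum, I would apply this to a countable dense subset $D\subset[0,4nd]$, intersect the resulting full-measure events, and invoke closedness of $\sigma(\BH^{(n)}(\omega))$ to get $\overline{D}=[0,4nd]\subset\sigma(\BH^{(n)}(\omega))$ a.s. The main obstacle I expect is the geometric bookkeeping that matches the interaction range $r_0$ to the cube size $L$; the finite range of $\BU$ imposed by $\condI$ is crucial, because it is precisely what lets a single additive buffer $r_0$ decouple the interactions on $\supp\BPsi_L$ while leaving the Borel--Cantelli step free to produce $n$ separated good cubes among the infinitely many available ones.
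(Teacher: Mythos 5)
Your proof is correct and follows essentially the same strategy as the paper's: nonnegativity of $-\BDelta$, $V$, and $\BU$ gives the upper inclusion, and the lower inclusion is obtained via Borel--Cantelli on a family of well-separated low-potential cubes combined with compactly supported Weyl sequences for $-\BDelta$ translated onto those cubes, where the finite range $r_0$ forces $\BU\equiv 0$ on the support. The only cosmetic differences are that you build the multi-particle Weyl function explicitly as a tensor product of single-particle approximate eigenfunctions (the paper translates a generic bounded-support Weyl sequence), you use a sublattice to get independence (the paper writes an explicit arithmetic sequence of separated centers), and your final ``density of $D$ plus closedness of the spectrum'' step is superfluous since the full-measure event you construct does not actually depend on $E$ --- for $\omega$ in it the Weyl argument works for every $E\in[0,4nd]$ at once, which is exactly how the paper concludes.
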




\begin{theorem}[Multi-particle Anderson localization at low energies]\label{thm:low.energy.exp.loc}
Under the assumptions $\condI$ and $\condP$, there exists $E^*>E_0^{(N)}$ such that with $\DP$-probability one,
\begin{enumerate}[\rm(i)]
\item
the spectrum of $\BH^{(N)}(\omega)$ in $[E_0^{(N)},E^*]$ is nonempty and pure point;
\item
any eigenfunction $\BPsi_i(\Bx,\omega)$ with eigenvalue $E_i(\omega)\in [E_0^{(N)},E^*]$ is exponentially decaying at infinity: there exist a non-random constant $m>0$ and a random constant $C_i(\omega)>0$ such that
\begin{equation} \label{eq:eigenfunctions.decay}
\left|\BPsi_i(\Bx,\omega)\right|\leq C_i(\omega)\ee^{-m|\Bx|}.
\end{equation}
\end{enumerate}
\end{theorem}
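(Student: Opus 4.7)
The plan is to follow the classical von Dreifus--Klein scheme in its multi-particle form, adapted to the low-energy regime: combine the Wegner and initial length scale estimates from Section~\ref{sec:Np.scheme} with the modified multi-scale induction of Section~\ref{sec:MP.induction}, and then extract spectral and exponential localization from the MSA output via the standard generalized eigenfunction argument. Nonemptiness of $\sigma(\BH^{(N)}(\omega)) \cap [E_0^{(N)}, E^*]$ is immediate from Theorem~\ref{thm:inf.spectrum}, which forces $E_0^{(N)} = 0$ almost surely, so the work reduces to the pure point part of (i) and to the exponential decay (ii).

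To set up the base of the induction I would fix $\beta = 1/2$, a parameter $p$ consistent with the bound on $A$ in $\condP$, an initial scale $L_0$ large and $E^* > 0$ small. Because $\BU \geq 0$ by $\condI$ and $V \geq 0$ by $\condP$, the spectrum of the restriction $\BH^{(N)}_{\BC_{L_0}(\Bu)}(\omega)$ is bounded below by the sum, over $j = 1,\dots,N$, of the minima of $V(\cdot,\omega)$ on the projection of $\BC_{L_0}(\Bu)$ onto the $j$-th one-particle coordinate. Hence, as soon as each of those $N$ minima lies above $E^*/N$ with a fixed margin, the restricted operator has a spectral gap away from $I := [0,E^*]$, and a Combes--Thomas estimate at that gap yields the exponential decay of its Green function required for an $E$-non-singular cube uniformly in $E \in I$. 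The log-Hölder bound \eqref{eq:assumption.3prime.L}, together with the i.i.d.\ structure of $V$, controls the probability of failure by $L_0^{-q}$ for any prescribed $q$, provided $A$ is large in the sense imposed by $\condP$ and $E^*$ is chosen small.

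Next I would run the double induction of Section~\ref{sec:MP.induction}: an outer induction over the number of particles $n = 1,\dots,N$ and, for each $n$, an inner induction over scales $L_k = \lceil L_{k-1}^\alpha \rceil$. Pairs of \emph{separable} cubes are handled by the Wegner estimate and the conditional independence of their restricted Hamiltonians; pairs of non-separable cubes, which is the genuinely multi-particle obstacle, are reduced to bad events on cubes with strictly fewer particles, for which the outer inductive hypothesis supplies an MSA-type bound. This yields $\dskn$ at every scale, uniformly in $E \in I$. A standard von Dreifus--Klein argument then closes (i) and (ii): $\dskn$ forces any polynomially bounded solution $\BPsi$ of $(\BH^{(N)} - E)\BPsi = 0$ with $E \in I$ to decay as $\ee^{-m|\Bx|}$ for a deterministic $m > 0$; a Schnol-type generalized eigenfunction expansion on $\DZ^{Nd}$, valid almost surely, then converts this decay into pure point spectrum in $I$ with eigenfunctions satisfying \eqref{eq:eigenfunctions.decay}.

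The principal obstacle, compared with the high-disorder scheme of \cite{CS09b}, is that the MSA decay exponent $p$ can no longer be made as large as desired by inflating a coupling constant; it must be fixed at the start, and both the base step and every inductive step must close with the same $p$. Reconciling this with the low-energy initial estimate is exactly what forces the quantitative constraint $A > \tfrac{3}{2}\cdot 4^N p + 9Nd$ in $\condP$: the factor $4^N$ reflects the combinatorial cost of the $N$ nested reductions of non-separable pair events to subsystems with fewer particles, while the additive $9Nd$ absorbs the volume factors accumulated along the scale induction. Making the non-separable reduction close uniformly in $n = 1,\dots,N$ at low energy, with $p$ and the scale parameters fixed ab initio, is the technical heart of the argument.
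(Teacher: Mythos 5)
Your overall architecture is the right one and it matches the paper: the endgame (Section~\ref{sec:low.energy.exp.loc}) is the von Dreifus--Klein scheme, in which the MSA bound $\dskn$ is combined with a Borel--Cantelli argument over annuli $A_{k+1}(\Bu)$ built around the separability radius of Lemma~\ref{lem:separability}, the geometric resolvent inequality is iterated to show that every polynomially bounded generalized eigenfunction with energy in $[E_0^{(N)},E^*]$ decays like $\ee^{-\rho\rho' m |\Bx|}$, and a Schnol-type expansion (Theorem~\ref{thm:spectral,densities}) converts this into pure point spectrum. Your description of the double induction over particle number and scale is also consistent with Section~\ref{sec:MP.induction}.

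However, your sketch of the base case (initial length scale estimate) contains a genuine error. You propose bounding the spectrum of $\BH^{(N)}_{\BC_{L_0}(\Bu)}$ from below by $\sum_{j}\min_{x_j\in C^{(1)}_{L_0}(u_j)} V(x_j,\omega)$ and then demanding that each of these minima exceed $E^*/N$. That lower bound is valid, but the corresponding event has probability
\[
\DP\Bigl\{\min_{x\in C^{(1)}_{L_0}(u)} V(x,\omega)>t\Bigr\}=(1-F_V(t))^{(2L_0+1)^d},
\]
which tends to $0$ as $L_0\to\infty$ for any fixed threshold $t>0$ (and even for $t\sim L_0^{-1/2}$, since $F_V(t)\gtrsim(\ln L_0)^{-2A}$ decays far too slowly to compensate for the $L_0^d$ independent trials). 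So the cube would be non-singular with vanishing, not overwhelming, probability, and the induction could never start. The paper avoids this via the Lifshitz-tails mechanism (Lemma~\ref{lem:low.energy.gap.prob}, drawn from Kirsch's Theorem~11.4): one does not need every $V(x)$ to exceed a threshold, only an order-one fraction of them, because the lattice Laplacian forces the ground state to spread out, so its energy is controlled by an empirical average of $V$ over the cube rather than by $\min V$. The probability that this fraction condition fails is then bounded by a large-deviation estimate, giving $C_1 L_0^d \ee^{-c L_0^{1/4}}$, which does go to zero fast enough to seed the recursion. Your rendering of the constant $A>\tfrac{3}{2}\cdot 4^N p + 9Nd$ as a bookkeeping of the $4^{N-n}$ decay exponents and volume factors is roughly the right intuition, but the probabilistic input to the base case must be the Lifshitz-tail estimate, not a union bound over $\{\min V < t\}$.
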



\begin{theorem}[Strong dynamical localization at low energies]\label{thm:low.energy.dynamical.loc}
Under the assumptions $\condI$ and $\condP$, there exist $E^*>E^{(N)}_0$ and $s^*(N,d)>0$ such that for any bounded $\BK\subset \DZ^{Nd}$ and any $0<s<s^*$ we have 
\begin{equation}\label{eq:low.energy.dynamical.loc}
\esm{\sup_{\|f\|_{\infty}\leq 1}\Bigl\| |\BX|^{\frac{s}{2}}f(\BH^{(N)}(\omega))\BP_{I}(\BH^{(N)}(\omega))\Bone_{\BK}\Bigr\|_{HS}^2}<\infty,
\end{equation}
where $(|\BX|\BPsi)(\Bx):=|\Bx|\BPsi(\Bx)$, $\BP_{I}(\BH^{(N)}(\omega))$ is the spectral projection of $\BH^{(N)}(\omega)$ onto the interval $I:=[E^{(N)}_0,E^*]$, and the supremum is taken over bounded measurable functions $f$.
\end{theorem}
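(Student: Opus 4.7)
The plan is to adapt the Germinet--Klein bootstrap argument \cite{GK01} to the multi-particle lattice setting, using as probabilistic input the variable-energy MSA estimate $\dskn$ established in Section \ref{sec:MP.induction}. That estimate says that for every scale $L_k$, every energy $E\in I=[E_0^{(N)},E^*]$, and every pair of $L_k$-separable $N$-particle cubes, the probability that both are $(E,m)$-singular is bounded by $L_k^{-2Np}$, with a parameter $p$ chosen as large as needed via the exponent $A$ in $\condP$.

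First I would expand the Hilbert--Schmidt norm as
\begin{equation*}
\Bigl\||\BX|^{s/2}f(\BH^{(N)})\BP_I(\BH^{(N)})\Bone_{\BK}\Bigr\|_{HS}^{2}
=\sum_{\By\in\BK}\sum_{\Bx\in\DZ^{Nd}}|\Bx|^{s}\bigl|\bigl(f(\BH^{(N)})\BP_I(\BH^{(N)})\bigr)(\Bx,\By)\bigr|^{2},
\end{equation*}
pull the supremum over $\{f:\|f\|_\infty\le 1\}$ inside the double sum, and study the resulting quantity
$F_{\Bx,\By}(\omega):=\sup_{\|f\|_\infty\le 1}\bigl|(f(\BH^{(N)})\BP_I(\BH^{(N)}))(\Bx,\By)\bigr|$,
which is bounded deterministically by $1$ and which I would like to bound, when $|\Bx-\By|\ge L_k$, by a constant times $\ee^{-mL_k}$ outside an event of $\DP$-measure $\ord(L_k^{-2Np})$. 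This is the multi-particle analogue of the \emph{summable uniform decay of eigenfunction correlators} (SUDEC) property. I would derive it from $\dskn$ by using the standard eigenfunction decay inequality to reduce $F_{\Bx,\By}$ to boundary values of finite-volume Green's functions in two $L_k$-cubes centered at $\Bx$ and $\By$, which must be simultaneously $(E,m)$-singular in order for $F_{\Bx,\By}$ to fail to decay.

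Having the SUDEC bound, I would split the sum over $\Bx$ into dyadic shells $\{L_k\le|\Bx-\By|<L_{k+1}\}$. On each shell the number of sites is $\ord(L_k^{Nd})$, the weight $|\Bx|^{s}$ is at most $(|\By|+L_k)^{s}$, and $\esm{F_{\Bx,\By}^{2}}$ is dominated by $\ee^{-2mL_k}+L_k^{-2Np}$. Summing over $k$ yields a convergent series as soon as $2Np>Nd+s$; together with the boundedness of $\BK$, this produces the announced finiteness of the expectation for every $0<s<s^*(N,d):=2Np-Nd$.

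The main obstacle lies in the SUDEC step, not in its subsequent summation. In the single-particle case, disjoint cubes are automatically independent, so the two-cube estimate is immediate from the MSA; in the multi-particle setting two cubes in $\DZ^{Nd}$ may project onto overlapping subsets of $\DZ^d$ and therefore share random potential values, so the probabilistic bound is only available for \emph{separable} pairs as defined in Section \ref{sec:Np.scheme}. Non-separable pairs will have to be treated by a partially deterministic argument based on the geometric classification of $N$-particle cubes developed there, paying for it with extra polynomial factors in $L_k$; those factors are precisely what is absorbed by the choice $A>\tfrac{3}{2}\times 4^{N}p+9Nd$ in $\condP$. A secondary technical point, familiar from single-particle theory, is the measurability and uniformity of the supremum over $f$; this will be handled by restricting to a dense countable family of bounded measurable functions and invoking the pure-point spectral representation in $I$ supplied by Theorem \ref{thm:low.energy.exp.loc}.
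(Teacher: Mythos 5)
Your overall strategy matches the paper's: reduce the Hilbert--Schmidt norm to a double sum over $\By\in\BK$ and $\Bx\in\DZ^{Nd}$, establish a kernel decay estimate outside a small-probability bad event using $\dskn$, and sum over dyadic shells. However, there is a genuine gap in the way you plan to resolve what you correctly identify as ``the main obstacle'': the treatment of pairs of cubes that are not separable.

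You propose to handle non-separable pairs ``by a partially deterministic argument based on the geometric classification \dots, paying for it with extra polynomial factors in $L_k$.'' This is both vague and unnecessary, and in fact a purely polynomial price would not suffice: a non-separable pair $\BC^{(N)}_{L_k}(\Bx)$, $\BC^{(N)}_{L_k}(\By)$ can consist of cubes arbitrarily far apart in $\DZ^{Nd}$ whose projections to $\DZ^d$ nonetheless overlap (e.g., $\Bx=(a,b)$, $\By=(b,a)$ with $|a-b|$ large), so there is no bound on the number of such pairs per shell that is polynomial in $L_k$ and independent of the shell. The paper's resolution is that no non-separable pair ever arises in the relevant sum, because of part (B) of Lemma~\ref{lem:separability}: $\BC^{(N)}_L(\Bx)$ is separable from $\BC^{(N)}_L(\By)$ as soon as $|\Bx-\By|>\diam(\varPi\By)+3NL$. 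Since $\BK$ is bounded, one can fix $k_0$ with $\BK\subset\BC^{(N)}_{L_{k_0}}(\Bzero)$, decompose $\DZ^{Nd}$ into annuli $\BM_j(\Bzero)=\BC^{(N)}_{(7N+1)L_{j+1}}(\Bzero)\setminus\BC^{(N)}_{(7N+1)L_j}(\Bzero)$, and note that any $\By\in\BC^{(N)}_{L_j}(\Bzero)$ has $\diam(\varPi\By)\le 2L_j$; hence $\Bx\in\BM_j(\Bzero)$ gives $|\Bx-\By|>7NL_j>\diam(\varPi\By)+3NL_j$ and the pair is automatically separable. This observation is what makes the kernel-decay lemma (Lemma~\ref{lem:decay,kernels}) go through with the MSA estimate $\dskn$ alone, with no auxiliary deterministic argument and no extra polynomial losses. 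Your proof, as sketched, is incomplete at precisely this step.

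Two smaller inaccuracies: for $n=N$ the estimate $\dsknn$ reads $L_k^{-2p\cdot 4^{N-n}}=L_k^{-2p}$, not $L_k^{-2Np}$, and correspondingly the paper's threshold is $s^*=\tfrac{2p}{\alpha}-Nd-1$ (with $\alpha=3/2$), not $2Np-Nd$. Also, the paper obtains the polynomially bounded generalised eigenfunction expansion directly from Theorem~\ref{thm:spectral,densities} (Kirsch's book), rather than going through the pure-point spectral decomposition of Theorem~\ref{thm:low.energy.exp.loc}; the latter is not needed and using it would unnecessarily couple the two results.
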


To prove these results, especially Theorems \ref{thm:low.energy.exp.loc} and \ref{thm:low.energy.dynamical.loc} we will begin with an input: the multi-scale analysis which is the object of the following two Sections, Section \ref{sec:Np.scheme} and \ref{sec:MP.induction}.

\section{The N-particle MSA scheme }  \label{sec:Np.scheme}

\subsection{The general structure of the multiparticle MSA}

Following the general approach of \cites{CS09b,AW09}, we prove localization for systems with a number of particles bounded by some integer $N<\infty$. The value of $N\geq 2$ may be arbitrary, but once it is chosen, it is fixed for the rest of the scaling analysis, and several important parameters used in our scheme depend upon $N$.

We will carry out a finite induction on the number of particles $n$ varying
from $1$ (where the well-known results of the $1$-particle localization theory can be used) to $N$. In some notations we emphasize the $N$-dependence of parameters, while in other instances this dependence is omitted for simplicity.

According to the general structure of the MSA, we work with lattice \emph{rectangles}, for $\Bu\in\DZ^{nd}$ with coordinates $\{u_1,\ldots,u_n\}\subset\DZ^d$ and given $\{L_i: i=1,\ldots,n\}$,
\begin{equation}          \label{eq:cube}
\BC^{(n)}(\Bu)=\prod_{i=1}^n C^{(1)}_{L_i}(u_i),
\end{equation}
where
\[
C^{(1)}_{L_i}(u_i)=\left\{x\in\DZ^{d}:|x-u_i|\leq L_i\right\}.
\]
By $\BC^{(n)}_L(\Bu)$ we denote the $n$-particle cube, i.e.,
\[
\BC^{(n)}_L(\Bu)=\left\{\Bx\in\DZ^{nd}:|\Bx-\Bu|\leq L\right\}.
\]
 We  define the \emph{boundary} of the domain $\BC^{(n)}(\Bu)$ by
\begin{align}
\partial\BC^{(n)}(\Bu)&=\{(\Bv,\Bv')\in\DZ^{nd}\times\DZ^{nd}\,\Big|\, |\Bv-\Bv'|_1=1\text{ and }\notag\\
&\quad \text{either  $\Bv\in\BC^{(n)}(\Bu)$, $\Bv'\notin\BC^{(n)}(\Bu)$ or $\Bv\notin\BC^{(n)}(\Bu)$, $\Bv'\in\BC^{(n)}(\Bu)$}\}\label{eq:boundary},
\end{align}
its \emph{internal boundary} by 
\begin{equation}\label{eq:int.boundary}
\partial^-\BC^{(n)}(\Bu)=\left\{\Bv\in\DZ^{nd}:\dist\left(\Bv,\DZ^{nd}\setminus \BC^{(n)}(\Bu)\right)=1\right\},
\end{equation}
and its \emph{external boundary} by
\begin{equation}\label{eq:ext.boundary}
\partial^+\BC^{(n)}(\Bu)=\left\{\Bv\in\DZ^{nd}\setminus\BC^{(n)}(\Bu):\dist\left(\Bv, \BC^{(n)}(\Bu)\right)=1\right\}.
\end{equation}
Clearly, $|\BC_L^{(n)}(\Bu)| := \card\BC_L^{(n)}(\Bu)=(2L+1)^{nd}$. We will often use the simpler estimate $|\BC_L^{(n)}(\Bu)|\leq(3L)^{nd}$. Given $n\ge 1$ and $\Bu\in(\DZ^d)^n$ with coordinates $\{u_1,\ldots,u_n\}$, we define 
$$
\varPi \Bu := \{u_1, \ldots, u_n\} \subset \DZ^d,
$$
$\varPi \Bu$ will be referred to as the \emph{projection} of $\Bu$. More generally, for any nonempty subset $\CJ\subset\{1,\dots,n\}$ we define the $\CJ$-projection by
$$
\varPi_{\CJ} \Bu := \{u_j, j\in\CJ\} \subset \DZ^d.
$$
(Note that for $\CJ = \{1, \ldots, n\}$ we have $\varPi_{\CJ} = \varPi$. )
For an $n$-particle rectangle 
$$
\BC^{(n)}(\Bu)=\prod_{i=1}^n C^{(1)}_{L_i}(u_i)\subset\DZ^{nd}
$$
we can consider its $\CJ$-projection, for any $\CJ \subseteq \{1, \ldots, n\}$:
\begin{equation}   \label{eq:cube.partial.projection}
\varPi_{\CJ}\BC^{(n)}(\Bu)=\bigcup_{j\in\CJ}C^{(1)}_{L_j}(u_j)\subset\DZ^d.
\end{equation}
Again, for $\CJ = \{1, \ldots, n\}$ we will write $\varPi \BC^{(n)}(\Bu)$
instead of $\varPi_{\CJ} \BC^{(n)}(\Bu)$.
We define the restriction of the Hamiltonian $\BH^{(n)}$ to  $\BC^{(n)}(\Bu)$ by
\begin{align*}
&\BH_{\BC^{(n)}(\Bu)}^{(n)}=\BH^{(n)}\big\vert_{\BC^{(n)}(\Bu)}\\
&\text{with simple boundary conditions on }\partial^+\BC^{(n)}(\Bu),
\end{align*}
i.e., $\BH^{(n)}_{\BC^{(n)}(\Bu)}(\Bx,\By)=\BH^{(n)}(\Bx,\By)$ whenever $\Bx,\By\in\BC^{(n)}(\Bu)$ and $\BH^{(n)}_{\BC^{(n)}(\Bu)}(\Bx,\By)=0$ otherwise.
We denote the spectrum of $\BH_{\BC^{(n)}(\Bu)}^{(n)}$  by
$\sigma\bigl(\BH_{\BC^{(n)}(\Bu)}^{(n)}\bigr)$ and its resolvent by
\begin{equation}\label{eq:def.resolvent}
\BG_{\BC^{(n)}(\Bu)}(E):=\Bigl(\BH_{\BC^{(n)}(\Bu)}^{(n)}-E\Bigr)^{-1},\quad E\in\DR\setminus\sigma\Bigl(\BH_{\BC^{(n)}(\Bu)}^{(n)}\Bigr).
\end{equation}
The matrix elements $\BG_{\BC^{(n)}(\Bu)}(\Bx,\By;E)$ are usually called the
\emph{Green functions} of the operator $\BH_{\BC^{(n)}(\Bu)}^{(n)}$.
The multi-scale analysis is based on a length scale $\{L_k\}_{k\geq 0}$ which is chosen as follows.
\begin{definition}[length scale]\label{def:length.scale}
The length-scale $\{L_k\}_{k\geq 0}$ is a sequence of integers defined by the initial length-scale $L_0>3$, and by the recurrence relation
\[
L_{k+1}=\lfloor L_k^{\alpha}\rfloor +1,
\]
where $1<\alpha<2$ is some fixed number. In this paper, $\alpha=3/2$.
\end{definition}
This length scale $\{L_k\}_{k\geq 0}$ is assumed to be chosen at the beginning of the multi-scale
analysis, except that in the course of the analysis it is often required that $L_0$ be large enough.
\begin{definition}[$E$-resonant]
Let $n\geq 1$, $\beta=1/2$, and $E\in\DR$ be given. Consider a rectangle  $\BC^{(n)}(\Bu)=\prod_{i=1}^n C^{(1)}_{L_i}(u_i)$ and set $L=\min_{i=1,\ldots,n}\{L_i\}$. $\BC^{(n)}(\Bu)$ 
 is called $E$-resonant ($E$-R) if
\begin{equation} \label{eq:E-resonant}
\dist\Bigl[E,\sigma\bigl(\BH_{\BC^{(n)}(\Bu)}^{(n)}\bigr)\Bigr]<\ee^{-L^{\beta}}.
\end{equation}
Otherwise it is called $E$-non-resonant ($E$-NR).
\end{definition}

The next definition concerns only cubes and depends on the parameter $\alpha>1$ which governs the length scale of our
multi-scale analysis.

\begin{definition}[$E$-completely non-resonant]\label{def:E-CNR}
Let $E\in\DR$ be given and $\alpha=3/2$. A cube $\BC_L^{(n)}(\Bv)\subset\DZ^{nd}$ of size $L\geq 2$ is called
$E$-completely non-resonant ($E$-CNR) if it does not contain any $E$-R cube of size $\geq L^{1/\alpha}$.
In particular, $\BC^{(n)}_L(\Bv)$ is itself $E$-NR.
\end{definition}

\begin{definition}[$(E,m)$-singular]
Let $m>0$ and $E\in\DR$ be given.  A cube $\BC_L^{(n)}(\Bu)\subset\DZ^{nd}$,
$1\leq n\leq N$ is called $(E,m)$-nonsingular ($(E,m)$-NS) if
\begin{equation}\label{eq:singular}
\max_{\Bv\in\partial^-\BC_L^{(n)}(\Bu)}\left|\BG_{\BC_L^{(n)}(\Bu)}(\Bu,\Bv;E)\right|
\leq\ee^{-\gamma(m,L,n)L},
\end{equation}
where $\gamma(m,L,n)=\gamma(m,L,n,N)$ is defined by
\begin{equation}\label{eq:gamma}
\gamma(m,L,n)=m(1+L^{-1/8})^{N-n+1}>m.
\end{equation}
Otherwise it is called $(E,m)$-singular ($(E,m)$-S).
\end{definition}

We will also make use of the following notion.


\begin{definition}\label{def:separability}
Let  $\BC^{(n)}(\Bx)=\prod_{i=1,\ldots,n}C^{(1)}_{L_i}(x_i)$ and   $\BC^{(n)}(\By)=\prod_{i=1,\ldots,n}C^{(1)}_{L'_i}(y_i)$ be two rectangles.
\begin{enumerate}[\rm(i)]
\item
$\BC^{(n)}(\Bx)$ is $\CJ$ pre-separable from $\BC^{(n)}(\By)$ if there exists a
nonempty subset $\CJ\subset\{1,\cdots,n\}$ such that
\[
\left(\bigcup_{j\in \CJ}\varPi_{j}\BC^{(n)}(\Bx)\right)\cap
\left(\bigcup_{j\notin \CJ}\varPi_{j}\BC^{(n)}(\Bx)\cup \varPi\BC^{(n)}(\By)\right)=\emptyset.
\]
\item
A pair $(\BC^{(n)}(\Bx),\BC^{(n)}(\By))$ is pre-separable if one of the rectangle
is $\CJ$ pre-separable from the other. 
\item
A pair $(\BC^{(n)}(\Bx),\BC^{(n)}(\By))$ is separable if and only if it is pre-separable and $|\Bx-\By|>7NL$ where  $L=\max\{L_i,L'_i:i=1,\ldots,n\}$.
\end{enumerate}
\end{definition}

This notion of separability was initially used in the framework of the multiparticle MSA  in \cite{CS09b} in the
discrete case and in \cite{BCS11} in the continuum.
Before proceeding further, we state a geometric fact which describes pairs of
separable cubes.


\begin{lemma}\label{lem:separability}
Let $L>1$.
\begin{enumerate}[\rm(A)]
\item
For any $\Bx\in\DZ^{nd}$, there exists a collection of $n$-particle cubes
$\BC^{(n)}_{2nL}(\Bx^{(\ell)})$ with $\ell=1,\ldots,\kappa(n)$, $\kappa(n)= n^n$ such that if
 $\By$ satisfies $|\By-\Bx|>7NL$ and
\[
\By \notin \bigcup_{\ell=1}^{\kappa(n)} \BC^{(n)}_{2nL}(\Bx^{(\ell)})
\]
then the cubes $\BC^{(n)}_L(\Bx)$ and $\BC^{(n)}_L(\By)$ are separable.
\item
Let $\BC^{(n)}_L(\By)\subset \DZ^{nd}$ be an $n$-particle cube. Any cube  $\BC^{(n)}_L(\Bx)$ with $|\By-\Bx|>\max_{1\leq i,j\leq n}|y_i-y_j| +3NL$ is $\CJ$-separable from
$\BC^{(n)}_L(\By)$ for some $\CJ\subset\{1,\ldots,n\}$.
\end{enumerate}
\end{lemma}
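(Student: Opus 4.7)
Both parts rest on a common clustering principle in $\DZ^d$. Given an $n$-tuple of coordinates $u_1,\ldots,u_n\in\DZ^d$, partition the index set $\{1,\ldots,n\}$ into $\Bu$-clusters by the equivalence relation generated by $i\sim j\iff |u_i-u_j|\le 2L$. Each cluster then has $\DZ^d$-diameter at most $2(n-1)L$, while two indices $j,j'$ in different clusters automatically satisfy $|u_j-u_{j'}|>2L$, i.e.\ the cubes $C^{(1)}_L(u_j)$ and $C^{(1)}_L(u_{j'})$ are disjoint. To get $\CJ$-pre-separability of $\BC^{(n)}_L(\Bu)$ from $\BC^{(n)}_L(\Bv)$ it therefore suffices to exhibit a $\CJ$ for which every $j\in\CJ$ lies at distance $>2L$ from every $u_{j'}$ with $j'\notin\CJ$ and from every $v_i$.

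For part (A), I would label the required $\kappa(n)=n^n$ cubes by all functions $\tau\colon\{1,\ldots,n\}\to\{1,\ldots,n\}$, setting $\Bx^{(\tau)}:=(x_{\tau(1)},\ldots,x_{\tau(n)})\in\DZ^{nd}$. Assume $|\By-\Bx|>7NL$ and $\By\notin\BC^{(n)}_{2nL}(\Bx^{(\tau)})$ for every $\tau$, which says that for each $\tau$ some coordinate $i$ satisfies $|y_i-x_{\tau(i)}|>2nL$. Set $A_i:=\{\,j:|y_i-x_j|\le 2nL\,\}$; if every $A_i$ were nonempty any selection $\tau(i)\in A_i$ would contradict the previous statement, so some $A_{i^\ast}$ is empty and hence $|y_{i^\ast}-x_j|>2nL$ for all $j$. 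Letting $\mathcal{L}$ denote the $\By$-cluster containing $i^\ast$, the triangle inequality gives for $i\in\mathcal{L}$ and any $j$
\[
|y_i-x_j|\ge |y_{i^\ast}-x_j|-|y_i-y_{i^\ast}|>2nL-2(n-1)L=2L,
\]
while $|y_i-y_{i'}|>2L$ for $i'\notin\mathcal{L}$ by the cluster definition. Thus $\BC^{(n)}_L(\By)$ is $\mathcal{L}$-pre-separable from $\BC^{(n)}_L(\Bx)$, and the standing assumption $|\Bx-\By|>7NL$ upgrades this to separability.

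For part (B), set $D:=\max_{1\le i,j\le n}|y_i-y_j|=\diam\varPi\By$. The hypothesis $|\By-\Bx|>D+3NL$ forces some index $i_0$ to satisfy $|y_{i_0}-x_{i_0}|>D+3NL$, and then for every $i$,
\[
|y_i-x_{i_0}|\ge |y_{i_0}-x_{i_0}|-|y_i-y_{i_0}|>3NL.
\]
Let $\CJ$ be the $\Bx$-cluster containing $i_0$; its $\DZ^d$-diameter is at most $2(n-1)L\le 2(N-1)L$. Hence for $j\in\CJ$ and any $i$,
\[
|x_j-y_i|\ge |x_{i_0}-y_i|-|x_{i_0}-x_j|>3NL-2(N-1)L=(N+2)L>2L,
\]
and $|x_j-x_{j'}|>2L$ for $j'\notin\CJ$, giving precisely the $\CJ$-pre-separability of $\BC^{(n)}_L(\Bx)$ from $\BC^{(n)}_L(\By)$. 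The only real obstacle in either part is calibrating the two length scales involved: the clustering threshold $2L$ and the reference-cube radius ($2nL$ in (A), the diameter slack $3NL$ in (B)) must be chosen so that the differences $2nL-2(n-1)L$ and $3NL-2(N-1)L$ strictly exceed $2L$. Once this numerical bookkeeping is fixed, the elementary selection step in (A) and the extremal-coordinate step in (B) are immediate.
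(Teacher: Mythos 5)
Your proof is correct. For part~(B) your argument is in substance the same as the paper's own: from $|\Bx-\By|>\diam\varPi\By+3NL$ one extracts an extremal index $i_0$ with $|x_{i_0}-y_{i_0}|$ large, takes the group of $x$-coordinates clustered around $x_{i_0}$, and observes that this group is isolated both from the remaining $x_j$ and from all of $\varPi\BC^{(n)}_L(\By)$. The paper phrases this via the maximal connected component $\Lambda_{\Bx}$ of $\bigcup_i C^{(1)}_L(x_i)$ containing $x_{i_0}$ and a distance estimate in $\DR^d$; you phrase it via the equivalence relation $|x_i-x_j|\le 2L$ on indices. These are two descriptions of the same clustering, and your version has slightly more slack ($(N+2)L>2L$ rather than a bound that degenerates to $0$ when $n=N=2$ and is rescued only by integrality).

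Where your write-up genuinely adds something is part~(A): the paper simply cites \cite{CS09b}*{Lemma~1} and omits the proof, while you give a complete and quite transparent argument. Your combinatorial step is the crux: indexing the $n^n$ ``bad'' centers by all maps $\tau\colon\{1,\ldots,n\}\to\{1,\ldots,n\}$, so that $\By\notin\bigcup_\tau\BC^{(n)}_{2nL}(\Bx^{(\tau)})$ forces, by a selection argument on the sets $A_i=\{j:|y_i-x_j|\le 2nL\}$, the existence of a coordinate $y_{i^\ast}$ far from \emph{all} $x_j$. Passing to the $\By$-cluster of $i^\ast$ and verifying the two required disjointness conditions (intra-$\By$ and $\By$-versus-$\Bx$) is then exactly the pre-separability of Definition~\ref{def:separability}. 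Two minor points worth stating explicitly if this were to replace the paper's reference: (i) if the cluster $\mathcal L$ happens to be all of $\{1,\ldots,n\}$ the argument still goes through, since the intra-$\By$ condition is then vacuous and $\CJ=\{1,\ldots,n\}$ is allowed by the definition; (ii) a cluster on $k\le n$ indices has \emph{center} diameter at most $2(k-1)L\le 2(n-1)L$ because the generated equivalence relation gives a chain of length at most $k-1$, which is the estimate your inequality $|y_i-x_j|>2nL-2(n-1)L=2L$ actually uses.
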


\begin{proof}
See the proof in the Appendix (Section \ref{sec:appendix}).
\end{proof}

Given  $1\leq n\leq N$, the following property for some $E^*>0$ will play an important role in our
strategy.

\begin{dsknn*}
For any pair of separable cubes $\BC_{L_k}^{(n)}(\Bu)$ and $\BC_{L_k}^{(n)}(\Bv)$
\begin{equation}\label{eq:property.DS.k.n.N}
\DP\left\{\exists\,E\in I:\BC_{L_k}^{(n)}(\Bu),\ \BC_{L_k}^{(n)}(\Bv) \text{ are } (E,m)\text{-S}\right\}
\leq L_k^{-2p\,4^{N-n}},
\end{equation}
where $m>0$, $p>6Nd$, and $I=(-\infty,E^*]$ with $E^*>0$, are fixed.
\end{dsknn*}
For brevity, we denote $\dskn:=\dsk{N}$.

By Definition \ref{def:separability}, if for $N$ fixed and  two cubes
$\BC_{L_k}^{(n)}(\Bu)$, $\BC_{L_k}^{(n)}(\Bv)$ with $1\leq n\leq N$
are separable, then $|\Bu-\Bv|> 7\,NL_k$.
The role of this lower bound on the distance between their centers will become clear later.


\subsection{Eigenvalue concentration bounds}



\begin{theorem}\label{thm:Wegner}
For all $\varepsilon>0$ and two pre-separable rectangles $\BC^{(n)}(\Bu)=\prod_{i=1,\ldots,n}C^{(1)}_{L_i}(u_i)$, $\BC^{(n)}(\Bu')=\prod_{i=1,\ldots,n}C^{(1)}_{L'_i}(u'_i)$, we have:
\begin{align*}
&\prob{\dist\left[\sigma(\BH^{(n)}_{\BC^{(n)}(\Bu)}),\sigma(\BH^{(n)}_{\BC^{(n)}(\Bu')})\right]
\leq \varepsilon }\\
&\qquad\leq|\BC^{(n)}(\Bu')|\cdot |\BC^{(n)}(\Bu)| \cdot\max_{i=1,\ldots,n}\cdot\max_{\Bu,\Bu'}\{ |\varPi_i\BC^{(n)}(\Bu)|,|\varPi_i\BC^{(n)}(\Bu')|\} \cdot s(F_V,2\varepsilon).
\end{align*}
\end{theorem}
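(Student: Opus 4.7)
The approach is a multi-particle adaptation of the standard Stollmann spectral-shift argument, in which the pre-separability hypothesis is used precisely to expose a set of random variables that enter $\BH^{(n)}_{\BC^{(n)}(\Bu)}$ but are independent of the entries of $\BH^{(n)}_{\BC^{(n)}(\Bu')}$. By symmetry I may assume that $\BC^{(n)}(\Bu)$ is $\CJ$-pre-separable from $\BC^{(n)}(\Bu')$, and set $A:=\bigcup_{j\in\CJ}\varPi_j\BC^{(n)}(\Bu)\subset\DZ^d$. The disjointness in Definition \ref{def:separability} then gives that $A$ is disjoint from both $\varPi\BC^{(n)}(\Bu')$ and $\bigcup_{j\notin\CJ}\varPi_j\BC^{(n)}(\Bu)$, so the i.i.d.\ family $\{V(x,\omega):x\in A\}$ is independent of every entry of $\BH^{(n)}_{\BC^{(n)}(\Bu')}$ and of the $j\notin\CJ$ contributions to the potential of $\BH^{(n)}_{\BC^{(n)}(\Bu)}$. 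I condition on all $\omega_y$ with $y\notin A$; under this conditioning the spectrum $\{E'_k\}$ of $\BH^{(n)}_{\BC^{(n)}(\Bu')}$ is deterministic.

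The central computation is the rigid shift identity: if one replaces $\omega_x\mapsto\omega_x+t$ simultaneously for every $x\in A$, then for each $\Bx\in\BC^{(n)}(\Bu)$ the on-site potential $\sum_{j=1}^n V(x_j,\omega)$ increases by exactly $|\CJ|\,t$, because each $j\in\CJ$ contributes $+t$ (since $x_j\in A$) while each $j\notin\CJ$ contributes $0$ (since $x_j\in\varPi_j\BC^{(n)}(\Bu)$ is disjoint from $A$). Consequently $\BH^{(n)}_{\BC^{(n)}(\Bu)}$ shifts uniformly by $|\CJ|\,t\cdot\BI$, so every eigenvalue $E_j(\omega)$ satisfies $E_j(\omega+t\,\Bone_A)=E_j(\omega)+|\CJ|\,t\geq E_j(\omega)+t$ for every $t\geq 0$, while $\BH^{(n)}_{\BC^{(n)}(\Bu')}$ is unchanged by the shift.

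Stollmann's lemma, applied to each $E_j$ viewed as a function of the $|A|$ variables $(\omega_x)_{x\in A}\in\DR^{|A|}$, then yields
\[
\prob{|E_j(\omega)-E'_k|\leq\varepsilon\mid\omega_{A^\comp}}\leq|A|\,s(F_V,2\varepsilon)
\]
for every pair $(j,k)$, since the event places $E_j$ in an interval of length $2\varepsilon$. A union bound over the at most $|\BC^{(n)}(\Bu)|\cdot|\BC^{(n)}(\Bu')|$ pairs of eigenvalues, integration of the conditioning, and the elementary bound $|A|\leq|\CJ|\max_i|\varPi_i\BC^{(n)}(\Bu)|$ deliver the announced estimate.

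The principal (though standard) technical point is the monotonicity hypothesis of Stollmann's lemma, which is exactly what the rigid shift identity produces, with slope $|\CJ|\geq 1$ (nonempty by definition of pre-separability). The only mildly subtle piece of bookkeeping is that a single site $x\in A$ may be occupied by several indices $j\in\CJ$ at the same configuration $\Bx$; but this only \emph{increases} the slope of the shift and so does not impair monotonicity.
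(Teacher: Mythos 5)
Your overall strategy --- condition on the variables outside the exposed set, set up a diagonally monotone one-parameter family via the rigid shift, and invoke Stollmann's lemma followed by a union bound over eigenvalue pairs --- is exactly the paper's strategy. But there is a small quantitative gap: you take your parameter set to be $A=\bigcup_{j\in\CJ}\varPi_j\BC^{(n)}(\Bu)$, and Stollmann's lemma then delivers a factor $|A|$, which you bound by $|\CJ|\max_i|\varPi_i\BC^{(n)}(\Bu)|$. That is an extra factor of $|\CJ|$ (which can be as large as $n$) relative to the announced estimate, so your final line --- that $|A|\leq|\CJ|\max_i|\varPi_i\BC^{(n)}(\Bu)|$ ``delivers the announced estimate'' --- is not correct as written.

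The paper avoids this by choosing a \emph{single} index $i\in\CJ$ and taking the parameter set to be $J=\varPi_i\BC^{(n)}(\Bu)$ alone. Monotonicity still holds with slope at least $1$: for any configuration $\Bx\in\BC^{(n)}(\Bu)$ the coordinate $x_i$ always lies in $J=C^{(1)}_{L_i}(u_i)$, so shifting $\omega_y\mapsto\omega_y+t$ for $y\in J$ increases the full potential by at least $t$ (possibly more if some other $x_j$, $j\in\CJ$, also lands in $J$; the $j\notin\CJ$ coordinates contribute nothing by pre-separability). Stollmann's lemma then gives the factor $|J|=|\varPi_i\BC^{(n)}(\Bu)|\leq\max_i\max_{\Bu,\Bu'}\{|\varPi_i\BC^{(n)}(\Bu)|,|\varPi_i\BC^{(n)}(\Bu')|\}$, matching the theorem. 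So the fix to your argument is simply to shrink $A$ from the union over $\CJ$ to a single one-particle projection $\varPi_i\BC^{(n)}(\Bu)$ with $i\in\CJ$; everything else in your proposal goes through unchanged. (For what it's worth, the extra $|\CJ|\leq n$ would in fact be harmless in the way Theorem \ref{thm:Wegner} is used in Corollary \ref{cor:Wegner}, since it would be swallowed by the constant, but it does mean your argument proves a strictly weaker statement than the one stated.)
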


\begin{proof} See the Appendix (Section \ref{sec:appendix}).
\end{proof}

\begin{corollary}\label{cor:Wegner}
Assume that the random potential satisfies assumption $\condP$, then for any $L\geq L_0$ and any pair of separable cubes $\BC_L^{(n)}(\Bx)$ and $\BC_L^{(n)}(\By)$,
\begin{equation}\label{eq:WS2.n}
\DP\left\{\exists\,E\in\DR,\text{ neither $\BC_L^{(n)}(\Bx)$ nor $\BC_L^{(n)}(\By)$ is $E$-CNR}\right\}<L^{-4^Np}.
\end{equation}
\end{corollary}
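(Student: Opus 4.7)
The strategy is to reduce the assertion to the Wegner-type estimate of Theorem \ref{thm:Wegner} via a union bound over the candidate resonant sub-cubes, and then convert the resulting spectral spacing probability into a power of $L$ using the log-H\"older hypothesis $\condP$. First, I unpack Definition \ref{def:E-CNR}: if $\BC_L^{(n)}(\Bx)$ fails to be $E$-CNR it contains an $E$-resonant sub-cube $\BC_{L_1}^{(n)}(\Bx_1)$ with $L_1 \geq L^{1/\alpha}$ (allowing $\BC_L^{(n)}(\Bx)$ itself); similarly $\BC_L^{(n)}(\By)$ contains an $E$-resonant sub-cube $\BC_{L_2}^{(n)}(\By_1)$ with $L_2 \geq L^{1/\alpha}$. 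By the triangle inequality and the definition of $E$-resonance,
\[
\dist\Bigl[\sigma\bigl(\BH^{(n)}_{\BC_{L_1}^{(n)}(\Bx_1)}\bigr),\, \sigma\bigl(\BH^{(n)}_{\BC_{L_2}^{(n)}(\By_1)}\bigr)\Bigr] \leq \ee^{-L_1^{\beta}} + \ee^{-L_2^{\beta}} \leq 2\ee^{-L^{\beta/\alpha}}.
\]
Hence the event in \eqref{eq:WS2.n} is contained in the union, over all such pairs of sub-cubes, of the events that the two sub-spectra are within $2\ee^{-L^{\beta/\alpha}}$ of each other.

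To apply Theorem \ref{thm:Wegner} to each pair I use that pre-separability passes to sub-rectangles: since the big cubes are separable, some $\CJ \subset \{1,\dots,n\}$ witnesses their pre-separability, and the inclusions $\varPi_j \BC_{L_1}^{(n)}(\Bx_1) \subset \varPi_j \BC_L^{(n)}(\Bx)$ together with $\varPi \BC_{L_2}^{(n)}(\By_1) \subset \varPi \BC_L^{(n)}(\By)$ imply that the very same $\CJ$ witnesses pre-separability of every pair of enclosed sub-rectangles. Theorem \ref{thm:Wegner} with $\varepsilon = 2\ee^{-L^{\beta/\alpha}}$ then yields, for each fixed pair,
\[
\DP\Bigl\{\dist\bigl[\sigma(\BH^{(n)}_{\BC_{L_1}^{(n)}(\Bx_1)}),\,\sigma(\BH^{(n)}_{\BC_{L_2}^{(n)}(\By_1)})\bigr] \leq 2\ee^{-L^{\beta/\alpha}}\Bigr\} \leq (3L)^{2nd+d}\, s\bigl(F_V,\,4\ee^{-L^{\beta/\alpha}}\bigr).
\]

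Finally, I count candidates and apply $\condP$. Each big cube contains at most $L\cdot (3L)^{nd}$ sub-cubes (choice of a side-length in $[L^{1/\alpha},L]$ and a center). The log-H\"older bound \eqref{eq:assumption.3prime.L} yields $s(F_V, 4\ee^{-L^{\beta/\alpha}}) \leq C'\, L^{-2A\beta/\alpha} = C'\, L^{-2A/3}$, using $\beta/\alpha = (1/2)/(3/2) = 1/3$. A union bound over all pairs of sub-cubes therefore produces a total probability at most $C''\,L^{\,4nd+d+2-2A/3}$. The hypothesis $A > \tfrac{3}{2}\cdot 4^N p + 9Nd$ gives $2A/3 > 4^N p + 6Nd$, and since $6Nd \geq 4nd+d+2$ for $1\leq n\leq N$, $d\geq 1$ and $N\geq 2$, the exponent of $L$ is strictly below $-4^N p$ once $L_0$ is large enough to absorb the constant $C''$, which is \eqref{eq:WS2.n}. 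The one non-routine step is the inheritance of pre-separability by sub-rectangles described in the second paragraph; once that is in hand, everything else is standard MSA bookkeeping around Theorem \ref{thm:Wegner} and $\condP$.
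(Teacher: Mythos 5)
Your proof is correct and follows the same route as the paper: reduce to the two-rectangle Wegner bound of Theorem~\ref{thm:Wegner} via the triangle inequality on spectral distances, apply $\condP$ at scale $\ee^{-L^{\beta/\alpha}}$, and close with a union bound over centers and side lengths, using $A>\tfrac{3}{2}\cdot 4^Np+9Nd$. The only place you go beyond the paper is in spelling out why pre-separability passes to sub-rectangles via inclusion of projections, a step the paper uses implicitly without comment.
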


\begin{proof}
First, note that for any given pair of cubes 
$\BC^{(n)}_{\ell_1}(\Bu^1)$ and $\BC^{(n)}_{\ell_2}(\Bu^2)$ with $L^{1/\alpha}\leq \ell_1,\ell_2\leq L$,
one has
$$
\begin{aligned}
&\prob{\text{$\exists E\in\DR$: $\BC^{(n)}_{\ell_1}(\Bu^1)$ and $\BC^{(n)}_{\ell_2}(\Bu^2)$ are $E$-R}}
\\
& \quad= \prob{\text{$\exists E\in\DR:\dist[E,\sigma(\BH^{(n)}_{\BC_{\ell_j}(\Bu^j)})]<e^{-\ell_j^{\beta}}, j=1,2$}}
\\
& \quad \le \prob{
\dist[\sigma(\BH^{(n)}_{\BC_{\ell_1}(\Bu^1)}),\sigma(\BH^{(n)}_{\BC_{\ell_2}(\Bu^2)})]
   <2\ee^{-L^{\beta/\alpha}}}
\\
& \quad \leq (3L)^{(2n+1)d} \cdot \mathrm{Const} \, L^{-A/\alpha},
\end{aligned}
$$
by Theorem \ref{thm:Wegner} and assumption $\condP$. Counting the number of possible centers
$\Bu^1,\Bu^2$ in the respective cubes of radius $L$ and the number of possible
values $\ell_1,\ell_2$, we conclude that
\begin{align}
&\prob{\text{$\exists E\in \DR:$ neither $\BC^{(n)}_{L}(\Bx)$ nor $\BC^{(n)}_{L}(\By)$ is $E$-CNR}} \notag
\\
&= \DP\{\text{ $\exists E\in\DR$, $\exists \Bu^1\in\BC^{(n)}_{L}(\Bx)$, $\exists\Bu^2\in\BC^{(n)}_{L}(\By)$, $\exists \ell_1,\ell_2$ with $L^{1/\alpha}\leq \ell_1,\ell_2\leq L$}:\notag\\
&\qquad\qquad \text{$\BC^{(n)}_{\ell_1}(\Bu^1)$ and $\BC^{(n)}_{\ell_2}(\Bu^2)$ are $E$-R}\}\notag
\\
&\leq (3L)^{2nd} \cdot L^2 \cdot (3L)^{(2n+1)d}\cdot \mathrm{Const} \, L^{-A/\alpha}\notag
\le L^{-4^Np},
\end{align}
since $A>4^Np\alpha+ 9Nd$.
\end{proof}
\subsection{Initial scale estimates }\label{sec:initial.MSA.bounds}
We first recall two results

\begin{lemma}[Combes--Thomas estimate; cf. \cite{K08}*{Theorem 11.2}]\label{lem:CT}
Consider a lattice  Schr\"o\-din\-ger operator
\[
H_\Lambda = -\Delta_\Lambda + W(x)
\]
acting in $\ell^2(\Lambda)$, $\Lambda\subset\DZ^{\nu}$, $\nu\geq 1$, with an arbitrary\footnote{This includes the cases of single- and multi-particle operators, that differ only by their potentials.} potential $W\colon\Lambda\to\DR$.
Suppose that $E\in\DR$ is such that\footnote{Theorem 11.2 from \cite{K08} is formulated with the equality  $\dist(E,\sigma(H_{\Lambda})) = \eta$, but it is clear from the proof that it remains valid if  $\dist(E,\sigma(H_{\Lambda})) \ge \eta$.} $\dist(E,\sigma(H_{\Lambda})) \ge \eta$ with $\eta \in(0,1]$.
Then 
\begin{equation}\label{eq:CT}
\forall\; x,y\in\Lambda\qquad
\left|\left(H_{\Lambda} -E\right)^{-1}(x,y)\right| \le 2 \eta^{-1}\,\ee^{-\frac{\eta}{12\nu}|x-y|}.
\end{equation}
\end{lemma}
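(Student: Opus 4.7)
The plan is to prove this via the classical Combes--Thomas exponential-dilation argument. Fix $x_0,y_0\in\Lambda$ with $x_0\ne y_0$ and let $j\in\{1,\ldots,\nu\}$ be a coordinate realizing the max-norm, i.e. $|x_0^{(j)}-y_0^{(j)}|=|x_0-y_0|$. I would introduce the multiplication operator
\begin{equation*}
(M_a\psi)(x)=\ee^{a\cdot x}\psi(x),\qquad M_a^{-1}=M_{-a},
\end{equation*}
with $a=t\,\operatorname{sgn}(x_0^{(j)}-y_0^{(j)})\,e_j\in\DR^{\nu}$ and a parameter $t\in(0,1]$ to be tuned; both $M_a,M_{-a}$ are bounded since $\Lambda$ is fixed (the argument extends to infinite $\Lambda$ by working on the finite subspace spanned by $\delta_{x_0},\delta_{y_0}$ together with the uniform estimates below).

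Next I would consider the conjugated operator $H_{\Lambda}(a):=M_a H_{\Lambda} M_a^{-1}$. Since $W$ is a multiplication operator it commutes with $M_a$, so the perturbation is confined to the kinetic part:
\begin{equation*}
D(a):=H_{\Lambda}(a)-H_{\Lambda}=-M_a\Delta_{\Lambda}M_a^{-1}+\Delta_{\Lambda},
\qquad D(a)(x,y)=\bigl(1-\ee^{a\cdot(x-y)}\bigr)\Delta_{\Lambda}(x,y).
\end{equation*}
These matrix elements vanish unless $|x-y|_1=1$, in which case $|a\cdot(x-y)|\le t$. Using the elementary bound $\ee^{t}-1\le 3t$ on $[0,1]$ and Schur's test with the fact that every row and column of $\Delta_{\Lambda}$ has at most $2\nu$ nonzero entries, I would obtain the operator-norm estimate $\|D(a)\|\le 2\nu(\ee^{t}-1)\le 6\nu t$. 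Choosing $t=\eta/(12\nu)\in(0,1]$ gives $\|D(a)\|\le\eta/2$.

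Because $\dist(E,\sigma(H_{\Lambda}))\ge\eta$ by hypothesis, we have $\|(H_{\Lambda}-E)^{-1}\|\le 1/\eta$, and the Neumann series applied to $H_{\Lambda}(a)-E=(H_{\Lambda}-E)+D(a)$ shows that this operator is invertible with
\begin{equation*}
\bigl\|(H_{\Lambda}(a)-E)^{-1}\bigr\|\le \frac{1/\eta}{1-(1/\eta)\|D(a)\|}\le \frac{2}{\eta}.
\end{equation*}
Finally I would extract matrix elements using $(H_{\Lambda}(a)-E)^{-1}=M_a(H_{\Lambda}-E)^{-1}M_{-a}$, whose $(x_0,y_0)$-entry equals $\ee^{a\cdot(x_0-y_0)}(H_{\Lambda}-E)^{-1}(x_0,y_0)$. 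Since by the choice of $a$ we have $a\cdot(x_0-y_0)=t\,|x_0-y_0|$, rearranging yields
\begin{equation*}
\bigl|(H_{\Lambda}-E)^{-1}(x_0,y_0)\bigr|\le \frac{2}{\eta}\,\ee^{-\frac{\eta}{12\nu}|x_0-y_0|},
\end{equation*}
which is precisely \eqref{eq:CT}.

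There is no real conceptual obstacle here; the argument is classical and does not depend on whether $H_{\Lambda}$ is a single- or multi-particle Hamiltonian, because the commutation $[M_a,W]=0$ holds for any multiplication potential. The only care required is bookkeeping the constants — the elementary inequality on $\ee^{t}-1$, the factor $2\nu$ from the Schur test for the lattice Laplacian, and the choice $t=\eta/(12\nu)$ producing both the $\eta/2$ budget for the Neumann series and the exponential rate $\eta/(12\nu)$ in the final bound.
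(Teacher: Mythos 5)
Your argument is correct and is precisely the standard Combes--Thomas exponential-conjugation proof that underlies Theorem 11.2 of \cite{K08}, which the paper cites without reproducing a proof; the constants ($\|D(a)\|\le 2\nu(\ee^t-1)\le 6\nu t$, the choice $t=\eta/(12\nu)$, the Neumann-series bound $2/\eta$) all check out. One small imprecision: the parenthetical remark that the infinite-$\Lambda$ case follows by ``working on the finite subspace spanned by $\delta_{x_0},\delta_{y_0}$'' is not a valid reduction, since the resolvent mixes all of $\ell^2(\Lambda)$; however this is moot for the present paper, where $\Lambda$ is always a finite cube and $M_{\pm a}$ are genuinely bounded.
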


\begin{lemma} \label{lem:low.energy.gap.prob}
Let $H^{(1)}(\omega)=-\Delta+V(x,\omega)$ be a random single-particle lattice Schr\"o\-dinger operator in $\ell^2(\DZ^d)$. Assume that the random variables $V(x,\omega)$ are i.i.d.  and nonnegative. Then, for any $C>0$ there exist  arbitrary large $L_0(C)>0$ and $C_1,c>0$ such that for any cube $C^{(1)}_{L_0}(u)$, the lowest eigenvalue $E_0^{(1)}(\omega)$ of  $H_{C^{(1)}_{L_0}(u)}^{(1)}(\omega)$  satisfies
\begin{equation}\label{eq:initial.scale}
\DP\bigl\{E_0^{(1)}(\omega)\leq 2 CL_0^{-1/2}\bigr\}\leq C_1L_0^d \ee^{-cL_0^{d/4}}.
\end{equation}
\end{lemma}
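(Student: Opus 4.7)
The plan is to establish \eqref{eq:initial.scale} as a discrete Lifshitz--tail estimate, combining a deterministic Faber--Krahn input with a large-deviation bound for the i.i.d.\ potential.

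Setting $\ell := \lfloor L_0^{1/4}\rfloor$, I would first partition $C^{(1)}_{L_0}(u)$ into $M\le L_0^{3d/4}$ disjoint sub-cubes $Q_1,\dots,Q_M$ of side $\ell$, and then apply the Dirichlet--Neumann bracketing for the discrete Laplacian to obtain
$$
E_0^{(1)}(\omega)\ \ge\ \min_{1\le i\le M} E_0\bigl(H^{(1),\neu}_{Q_i}(\omega)\bigr),
$$
where $H^{(1),\neu}_{Q_i}$ denotes the restriction of $-\Delta+V$ to $\ell^2(Q_i)$ with the discrete Neumann condition on $\partial Q_i$. This reduces the task to controlling, for a single sub-cube, the probability that its Neumann ground-state energy is $\le 2CL_0^{-1/2}$.

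The deterministic input is a Faber--Krahn-type inequality. Fixing $v_0 := 2CL_0^{-1/2}$ and the (random) set $B_i(\omega):=\{x\in Q_i : V(x,\omega)\le v_0\}$, I would split a normalized trial $\psi\in\ell^2(Q_i)$ as $\psi\Bone_{B_i}+\psi\Bone_{B_i^{\comp}}$, bound the potential contribution from below by $v_0\,\|\psi\Bone_{B_i^{\comp}}\|^2$, and bound the kinetic contribution from below by $c_d |B_i|^{-2/d}\,\|\psi\Bone_{B_i}\|^2$ using the discrete Faber--Krahn inequality $\lambda_1^{\dir}(B_i)\ge c_d|B_i|^{-2/d}$. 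Optimizing the split between kinetic and potential parts then yields
$$
E_0\bigl(H^{(1),\neu}_{Q_i}(\omega)\bigr)\ \ge\ \tfrac12\,\min\!\bigl(v_0,\ c_d|B_i|^{-2/d}\bigr),
$$
which shows that $E_0(H^{(1),\neu}_{Q_i})\le 2CL_0^{-1/2}$ can occur only if $|B_i|\ge \kappa\,\ell^d$ for some $\kappa=\kappa(C,d)\in (0,1)$.

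Since the $V(x,\omega)$ are i.i.d., $|B_i|$ has binomial distribution with parameters $(\ell^d,p)$, where $p = F_V(v_0)$; and by assumption $\condP$ we have $p\le C/|\log v_0|^{2A}\le C'/(\log L_0)^{2A}\to 0$ as $L_0\to\infty$. A standard Chernoff/entropy estimate then gives
$$
\DP\{|B_i|\ge \kappa\,\ell^d\}\ \le\ \exp\!\bigl(-\ell^d\,D(\kappa\,\|\,p)\bigr)\ \le\ \ee^{-c L_0^{d/4}}
$$
for any prescribed $c>0$, provided $L_0$ is chosen sufficiently large (depending on $c$ and $C$). A union bound over the $M\le L_0^{3d/4}\le L_0^d$ sub-cubes then delivers \eqref{eq:initial.scale}.

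The main obstacle is the deterministic Faber--Krahn step: in the discrete lattice with Neumann boundary on $Q_i$, the cut-off argument produces cross-boundary contributions between $B_i$ and $B_i^{\comp}$ that must be handled carefully and absorbed into the constants, in particular into the precise value of $\kappa$. The remainder of the argument consists of a routine binomial large-deviation estimate combined with the log-H\"older regularity of $F_V$ provided by assumption $\condP$.
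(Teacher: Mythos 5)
The paper's own proof of this lemma is a citation: it points to Theorem 11.4 and equation (11.23) in Kirsch's lecture notes together with a large-deviation estimate (Kirsch's Lemma 6.4, Stollmann's Theorem 2.1.3). That argument runs Dirichlet--Neumann bracketing into sub-cubes of side roughly $L_0^{1/4}$, then bounds the Neumann ground-state energy of each sub-cube from below via \emph{Temple's inequality} applied to the constant trial function, reducing the event $\{E_0\le 2CL_0^{-1/2}\}$ to a deviation of the empirical average $\bar V$ of the potential over a sub-cube, which is then controlled by a large-deviation bound. Your overall plan (bracketing, then a single-sub-cube deterministic input, then a binomial tail bound) matches this template, and the bracketing step, the choice $\ell\sim L_0^{1/4}$, and the Chernoff estimate with $p=F_V(v_0)$ made small by $\condP$ are all fine. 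You are also right that the lemma as literally stated is too weak -- for $V\equiv 0$ the bound is false for every $L_0$ -- so some non-degeneracy such as $\condP$ is indeed required and must be invoked, as you do.

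The gap is in your deterministic step, which replaces Temple by a Faber--Krahn bound. You assert that for the Neumann ground state $\psi$ one has
\begin{equation*}
\bigl\langle \psi,\,(-\Delta^{\neu}_{Q_i})\,\psi\bigr\rangle \ \ge\ c_d\,|B_i|^{-2/d}\,\|\psi\,\Bone_{B_i}\|^2 ,
\end{equation*}
invoking $\lambda_1^{\dir}(B_i)\ge c_d|B_i|^{-2/d}$. This is false as stated: the constant function $\psi\equiv |Q_i|^{-1/2}$ has zero Neumann kinetic energy, while the right-hand side is strictly positive whenever $B_i\neq\emptyset$. The Neumann form has a kernel, so no inequality of this shape can hold without first subtracting the mean or performing an IMS/cut-off localization, and those produce boundary terms that are \emph{not} merely ``absorbed into the constants'': with $\ell\sim L_0^{1/4}$ and $v_0\sim L_0^{-1/2}$ one has $\ell^2 v_0=O(1)$, so the Poincar\'e control $\|\psi-\bar\psi\|^2\lesssim \ell^2\langle\psi,(-\Delta^{\neu})\psi\rangle\lesssim \ell^2 v_0$ is only $O(1)$ and does not force $\psi$ to be nearly constant. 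In other words, at the exact exponents needed here the Faber--Krahn route is in the borderline regime, and the central implication ``$E_0(H^{\neu}_{Q_i})\le 2CL_0^{-1/2}\Rightarrow |B_i|\ge\kappa\ell^d$'' is not established by the argument you give. Temple's inequality, which is what the paper's cited reference uses, sidesteps this: it requires no lower bound on the kinetic energy in terms of $|B_i|$ but only the Neumann spectral gap $\lambda_2^{\neu}(Q_i)\gtrsim\ell^{-2}$ and a variance bound for $V$, and it delivers $E_0\ge \bar V - C\ell^2\,\mathrm{Var}(V)$ at exactly this scale. As written, your proposal therefore has a genuine gap in the deterministic input; the probabilistic part is sound.
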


\begin{proof}
See equation (11.23) from the proof of Theorem 11.4 in \cite{K08}. The absolute continuity of the distribution of the random variables is not required for this result, so it applies to our model. Lemma \ref{lem:low.energy.gap.prob} follows actually from the study of Lifshitz tails and is based on a large deviation estimate valid for i.i.d. processes see \cite{K08}*{Lemma 6.4} and \cite{St01}*{Theorem 2.1.3}. 
\end{proof}

In our earlier work \cite{E11}, we have already inferred the initial scale MSA estimate for the two-particle lattice Anderson model from  the results given by Lemma \ref{lem:CT} and Lemma \ref{lem:low.energy.gap.prob}. The next statement shows how it can be extended to an arbitrary number of particles.

\begin{lemma} \label{lem:initial.scale.V.positive}
Let $\BH^{(n)}(\omega)=-\BDelta+V(x_1,\omega)+\dots+V(x_n,\omega)+\BU(\Bx)$ be an $n$-particle random Schr\"odinger operator in $\ell^2(\DZ^{nd})$ where  $\BU$ and $V$ satisfy $\condI$ and $\condP$ respectively.
Then, for any $C>0$ there exist arbitrary large $L_0(C)>0$ and $C_1,c>0$ such that for any cube $\BC^{(n)}_{L_0}(\Bu)$ the lowest eigenvalue $E_0^{(n)}(\omega)$ of   $\BH_{\BC_{L_0}^{(n)}(\Bu)}^{(n)}(\omega)$ satisfies
\begin{equation}\label{eq:initial.bound}
\DP\bigl\{E_0^{(n)}(\omega)\leq 2CL_0^{-1/2}\bigr\}\leq C_1L_0^d\ee^{-cL_0^{1/4}}.
\end{equation}
\end{lemma}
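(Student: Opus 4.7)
The plan is to combine the two non-negativity hypotheses of the model — namely $\BU \geq 0$ from $\condI$ and $\supp\mu \subset [0,+\infty)$ from $\condP$ — with the product structure of the cube $\BC^{(n)}_{L_0}(\Bu) = \prod_{j=1}^n C^{(1)}_{L_0}(u_j)$ in order to reduce the $n$-particle ground-state estimate to the single-particle bound of Lemma \ref{lem:low.energy.gap.prob}.

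First, since $\BU$ acts as a non-negative multiplication operator one has the operator inequality
$$
\BH^{(n)}_{\BC^{(n)}_{L_0}(\Bu)}(\omega) \;\geq\; \Bigl(-\BDelta + \sum_{j=1}^n V(x_j,\omega)\Bigr)\Big|_{\BC^{(n)}_{L_0}(\Bu)},
$$
so it suffices to bound the lowest eigenvalue of the right-hand side from below. I would then use the tensor-product identification $\ell^2(\BC^{(n)}_{L_0}(\Bu)) \cong \bigotimes_{j=1}^n \ell^2(C^{(1)}_{L_0}(u_j))$, together with the fact that under the $|\cdot|_1$-norm employed in \eqref{eq:def.Delta} the $nd$-dimensional Laplacian decomposes as $-\BDelta = \sum_{j=1}^n (-\Delta_{x_j})$, to rewrite the restricted non-interacting Hamiltonian as
$$
\sum_{j=1}^n I \otimes \cdots \otimes H^{(1)}_{C^{(1)}_{L_0}(u_j)}(\omega) \otimes \cdots \otimes I,
$$
a sum of pairwise commuting summands whose ground-state energy is therefore exactly the sum of the single-particle ground-state energies on the factor cubes.

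Because $V(\cdot,\omega) \geq 0$ almost surely and $-\Delta \geq 0$, each summand $H^{(1)}_{C^{(1)}_{L_0}(u_j)}(\omega)$ is a non-negative operator, so each of its lowest eigenvalues is non-negative. Dropping all but one term (say the one corresponding to $u_1$) then yields
$$
E_0^{(n)}(\omega) \;\geq\; E_0^{(1)}\bigl(H^{(1)}_{C^{(1)}_{L_0}(u_1)}(\omega)\bigr),
$$
and consequently
$$
\bigl\{E_0^{(n)}(\omega) \leq 2CL_0^{-1/2}\bigr\} \;\subset\; \bigl\{E_0^{(1)}\bigl(H^{(1)}_{C^{(1)}_{L_0}(u_1)}(\omega)\bigr) \leq 2CL_0^{-1/2}\bigr\}.
$$
Applying Lemma \ref{lem:low.energy.gap.prob} (with the same $C$) to the right-hand event produces a bound of the form $C_1 L_0^d \ee^{-cL_0^{d/4}}$, which is dominated by $C_1 L_0^d \ee^{-cL_0^{1/4}}$ for every $d \geq 1$, i.e., exactly the estimate \eqref{eq:initial.bound}.

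The one subtlety — rather than a genuine obstacle — lies in verifying that imposing the simple (Dirichlet-type) boundary conditions on the product cube $\BC^{(n)}_{L_0}(\Bu)$ is compatible with the tensor-product decomposition above, so that the ground-state energy of the restricted non-interacting operator genuinely factors as a sum of single-particle ground-state energies on the factors $C^{(1)}_{L_0}(u_j)$. This is standard for Dirichlet extensions on product domains but is the only place where the fact that $\BC^{(n)}_{L_0}(\Bu)$ is a product of single-particle cubes of equal side lengths (and not a more general rectangle) actually enters the argument. Everything else is a direct transcription, via monotonicity in $\BU$ and $V$, of the single-particle Lifshitz-tail estimate of Lemma \ref{lem:low.energy.gap.prob} to the $n$-particle setting.
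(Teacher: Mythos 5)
Your proposal is correct and follows essentially the same route as the paper: drop the non-negative interaction $\BU$ by the min-max principle, decompose the non-interacting restricted Hamiltonian as a tensor sum of single-particle operators on the factor cubes $C^{(1)}_{L_0}(u_j)$, observe that each single-particle ground state energy is non-negative so that $E_0^{(n)}(\omega)\geq E^{(1)}_{0}(\omega)$, and invoke Lemma \ref{lem:low.energy.gap.prob}. Your closing remark on the $L_0^{d/4}$ versus $L_0^{1/4}$ exponent is a small but useful clarification that the paper leaves implicit.
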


\begin{proof}
Since the interaction potential $\BU$ is nonnegative,
it follows from the min-max principle
that the lowest eigenvalue $E_0^{(n)}(\omega)$ of $\BH_{\BC_{L_0}^{(n)}}^{(n)}(\omega)$ is bounded from below
by the lowest eigenvalue $\widetilde E_0^{(n)}(\omega)$ of the operator
\[
\widetilde\BH^{(n)}_{\BC^{(n)}_{L_0}(\Bu)}(\omega)=-\BDelta+V(x_1,\omega)+\dots+V(x_n,\omega).
\]
Denote $\CH_i=\ell^2(C^{(1)}_{L_0}(u_i))$. The latter operator can be rewritten as follows:
\begin{equation}
\widetilde\BH^{(n)}_{\BC^{(n)}_{L_0}(\Bu)}(\omega)=\\
\sum_{j=1}^n\underbrace{\Bone_{\CH_1}\otimes\dots\otimes\Bone_{\CH_{j-1}}}_{j-1\text{ times}}\otimes H_j^{(1)}\otimes\underbrace{\Bone_{\CH_{j+1}}\otimes\dots\otimes\Bone_{\CH_n}}_{n-j\text{ times}}
\end{equation}
where $H_j^{(1)}(\omega)=-\Delta^{(j)}_{C^{(1)}_{L_0}(u_j)}+V(x_j,\omega)$, $j=1,\dots,n$ acting in $\CH_j$. Hence the lowest eigenvalue $\widetilde E_0^{(n)}(\omega)$ of $\widetilde\BH^{(n)}_{\BC^{(n)}_{L_0}(\Bu)}(\omega)$ has the following form:
\[
\widetilde E_0^{(n)}(\omega)=\sum_{j=1}^nE_{0,j}^{(1)}(\omega)
\]
where $E_{0,j}^{(1)}$ is the lowest eigenvalue of $H_j^{(1)}$. All eigenvalues $E_{0,j}^{(1)}$ are nonnegative due to the nonnegativity of the operators $H_j^{(1)}$ with nonnegative external potential, so that for any $s\geq 0$,
\[
\DP\left\{\widetilde E_0^{(n)}(\omega)\leq s\right\}\leq\DP\left\{E_{0,1}^{(1)}(\omega)\leq s\right\}.
\]
Therefore, Lemma \ref{lem:initial.scale.V.positive} follows from Lemma \ref{lem:low.energy.gap.prob} applied to the single-particle Schr\"odinger operator $H_1^{(1)}$.
\end{proof}

\begin{remark}
It is to be emphasized that for the initial scale bound, the amplitude of the interaction
potential is irrelevant,  provided that it is nonnegative. This amplitude is not used
in the scale induction, either. In other words, a nonnegative interaction enhances the
``Lifshitz tails'' phenomenon.
\end{remark}

\begin{theorem}[initial scale estimate]\label{thm:initial.estimate}
Under assumptions $\condI$ and $\condP$, for any $p>0$, there exists arbitrarily large $L_0(N,d,p)$ such that if  $m:=(14N^N+6Nd)L_0^{-1/2}$ and $E^*:=(12Nd)( 2^{N+1}m)$, then $\dsn{0,n}$ is valid for any $n=1,\ldots,N$. 
\end{theorem}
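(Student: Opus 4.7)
The plan is to combine the Lifshitz-tail gap estimate of Lemma \ref{lem:initial.scale.V.positive} with the Combes--Thomas bound of Lemma \ref{lem:CT}. Both $m$ and $E^*$ in the statement are of order $L_0^{-1/2}$, whereas Lemma \ref{lem:initial.scale.V.positive} produces a spectral gap of the same order above zero with probability at least $1-C_1 L_0^d\ee^{-cL_0^{1/4}}$. By choosing the constant $C$ of that lemma sufficiently large in terms of $N,d$, one arranges that on this ``good'' event the interval $I=(-\infty,E^*]$ lies entirely in the resolvent set of $\BH^{(n)}_{\BC^{(n)}_{L_0}(\Bu)}(\omega)$ at distance $\eta\geq E^*/2$ from the spectrum. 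Separability therefore plays no role at the initial scale: a single good cube will be $(E,m)$-NS for \emph{every} $E\in I$ at once, so two cubes cannot both be $(E,m)$-S for a common $E\in I$ unless at least one of them is bad.

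Quantitatively, on the good event any $E\leq E^*$ satisfies $\dist(E,\sigma(\BH^{(n)}_{\BC^{(n)}_{L_0}(\Bu)}))\geq\eta$, so Lemma \ref{lem:CT} applied with $\nu=nd$ yields
\[
\max_{\Bv\in\partial^-\BC^{(n)}_{L_0}(\Bu)}\bigl|\BG_{\BC^{(n)}_{L_0}(\Bu)}(\Bu,\Bv;E)\bigr|\leq 2\eta^{-1}\ee^{-\eta L_0/(12nd)}.
\]
With $C$ chosen so that $2CL_0^{-1/2}\geq 3E^*/2$, one has
\[
\eta/(12nd)\;\geq\;2^{N}m\;\geq\;m(1+L_0^{-1/8})^{N-n+1}=\gamma(m,L_0,n)
\]
once $L_0$ is large; the $2\eta^{-1}$ prefactor only contributes an $O(L_0^{-1}\log L_0)$ loss, which is dominated by the margin $\eta/(12nd)-\gamma\gtrsim L_0^{-1/2}$. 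Hence on the good event $\BC^{(n)}_{L_0}(\Bu)$ is $(E,m)$-NS for every $E\in I$.

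It then suffices to bound the complementary event. Lemma \ref{lem:initial.scale.V.positive} together with a union bound over the two cubes gives
\[
\DP\bigl\{\exists\,E\in I:\BC^{(n)}_{L_0}(\Bu)\text{ and }\BC^{(n)}_{L_0}(\Bv)\text{ are }(E,m)\text{-S}\bigr\}\leq 2C_1 L_0^d\ee^{-cL_0^{1/4}},
\]
and since $\ee^{-cL_0^{1/4}}$ decays faster than any inverse power of $L_0$, one can choose $L_0=L_0(N,d,p)$ arbitrarily large so that the right-hand side is $\leq L_0^{-2p\cdot 4^{N-n}}$ simultaneously for all $n=1,\ldots,N$, which is precisely $\dsonn$. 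The main technical obstacle is the bookkeeping of the numerical constants linking $m$, $E^*$, $\eta$ and $\gamma(m,L_0,n)$: the explicit values $m=(14N^N+6Nd)L_0^{-1/2}$ and $E^*=12Nd\cdot 2^{N+1}m$ are tuned precisely to give enough room for Combes--Thomas to beat $\gamma$ uniformly in $n\in\{1,\ldots,N\}$, while still being small enough to be compatible with the gap produced by Lemma \ref{lem:initial.scale.V.positive}.
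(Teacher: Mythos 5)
Your argument is correct and follows essentially the same route as the paper: obtain a lower bound on the ground-state energy of $\BH^{(n)}_{\BC^{(n)}_{L_0}(\Bu)}$ with overwhelming probability via Lemma \ref{lem:initial.scale.V.positive}, then on that event apply the Combes--Thomas bound of Lemma \ref{lem:CT} to conclude that the cube is $(E,m)$-NS for all $E\leq E^*$ simultaneously, and finally absorb the superpolynomially small failure probability into $L_0^{-2p\,4^{N-n}}$. The only cosmetic differences are that the paper works with the full gap $\eta=E^*$ (so the condition $2CL_0^{-1/2}\geq 3E^*/2$ you impose is automatic since $E^*=CL_0^{-1/2}$) and dispenses with the union bound, noting that two cubes being simultaneously $(E,m)$-S already implies either one is.
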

\begin{proof}
Set $C=12Nd\cdot2^{N+1}\left[14N^N+6Nd\right]$ and let $\BC^{(n)}_{L_0}(\Bu)$ be a cube in $\DZ^{nd}$. Consider  $\omega\in\Omega$ such that the first eigenvalue $E_0^{(n)}(\omega)$ of $\BH^{(n)}_{\BC^{(n)}_{L_0}(\Bu)}(\omega)$ satisfies $E_0^{(n)}(\omega)>2CL_0^{-1/2}$. Then for all $E\leq CL_0^{-1/2}=E^*$ we have
\[
\dist(E,\sigma(\BH^{(n)}_{\BC^{(n)}_{L_0}(\Bu)}(\omega)))=E_0^{(n)}(\omega) -E >CL_0^{-1/2}=: \eta.
\]
\noindent
For $L_0$ large enough $\eta\leq 1$ and the Combes-Thomas estimate (Lemma \ref{lem:CT}) implies that
 for any $\Bv\in\partial^-\BC^{(n)}_{L_0}(\Bu)$:
\[
|\BG_{\BC^{(n)}_{L_0}(\Bu)}(E,\Bu,\Bv)|\leq 2 \eta^{-1}\, \ee^{-\frac{CL_0^{-1/2}}{12Nd}L_0}\\.
\]
Now observe that $\frac{CL_0^{-1/2}}{12 Nd}=2^{N+1}m$. Thus
\begin{align*}
|\BG_{\BC^{(n)}_{L_0}(\Bu)}(E,\Bu,\Bv)|&\leq 2 C^{-1}L_0^{1/2}\,  \ee^{-2^{N+1} m L_0}\\
&\leq  2 C^{-1}L_0^{1/2} \ee^{-2\gamma(m,L_0,n)L_0}
\\
&\leq \ee^{-\gamma(m,L_0,n)L_0},
\end{align*}
for $L_0$ large enough, since
\[
\gamma(m,L,n)=m(1+L^{-1/8})^{N-n+1}< m\times 2^N.
\]
This implies that $\BC^{(n)}_{L_0}(\Bu)$ is $(E,m)$-NS. Thus $\omega\in\{\omega\in\Omega\,\Big|\, \BC^{(n)}_L(\Bu)$ \text{ is $(E,m)$-NS $\forall E\leq E^*$}$\}$. Therefore
\[
\prob{\text{$\exists E\leq E^*$, $\BC^{(n)}_{L_0}(\Bu)$ is $(E,m)$-S}}\leq \prob{E_0^{(n)}(\omega)\leq 2CL_0^{-1/2}},
\]
and by Lemma \ref{lem:initial.scale.V.positive},
\[
\DP\bigl\{E_0^{(n)}(\omega)\leq 2CL_0^{-1/2}\bigr\}\leq C_1L_0^d\ee^{-cL_0^{1/4}}.
\]
Finally, the quantity $C_1 L_0^d\ee^{-cL_0^d/4}$ for $L_0$ sufficiently large is less than $L_0^{-2p\,4^{N-n}}$. This proves the claim since the probability for two cubes to be singular at the same energy is bounded by the probability of either one of them to be singular.
\end{proof}
In the rest of the paper, the parameters $m$ and $E^*$ are as in Theorem \ref{thm:initial.estimate}.
\section{Multi-scale induction } \label{sec:MP.induction}
We follow the general scheme described in \cite{CS09b} but we will modify it here to make it suitable with the study of the low-energy regime. In Section \ref{ssec:FI.S}, we establish some useful geometrical facts valid for any $n\geq 1$. Starting from Section \ref{ssec:PI.cubes}, we perform the induction step $n-1\leadsto n$, using scale induction in $L_k$, assuming all necessary properties of $n'$-particle systems hold with any $1\leq n'\leq n-1$ and in cubes of any size $L_j$, $j\geq 0$. Further, the induction step $L_k\leadsto L_{k+1}$ is carried out separately for three types of pairs of singular $n$-particle cubes (see Sections \ref{ssec:PI.cubes}-\ref{ssec:mixed.S}).

\subsection{Fully and partially interactive cubes}\label{ssec:FI.S}

\begin{definition}[fully/partially interactive]\label{def:diagonal.cubes}
An $n$-particle cube $\BC_L^{(n)}(\Bu)\subset\DZ^{nd}$ is
called fully interactive (FI) if
\begin{equation}\label{eq:def.FI}
\diam \varPi \Bu := \max_{i\ne j} |u_i - u_j| \le n(2L+r_0),
\end{equation}
and partially interactive (PI) otherwise.
\end{definition}

The following simple statement clarifies the notion of PI cube.

\begin{lemma}\label{lem:PI.cubes}
If a cube $\BC_L^{(n)}(\Bu)$ is PI, then there exists a subset $\CJ\subset\left\{1,\dots,n\right\}$ with $1\leq\card\CJ\leq n-1$ such that
\[
\dist\left(\varPi_{\CJ}\BC_L^{(n)}(\Bu),\varPi_{\CJ^{\comp}}\BC_L^{(n)}(\Bu)\right)>r_0.
\]
\end{lemma}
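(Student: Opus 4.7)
The plan is to partition the particle indices via a suitable connectivity graph on the centers. Define a graph $\mathcal{G}$ on the vertex set $\{1,\dots,n\}$ by declaring $i\sim j$ if and only if $|u_i-u_j|\le 2L+r_0$. I will take $\CJ$ to be the vertex set of one connected component of $\mathcal{G}$ and show that this choice satisfies the required separation.

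First I would argue that $\mathcal{G}$ is disconnected, using the PI hypothesis. If $\mathcal{G}$ were connected, then for any two indices $i,j$ there would exist a path $i=i_0,i_1,\dots,i_k=j$ in $\mathcal{G}$ with $k\le n-1$. The triangle inequality would then give
\[
|u_i-u_j|\le \sum_{\ell=0}^{k-1}|u_{i_\ell}-u_{i_{\ell+1}}|\le (n-1)(2L+r_0)<n(2L+r_0),
\]
hence $\diam\varPi\Bu<n(2L+r_0)$, contradicting Definition \ref{def:diagonal.cubes}. Thus $\mathcal{G}$ has at least two connected components, and choosing $\CJ$ as the index set of one component yields $1\le\card\CJ\le n-1$.

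Next I would verify the distance estimate. By construction, for every $i\in\CJ$ and $j\in\CJ^{\comp}$ the edge $\{i,j\}$ is absent in $\mathcal{G}$, so $|u_i-u_j|>2L+r_0$. Given $x\in C^{(1)}_L(u_i)$ and $y\in C^{(1)}_L(u_j)$, we have $|x-u_i|\le L$ and $|y-u_j|\le L$, and the triangle inequality yields
\[
|x-y|\ge |u_i-u_j|-|x-u_i|-|y-u_j|>(2L+r_0)-L-L=r_0.
\]
Taking the infimum over all such $x,y$ (and over $i\in\CJ$, $j\in\CJ^{\comp}$) gives the required $\dist\bigl(\varPi_{\CJ}\BC_L^{(n)}(\Bu),\varPi_{\CJ^{\comp}}\BC_L^{(n)}(\Bu)\bigr)>r_0$.

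No genuine obstacle arises; the only subtle point is the choice of threshold $2L+r_0$ in the definition of $\mathcal{G}$, which is tuned precisely so that (i) non-adjacency of $i$ and $j$ forces the single-particle cubes $C^{(1)}_L(u_i)$ and $C^{(1)}_L(u_j)$ to be more than $r_0$ apart, and (ii) connectivity of $\mathcal{G}$ would force $\diam\varPi\Bu\le (n-1)(2L+r_0)$, which is strictly less than the PI threshold $n(2L+r_0)$. The factor $n$ in the PI definition is exactly what gives the slack needed for the pigeonhole argument on the number of edges in a path.
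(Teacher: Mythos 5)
Your proof is correct and follows essentially the same route as the paper's: the paper phrases the decomposition as a geometric connectivity argument for the union of cubes $C^{(1)}_{R/2}(u_i)$ with $R=2L+r_0$, which is isomorphic to your graph $\mathcal{G}$ since two such cubes meet precisely when $|u_i-u_j|\le 2L+r_0$. Your path-length bound $\diam\varPi\Bu\le (n-1)(2L+r_0)$ is a hair sharper than the paper's $\le n(2L+r_0)$, but both suffice for the contradiction with the PI hypothesis $\diam\varPi\Bu>n(2L+r_0)$, and the final distance estimate is identical.
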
  

\begin{proof}
It is convenient to use the canonical injection $\DZ^d\hookrightarrow\DR^d$; then the notion of connectedness in $\DR^d$ induces its analog for lattice cubes. Set $R:=2L+r_0$ and assume that $\diam\varPi\Bu = \max_{i,j}|u_i - u_j|>nR$.
If the union of cubes $C_{R/2}^{(1)}(u_i)$, $1\leq i\leq n$, were not decomposable into two (or more) disjoint groups, then it would be connected, hence its diameter would be bounded by $n(2(R/2))=nR$, hence $\diam \varPi\Bu\leq nR$ which contradicts the hypothesis. Therefore, there exists an index subset $\CJ\subset\{1,\ldots,n\}$ such that $|u_{j_1}-u_{j_2}|>2(R/2)$ for all $j_1\in\CJ$, $j_2\in\CJ^c$, this implies that
\begin{align*}
\dist\left(\varPi_{\CJ}\BC_L^{(n)}(\Bu),\varPi_{\CJ^{\comp}}\BC_L^{(n)}(\Bu)\right)&=\min_{j_1\in\CJ,j_2\in\CJ^c}\dist\left(C^{(1)}_L(u_{j_1}),C^{(1)}_L(u_{j_2})\right)\\
&\geq \min_{j_1\in\CJ,j_2\in\CJ^{c}}|u_{j_1}-u_{j_2}|-2L>r_0.
\end{align*}
\end{proof}

If $\BC^{(n)}_L(\Bu)$ is a PI cube by the above Lemma, we can write it as 
\begin{equation}\label{eq:cartesian.cubes}
\BC_L^{(n)}(\Bu)=\BC_L^{(n')}(\Bu')\times\BC_L^{(n'')}(\Bu''),
\end{equation}
with 
\begin{equation}\label{eq:cartesian.cubes.2}
\dist\left(\varPi\BC_L^{(n')}(\Bu'),\varPi\BC_L^{(n'')}(\Bu'')\right)>r_0.
\end{equation}
where $\Bu'=\Bu_{\CJ}=(u_j:j\in\CJ)$, $\Bu''=\Bu_{\CJ^{\comp}}=(u_j:j\in\CJ^{\comp})$, $n'=\card \CJ$ and $n''=\card \CJ^{\comp}$. 

Throughout, the decomposition \eqref{eq:cartesian.cubes} will implicitly satisfy \eqref{eq:cartesian.cubes.2}.
Now we turn to geometrical properties of FI cubes.

\begin{lemma}\label{lem:disjointness}
Let $n\geq 1$, $L>2r_0$ and consider two FI cubes $\BC_L^{(n)}(\Bu)$ and $\BC_L^{(n)}(\Bv)$ with $|\Bx-\By|>7\,nL$. Then
\begin{equation}
\varPi\BC_L^{(n)}(\Bu)\cap\varPi\BC_L^{(n)}(\Bv)=\varnothing.
\end{equation}
\end{lemma}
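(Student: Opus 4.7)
The plan is to proceed by contradiction using the triangle inequality together with the diameter bound granted by the fully interactive condition. Note first that (assuming the $\Bx,\By$ in the hypothesis are a typo for $\Bu,\Bv$) since $|\Bu-\Bv|=\max_{1\le i\le n}|u_i-v_i|$, the assumption $|\Bu-\Bv|>7nL$ gives an index $i_0\in\{1,\dots,n\}$ with $|u_{i_0}-v_{i_0}|>7nL$. Suppose for contradiction that $\varPi\BC_L^{(n)}(\Bu)\cap\varPi\BC_L^{(n)}(\Bv)\ne\varnothing$; then there exist indices $i,j\in\{1,\dots,n\}$ and a point $z\in\DZ^d$ with $|z-u_i|\le L$ and $|z-v_j|\le L$, so $|u_i-v_j|\le 2L$.

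Next, I would invoke the FI hypothesis for each of the two cubes separately: by Definition~\ref{def:diagonal.cubes}, $\diam\varPi\Bu\le n(2L+r_0)$ and $\diam\varPi\Bv\le n(2L+r_0)$. In particular $|u_{i_0}-u_i|\le n(2L+r_0)$ and $|v_{i_0}-v_j|\le n(2L+r_0)$. The triangle inequality then yields
\[
|u_{i_0}-v_{i_0}|\le |u_{i_0}-u_i|+|u_i-v_j|+|v_j-v_{i_0}|\le 2n(2L+r_0)+2L=4nL+2nr_0+2L.
\]

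Finally, I would use the hypothesis $L>2r_0$ (so $r_0<L/2$ and hence $2nr_0<nL$) together with $2\le 2n$ (valid since $n\ge 1$) to bound the right-hand side by $4nL+nL+2L\le 5nL+2nL=7nL$. This contradicts $|u_{i_0}-v_{i_0}|>7nL$, so the projections must be disjoint.

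The argument is essentially elementary geometry; the only mildly delicate step is making sure the diameter bound $n(2L+r_0)$ from the FI definition combines cleanly with the distance $2L$ inherited from the single intersecting coordinates to stay strictly below $7nL$, which is where the assumption $L>2r_0$ (controlling the $2nr_0$ term) is used.
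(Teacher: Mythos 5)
Your proof is correct and takes essentially the same approach as the paper: both combine the triangle inequality with the diameter bound $n(2L+r_0)$ coming from the FI definition, using the hypothesis $L>2r_0$ to control the $r_0$ contribution so that the final quantity stays below $7nL$. The only cosmetic difference is that you argue by contradiction (assume the projections meet, derive $|u_{i_0}-v_{i_0}|<7nL$), whereas the paper argues directly (show $|x_i-y_j|>2nL$ for all $i,j$ and hence the single-particle cubes are pairwise disjoint); these are the same inequality read in opposite directions, and you also correctly spot the $\Bx,\By$ versus $\Bu,\Bv$ typo in the statement.
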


\begin{proof}
If for some $R>0$,
\[
R<|\Bx-\By|=\max_{1\leq j\leq n}|x_j-y_j|,
\]
then there exists $1\leq j_0\leq n$ such that $|x_{j_0}-y_{j_0}|>R$. Since both cubes are fully interactive, we can use \eqref{eq:def.FI} for the centers $\Bx,\ \By$ and write:
\begin{align*}
&|x_{j_0}-x_i| \le \diam \varPi \Bx \le n(2L+r_0),
\\
&|y_{j_0}-y_j| \le \diam \varPi \By \le n(2L+r_0).
\end{align*}
By triangle inequality, for any $1\leq i,j\leq n$ and $R>7\,nL >6\,nL+2\,nr_0$, we have
\begin{align*}
|x_i-y_j|
&\geq |x_{j_0}-y_{j_0}|-|x_{j_0}-x_i|-|y_{j_0}-y_j|\\
&>6nL+2nr_0-2n(2L+r_0)=2nL.
\end{align*}
Therefore, for any $1\leq i,j\leq n$,
\[
\min_{i,j}\dist(C^{(1)}_L(x_i),C^{(1)}_L(y_j))\geq \min_{i,j}|x_i-y_j|-2L>2(n-1)L\geq 0,
\]
which proves the claim.
\end{proof}

Given an $n$-particle cube $\BC_{L_{k+1}}^{(n)}(\Bu)$ and $E\in\DR$, we denote
\begin{itemize}
\item
by $M_{\pai}^{\sep}(\BC_{L_{k+1}}^{(n)}(\Bu),E)$ the maximal number of pairwise separable, 
$(E,m)$-singular PI cubes $\BC_{L_k}^{(n)}(\Bu^{(j)})\subset\BC_{L_{k+1}}^{(n)}(\Bu)$;
\item
by  $M_{\pai}(\BC_{L_{k+1}}^{(n)}(\Bu),E)$ the maximal number of (not necessarily separable)
$(E,m)$-singular PI cubes $\BC^{(n)}_{L_k}(\Bu^{(j)})\subset \BC^{(n)}_{L_{k+1}}(\Bu)$ with $|\Bu^{(j)}-\Bu^{(j')}|>7NL_k$ for all $j\neq j'$;
\item
by $M_{\fui}(\BC_{L_{k+1}}^{(n)}(\Bu),E)$ the maximal number of 
$(E,m)$-singular FI cubes  $\BC_{L_k}^{(n)}(\Bu^{(j)})\subset \BC_{L_{k+1}}^{(n)}(\Bu)$ with $|\Bu^{(j)}-\Bu^{(j')}|>7NL_k$ for all $j\neq j'$\footnote{Note that by lemma \ref{lem:disjointness}, two FI cubes $\BC^{(n)}_{L_k}(\Bu^{(j)})$ and $\BC^{(n)}_{L_k}(\Bu^{(j')})$ with $|\Bu^{(j)}-\Bu^{(j')}|>7NL_k $ are automatically separable},
\item
$M_{\pai}(\BC_{L_{k+1}}^{(n)}(\Bu),I):=\sup_{E\in I}M_{\pai}(\BC_{L_{k+1}}^{(n)}(\Bu),E)$.
\item
$M_{\fui}(\BC_{L_{k+1}}^{(n)}(\Bu),I):=\sup_{E\in I}M_{\fui}(\BC_{L_{k+1}}^{(n)}(\Bu),E)$.
\item
by $M(\BC_{L_{k+1}}^{(n)}(\Bu),E)$ the maximal number of 
$(E,m)$-singular cubes  $\BC_{L_k}^{(n)}(\Bu^{(j)})\subset \BC_{L_{k+1}}^{(n)}(\Bu)$ with $|\Bu^{(j)}-\Bu^{(j')}|>7NL_k$ for all $j\neq j'$.
\item
by $M^{\sep}(\BC_{L_{k+1}}^{(n)}(\Bu),E)$ the maximal number of pairwise separable
$(E,m)$-singular cubes  $\BC_{L_k}^{(n)}(\Bu^{(j)})\subset \BC_{L_{k+1}}^{(n)}(\Bu)$ 
\end{itemize}
Clearly
\[
M_{\pai}(\BC_{L_{k+1}}^{(n)}(\Bu),E)+M_{\fui}(\BC_{L_{k+1}}^{(n)}(\Bu),E)\geq M(\BC_{L_{k+1}}^{(n)}(\Bu),E).
\]

\begin{lemma}\label{lem:MPI}
If $M(\BC^{(n)}_{L_{k+1}}(\Bu),E)\geq \kappa(n)+2$ with $\kappa(n)=n^n$ introduced in Lemma \ref{lem:separability}, then $M^{\sep}(\BC^{(n)}_{L_{k+1}}(\Bu),E)\geq 2$.\\
\noindent Similarly, if $M_{\pai}(\BC^{(n)}_{L_{k+1}}(\Bu),E)\geq \kappa(n)+2$  then $M_{\pai}^{\sep}(\BC^{(n)}_{L_{k+1}}(\Bu),E)\geq 2$.
\end{lemma}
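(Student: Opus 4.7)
The strategy is a direct pigeonhole argument based on the geometric dichotomy provided by Lemma~\ref{lem:separability}(A). The input is a collection of $\kappa(n)+2$ centers $\Bu^{(1)},\dots,\Bu^{(\kappa(n)+2)}$ of $(E,m)$-singular cubes $\BC^{(n)}_{L_k}(\Bu^{(j)})\subset\BC^{(n)}_{L_{k+1}}(\Bu)$ that are pairwise $7NL_k$-distant, and the goal is to extract a pair among them that is in addition separable in the sense of Definition~\ref{def:separability}(iii).

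The plan is as follows. First I would fix one of the centers, say $\Bu^{(1)}$, and apply Lemma~\ref{lem:separability}(A) to produce the associated family of at most $\kappa(n)=n^n$ auxiliary cubes $\BC^{(n)}_{2nL_k}(\Bu^{(1),(\ell)})$, $\ell=1,\dots,\kappa(n)$, with the property that any point $\By$ lying outside their union and satisfying $|\By-\Bu^{(1)}|>7NL_k$ yields a cube $\BC^{(n)}_{L_k}(\By)$ separable from $\BC^{(n)}_{L_k}(\Bu^{(1)})$. Next, I would consider the $\kappa(n)+1$ remaining centers $\Bu^{(2)},\dots,\Bu^{(\kappa(n)+2)}$, each of which already satisfies $|\Bu^{(j)}-\Bu^{(1)}|>7NL_k$ by hypothesis.

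The key observation is then that each auxiliary cube $\BC^{(n)}_{2nL_k}(\Bu^{(1),(\ell)})$ contains at most one of the $\Bu^{(j)}$. Indeed, any two points of that cube are at max-norm distance at most $4nL_k$, whereas the centers $\Bu^{(j)}$ are pairwise at distance strictly greater than $7NL_k\geq 7nL_k>4nL_k$ (using $n\leq N$). Hence the union $\bigcup_{\ell=1}^{\kappa(n)}\BC^{(n)}_{2nL_k}(\Bu^{(1),(\ell)})$ captures at most $\kappa(n)$ of the $\kappa(n)+1$ remaining centers, so at least one center $\Bu^{(j_0)}$ lies outside the union. By Lemma~\ref{lem:separability}(A), the pair $(\BC^{(n)}_{L_k}(\Bu^{(1)}),\BC^{(n)}_{L_k}(\Bu^{(j_0)}))$ is separable, and both cubes are $(E,m)$-singular by construction, so $M^{\sep}(\BC^{(n)}_{L_{k+1}}(\Bu),E)\geq 2$.

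For the second statement, I would note that the very same argument applies verbatim when the $(E,m)$-singular cubes $\BC^{(n)}_{L_k}(\Bu^{(j)})$ are additionally required to be PI, since Lemma~\ref{lem:separability}(A) is purely geometric and does not interact with the FI/PI dichotomy. The selected pair then consists of two PI $(E,m)$-singular separable cubes, giving $M^{\sep}_{\pai}\geq 2$. I do not anticipate a serious obstacle here: the only point worth verifying carefully is the elementary distance inequality $4nL_k<7NL_k$ that prevents two of the $\Bu^{(j)}$ from falling into the same auxiliary cube, which is a direct consequence of $n\leq N$.
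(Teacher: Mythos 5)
Your proposal is correct and follows essentially the same pigeonhole argument as the paper's proof, with the only cosmetic difference that the paper phrases it as a contradiction (assume no separable pair, conclude $\kappa(n)+1\leq\kappa(n)$) while you argue directly (at least one of the $\kappa(n)+1$ remaining centers escapes the union of the $\kappa(n)$ auxiliary cubes). Your explicit verification that $4nL_k<7NL_k$ forces at most one center per auxiliary cube is the same observation the paper uses, just spelled out a bit more carefully.
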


\begin{proof}
Assume that $M^{\sep}(\BC^{(n)}_{L_{k+1}}(\Bu),E)<2$, (i.e., there is no pair of separable  cubes of radius $L_k$ in $\BC^{(n)}_{L_{k+1}}(\Bu)$), but $M(\BC^{(n)}_{L_{k+1}}(\Bu),E)\geq \kappa(n)+2$. Then $\BC^{(n)}_{L_{k+1}}(\Bu)$ must contain at least $\kappa(n)+2$ cubes $\BC^{(n)}_{L_k}(\Bv_i)$, $0\leq i\leq \kappa(n)+1$ which are non separable but satisfy $|\Bv_i-\Bv_{i'}|>7NL_k$, for all $i\neq i'$. On the other hand, by Lemma \ref{lem:separability} there are at most $\kappa(n)$ cubes $\BC^{(n)}_{2nL_k}(\By_i)$, such that any cube $\BC^{(n)}_{L_k}(\Bx)$ with $\Bx\notin \bigcup_{j} \BC^{(n)}_{2nL_k}(\By_j)$ is separable from $\BC^{(n)}_{L_k}(\Bv_0)$. Hence $\Bv_i\in \bigcup_{j} \BC^{(n)}_{2nL_k}(\By_j)$ for all $i=1,\ldots,\kappa(n)+1$.  But since for all $i\neq i'$, $|\Bv_i-\Bv_{i'}|>7NL_k$, there must be at most one center $\Bv_i$ per cube $\BC^{(n)}_{2nL_k}(\By_j)$, $1\leq j\leq \kappa(n)$, hence we come to a contradiction:
\[
\kappa(n)+1\leq \kappa(n)
\]
indeed $\Bv_i\in\bigcup_j\BC^{(n)}_{2nL_k}(\By_j)$, $i=1,\cdots,\kappa(n)+1$. The same analysis holds true if we consider only PI cubes.
\end{proof}
\begin{definition}
Let $1\leq n\leq N$. We say that two subsets $\BA,\BB\subset\DZ^{nd}$ \emph{touch} if
$\BA\cap\BB\neq\emptyset$ or if there exist $\Bx\in\BA$ and $\By\in\BB$ such that $|\Bx-\By|=1$.
\end{definition}
In the following, we adapt Lemma 4.2 from \cite{DK89} to our multi-particle case. Note that in the following Lemma the term $M(\BC^{(n)}_{L_{k+1}}(\Bu),E)$ depends on the parameter $m$ which equals $(14N^N+6Nd)L_0^{-1/2}$ from Theorem \ref{thm:initial.estimate}. 

\begin{lemma}\label{lem:CNR.NS}
Let $J=\kappa(n)+5$ with $\kappa(n)=n^n$ and $E\in\DR$. Suppose that
\begin{enumerate}[\rm(i)]
\item
$\BC_{L_{k+1}}^{(n)}(\Bu)$ is $E$-CNR,
\item
$M(\BC_{L_{k+1}}^{(n)}(\Bu),E)\leq J$.
\end{enumerate}
Then there exists $\tilde{L}_2^*(J,N,d)>0$ such that if $L_0\geq \tilde{L}_{2}^*(J,N,d)$ we have that $\BC^{(n)}_{L_{k+1}}(\Bu)$ is $(E,m)$-NS.
\end{lemma}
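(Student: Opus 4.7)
The plan is to iterate the geometric resolvent inequality (GRI) along a chain of $L_k$-sub-cubes linking $\Bu$ to an arbitrary boundary point $\Bv\in\partial^-\BC^{(n)}_{L_{k+1}}(\Bu)$, extracting exponential decay from every non-singular sub-cube and paying a bounded cost at the singular ones. Construct a sequence of reference points $\Bx_0:=\Bu,\Bx_1,\Bx_2,\ldots$ with $|\Bx_{i+1}-\Bx_i|$ of order $L_k$, marching towards $\Bv$. At each step inspect $\BC^{(n)}_{L_k}(\Bx_i)\subset\BC^{(n)}_{L_{k+1}}(\Bu)$: call the step \emph{good} if this sub-cube is $(E,m)$-NS, and \emph{bad} otherwise. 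For good steps, the GRI combined with the non-singularity bound gives
$$\bigl|\BG_{\BC^{(n)}_{L_{k+1}}(\Bu)}(\Bx_i,\Bv;E)\bigr|\leq |\partial\BC^{(n)}_{L_k}(\Bx_i)|\,\ee^{-\gamma(m,L_k,n)L_k}\,\max_{\By'}\bigl|\BG_{\BC^{(n)}_{L_{k+1}}(\Bu)}(\By',\Bv;E)\bigr|,$$
with $\By'$ ranging over the external boundary of $\BC^{(n)}_{L_k}(\Bx_i)$, and the iteration then restarts at $\Bx_{i+1}:=\By'$.

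For bad steps no decay is available, but the $E$-CNR hypothesis guarantees via Definition \ref{def:E-CNR} that every sub-cube of size $\ge L_{k+1}^{1/\alpha}=L_k$ is $E$-NR; enclosing the offending $L_k$-sub-cube in such a slightly larger $E$-NR sub-cube yields the resolvent bound $\leq \ee^{L_{k+1}^\beta}$, which lets one ``jump'' across it. The crucial combinatorial input is then that, by $M(\BC^{(n)}_{L_{k+1}}(\Bu),E)\le J=\kappa(n)+5$, the bad centers encountered on the chain admit no $(J{+}1)$-subset with pairwise separation exceeding $7NL_k$; a greedy packing argument then groups them into at most $J$ clusters of diameter $O(JL_k)$, so the chain contains at most $J$ bad excursions.

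Summing contributions, with at least $L_{k+1}/L_k-O(J)$ good steps and at most $J$ bad excursions, one obtains (up to polynomial prefactors)
$$\bigl|\BG_{\BC^{(n)}_{L_{k+1}}(\Bu)}(\Bu,\Bv;E)\bigr|\leq\exp\Bigl(-\gamma(m,L_k,n)\bigl(L_{k+1}-O(JL_k)\bigr)+JL_{k+1}^\beta\Bigr).$$
Comparing with the target $\ee^{-\gamma(m,L_{k+1},n)L_{k+1}}$ via the mass gain $(\gamma(m,L_k,n)-\gamma(m,L_{k+1},n))L_{k+1}\gtrsim mL_k^{-1/8}L_{k+1}=mL_k^{11/8}$, which dwarfs both correction terms $J\gamma(m,L_k,n)L_k=O(L_k)$ and $JL_{k+1}^\beta=O(L_k^{3/4})$ once $L_0$ is large enough, produces the required $(E,m)$-NS bound. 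The main obstacle will be the combinatorial bookkeeping justifying the grouping of bad sub-cubes into at most $J$ clusters: the hypothesis on $M$ constrains only singular sub-cubes with pairwise distance strictly greater than $7NL_k$, so one has to engineer the iteration so that each new bad encounter either extends a previously-seen cluster (and is swallowed into the same $E$-NR jump) or produces a genuinely new well-separated singular center---of which only $J$ are allowed. Once this is in place, the mass-gain comparison is essentially automatic, and the lower bound $\tilde{L}_2^*(J,N,d)$ on $L_0$ arises from demanding $mL_k^{11/8}\gg JL_k^{3/4}$.
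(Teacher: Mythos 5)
Your proposal matches the paper's proof both in structure and in the key estimates: iterating the geometric resolvent inequality along a chain from $\Bu$ to $\partial^-\BC^{(n)}_{L_{k+1}}(\Bu)$, extracting a factor $\ee^{-\gamma(m,L_k,n)L_k}$ at each non-singular $L_k$-cube, grouping the singular ones into at most $J$ clusters (via a maximal collection of pairwise $7NL_k$-separated singular centers, whose cardinality is $\le J$ by hypothesis (ii)), jumping each cluster via the $E$-CNR resolvent bound, and closing with the mass gain $(\gamma(m,L_k,n)-\gamma(m,L_{k+1},n))L_{k+1}\gtrsim mL_k^{11/8}$ absorbing the $O(JmL_k)$ and $O(JL_k^{3/4})$ corrections. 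The combinatorial worry you flag is exactly resolved as you sketch, and the paper carries out precisely this argument (with slightly sharper per-cluster resolvent constants $\ee^{(O(JNL_k))^{1/2}}$ rather than your $\ee^{L_{k+1}^{1/2}}$, an immaterial difference).
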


\begin{proof}
By (ii) there are at most $J$  cubes of size $L_k$ contained in $\BC^{(n)}_{L_{k+1}}(\Bu)$ and with centers at distance $>7NL_k$ that are $(E,m)$-S. Therefore we can find $\Bx_i\in\BC^{(n)}_{L_{k+1}}(\Bu)$ with\\ $\dist(\Bx_i,\partial^-\BC^{(n)}_{L_{k+1}}(\Bu))\geq L_k$, $i=1,\cdots,r\leq J$ such that if
\[
\Bx\in\BC^{(n)}_{L_{k+1}}(\Bu)\setminus \bigcup_{i=1}^r\BC^{(n)}_{7NL_k}(\Bx_i),
\]
then $\BC^{(n)}_{L_{k}}(\Bx)$ is $(E,m)$-NS. 

Further, we claim that there exist cubes $\BC^{(n)}_{\ell_i}\subset \BC^{(n)}_{L_{k+1}}(\Bu)$ with side $\ell_i\in\{7NL_k+j(14NL_{k}+1): j=0,\cdots,J-1\}$, $i=1,\cdots,t\leq r$, such that   $\BC^{(n)}_{\ell_i}$, $\BC^{(n)}_{\ell_j}$ do not touch if $i\neq j$,
\[
\bigcup_{i=1}^r\BC^{(n)}_{7NL_k}(\Bx_i)\subset\bigcup_{i=1}^t\BC^{(n)}_{\ell_i}\quad \text{and}\quad \sum_{j=1}^t\ell_j\leq 7NL_k+J(14NL_k+1).
\]
Indeed if the $\BC^{(n)}_{7NL_k}(\Bx_i)$ do not touch then we are done. Otherwise there exist $j,s\in\{1,\cdots,r\}$ such that $\BC^{(n)}_{7NL_k}(\Bx_j)$ and $\BC^{(n)}_{7NL_k}(\Bx_{s})$ touch. Therefore, we may construct a big cube $\BC^{(n)}_{\ell_j}$ centered at $\Bx_j$ of minimal radius and containing $\BC^{(n)}_{7NL_k}(\Bx_{s})$. In fact, we set $\BC^{(n)}_{\ell_j}=\BC^{(n)}_{21NL_k+1}(\Bx_j)$, since the cubes touch, there exist points $\By\in\BC^{(n)}_{7NL_k}(\Bx_j)$ and $\By'\in\BC^{(n)}_{7NL_k}(\Bx_{s})$ such that $|\By-\By'|\leq 1$, then for $\Bz\in\BC^{(n)}_{7NL_k}(\Bx_{s})$ we have
\begin{align*}
|\Bz-\Bx_j|&\leq |\Bz-\Bx_{s}|+|\Bx_{s}-\By'|+|\By'-\By|+|\By-\Bx_j|\\
&\leq  7NL_k + (14NL_k+1)=21NL_k +1
\end{align*}
If $\BC^{(n)}_{\ell_j}$ and $\BC^{(n)}_{7NL_k}(\Bx_i)$ $i\neq j,s$ do not touch then we are 
done, otherwise we apply again the above construction reducing the number of the remaining cubes. 
Repeating this procedure we finally get the assertion. 

It follows that 
if $\Bx\in\BC^{(n)}_{L_{k+1}}(\Bu)\setminus \bigcup_{i=1}^t\BC^{(n)}_{\ell_i}$,
$\dist(\Bx,\partial^-\BC^{(n)}_{L_{k+1}}(\Bu))\geq L_k$, we have that $\BC^{(n)}_{L_k}(\Bx)$ is 
$(E,m)$-NS. Also notice that, if $\Bx\in\partial^+\BC^{(n)}_{\ell_j}$ for some $j=1,\cdots,t$ 
then $\Bx\notin\bigcup_{i=1}^t\BC^{(n)}_{\ell_i}$. 

Next, let $\Lambda\subset \BC^{(n)}_{L_{k+1}}(\Bu)$ 
and  $\Bx\in\Lambda$, $\By\in\BC^{(n)}_{L_{k+1}}(\Bu)\setminus \Lambda$, it follows from 
the resolvent identity that
\[
\BG^{(n)}_{\BC^{(n)}_{L_{k+1}}(\Bu)}(\Bx,\By;E)
= \sum_{(\Bz,\Bz')\in\partial \Lambda}\BG^{(n)}_{\Lambda}(\Bx,\Bz;E)\BG^{(n)}_{\BC^{(n)}_{L_{k+1}}(\Bu)}(\Bz',\By;E).
\]
Thus
\begin{equation}\label{eq:GRI.1}
|\BG^{(n)}_{\BC^{(n)}_{L_{k+1}}(\Bu)}(\Bx,\By;E)|\leq \left[\sum_{\Bz\in\partial^-\Lambda}|\BG^{(n)}_{\Lambda}(\Bx,\Bz;E)|\right]|\BG^{(n)}_{\BC^{(n)}_{L_{k+1}}}(\Bz_1,\By;E)|
\end{equation}
for some $z_1\in \partial^+\Lambda$.

Fix $\Bv\in\partial^-\BC^{(n)}_{L_{k+1}}(\Bu)$ and let $\Bx\in\BC^{(n)}_{L_{k+1}}(\Bu)$, with
$\dist(\Bx,\partial^-\BC^{(n)}_{L_{k+1}}(\Bu))\geq L_k$. We have two cases:
\begin{enumerate}[\rm(a)]
\item
$\BC^{(n)}_{L_k}(\Bx)$ is $(E,m)$-NS. In this case
\begin{equation}\label{eq:GRI.2}
\sum_{\Bz\in\partial^-\BC^{(n)}_{L_k}(\Bx)}|\BG^{(n)}(\Bx,\Bz;E)|\leq 2^{nd}\cdot nd \cdot L_k^{nd-1}\ee^{-\gamma(m,L_k,n)L_k},
\end{equation}
and by \eqref{eq:GRI.1}
\begin{equation}\label{eq:GRI.3}
|\BG^{(n)}_{\BC^{(n)}_{L_{k+1}}(\Bu)}(\Bx,\Bv;E)|\leq 2^{nd}\cdot nd \cdot L_k^{nd-1}\ee^{-\gamma(m,L_k,n)L_k} |\BG^{(n)}_{\BC^{(n)}_{L_{k+1}}}(\Bz_1,\By;E)|,
\end{equation}
for some $z_1\in\partial^+\BC^{(n)}_{L_k}(\Bx)$.
\item
$\BC^{(n)}_{L_k}(\Bx)$ is $(E,m)$-S. In this case we have that $\Bx\in\BC^{(n)}_{\ell_i}$ for some $i=1,\cdots,t$. If $\dist(\BC^{(n)}_{\ell_i},\partial^-\BC^{(n)}_{L_{k+1}}(\Bu))\geq L_k+1$, then equation \eqref{eq:GRI.1} gives
\[
|\BG^{(n)}_{\BC^{(n)}_{L_{k+1}}(\Bu)}(\Bx,\Bv;E)|\leq \left[ \sum_{\Bz\in\partial^-\BC^{(n)}_{\ell_i}}|\BG^{(n)}_{\BC^{(n)}_{\ell_i}}(\Bx,\Bz;E)|\right]|\BG^{(n)}_{\BC^{(n)}_{L_{k+1}}(\Bu)}(\Bz_1,\Bv;E)|
\]
for some  $\Bz_1\in\partial^+\BC^{(n)}_{\ell_i}$. Observe that $\dist(\BC^{(n)}_{\ell_i},\partial^-\BC^{(n)}_{L_{k+1}}(\Bu))\geq L_k+1$, implies  
\[
\dist(\Bz_1,\partial^-\BC^{(n)}_{L_{k+1}}(\Bu))\geq L_k.
\]
Since the cubes $\BC^{(n)}_{\ell_i}$ are $E$-NR we have:
\begin{align*}
|\BG^{(n)}_{\BC^{(n)}_{L_{k+1}}(\Bu)}(\Bx,\Bv;E)|&\leq 2^{Nd}\cdot Nd(7NL_k+J(14NL_k+1))^{nd-1}\\
&\times\ee^{(7NL_k+J(14NL_k+1))^{1/2}}|\BG^{(n)}_{\BC^{(n)}_{L_{k+1}}(\Bu)}(\Bz_1,\Bv;E)|
\end{align*}
and since the cube $\BC^{(n)}_{L_k}(\Bz_1)$ is $(E,m)$-NS, we apply  \eqref{eq:GRI.1} and \eqref{eq:GRI.2} getting
\begin{align*}
|\BG^{(n)}_{\BC^{(n)}_{L_{k+1}}(\Bu)}(\Bx,\Bv;E)|&\leq (2^{Nd}\cdot Nd)^2(7NL_k+J(14NL_k+1))^{nd-1}L_k^{nd-1}\\
&\times\ee^{(7NL_k+J(14NL_k+1))^{1/2}-\gamma(m,L_k,n)L_k}|\BG^{(n)}_{\BC^{(n)}_{L_{k+1}}(\Bu)}(\Bz_2,\Bv;E)|\\
&\leq (2^{Nd}\cdot Nd)^2 (21J)^{nd-1}(L_k+1)^{2(nd-1)}\\
&\times \ee^{(2J(14NL_k+1))^{1/2}-\gamma(m,L_k,n)L_k} |\BG^{(n)}_{\BC^{(n)}_{L{k+1}}(\Bu)}(\Bz_2,\Bv;E)|
\end{align*}
for some $\Bz_2\in\partial^+\BC^{(n)}_{L_k}(\Bz_1)$. Thus
\begin{equation}\label{eq:GRI.4}
|\BG^{(n)}_{\BC^{(n)}_{L_{k+1}}(\Bu)}(\Bx,\Bv;E)|\leq \ee^{-\gamma'(m,L_k,n)L_k}|\BG^{(n)}_{\BC^{(n)}_{L_{k+1}}(\Bu)}(\Bz_2,\Bv;E)|
\end{equation}
where
\begin{align*}
\gamma'(m,L_k,n)&=\gamma(m,L_k,n)-\frac{1}{L_k}[(2J(14NL_k+1))^{1/2} +2(nd-1)\ln(L_k+1)\\
&+\ln((2^{Nd}\cdot Nd)^2)(21J)^{nd-1})]
\\
&\geq \gamma(m,L_k,n)-\left[(30JN)^{1/2}+2Nd +2\ln(2^{Nd}Nd)\right]L_k^{-1/2}
\\
&>m-\left[(30(n^n +5)N)^{1/2}+2Nd+2\ln(2^{Nd}Nd)\right]L_0^{-1/2}> 0,
\end{align*}
since $m=(14N^N+6Nd)L_0^{-1/2}$.
\end{enumerate}

If $\Bx\in\BC^{(n)}_{L_{k+1}}(\Bu)$, $\dist\left(\Bx,\partial^-\BC^{(n)}_{L_{k+1}}(\Bu)\right)\geq L_k$, let 
\[
W(\Bx)=\left\{
\begin{array}{ll}
\cr2^{nd}\cdot nd\cdot L_k^{nd-1}\cdot\ee^{-\gamma(m,L_k,n)L_k} \text{ if $\Bx$ is as case (a)}\\
\cr\ee^{-\gamma'(m,L_k,n)L_k} \text{ if $\Bx$ is as case (b)}.
\end{array}
\right.
\]
Then \eqref{eq:GRI.3} for case (a) and \eqref{eq:GRI.4} for case (b) say that 
\[
|\BG^{(n)}_{\BC^{(n)}_{L_{k+1}}(\Bu)}(\Bx,\Bv;E)|\leq W(\Bx)|\BG^{(n)}_{\BC^{(n)}_{L_{k+1}}(\Bu)}(\Bz,\Bv;E)|,
\]
for some $\Bz\in\BC^{(n)}_{L_{k+1}}(\Bu)$.

To estimate $|\BG^{(n)}_{\BC^{(n)}_{L_{k+1}}(\Bu)}(\Bu,\Bv;E)|$ for $\Bv\in\partial^-\BC^{(n)}_{L_{k+1}}(\Bu)$, we start from $\Bu$  and find $\Bz_1,\Bz_2,\ldots$ by applying the  above procedure repeatedly when possible, getting after $r$ steps,
\begin{gather*}
|\BG^{(n)}_{\BC^{(n)}_{L_{k+1}}(\Bu)}(\Bu,\Bv;E)|\leq W(\Bu)|\BG^{(n)}_{\BC^{(n)}_{L_{k+1}}(\Bu)}(\Bz_1,\Bv;E)|\leq W(\Bu)W(\Bz_1)|\BG^{(n)}_{\BC^{(n)}_{L_{k+1}}(\Bu)}(\Bz_2,\Bv;E)|\\
\leq\cdots\leq  W(\Bu)W(\Bz_1)\cdots W(\Bz_{r-1})|\BG^{(n)}_{\BC^{(n)}_{L_{k+1}}(\Bu)}(\Bz_r,\Bv;E)|.
\end{gather*}

For this to be possible, $\Bz_1,\ldots \Bz_{r-1}$ must satisfy the conditions of either (a) or (b). Observe that in the first step, i.e, for $\Bx=\Bu$, either case (a) or (b) is done. Indeed, if $\BC^{(n)}_{L_k}(\Bu)$ is $(E,m)$-NS then we are in case (a). Otherwise $\BC^{(n)}_{L_k}(\Bu)$ is $(E,m)$-S. So $\Bu\in\BC^{(n)}_{\ell_i}$ for some $i$. Next, denote by $\diam(\BC^{(n)}_{\ell_i})$ the diameter of $\BC^{(n)}_{\ell_i})$. Then
\begin{align*}
\dist(\BC^{(n)}_{\ell_i},\partial^-\BC^{(n)}_{L_{k+1}}(\Bu))&\geq L_{k+1}-\diam(\BC^{(n)}_{\ell_i})\\
&\geq L_{k+1}-2(7NL_k+J(14NL_k+1))>L_{k+1}-44JNL_k>L_k+1,
\end{align*}
if $L_0>C(N,J)$ for some constant $C(N,J)>0$. Thus the conditions of case (b) are satisfied.

Let $r_1$ and $r_2$  be the number of times we had the bounds \eqref{eq:GRI.3} and \eqref{eq:GRI.4}, respectively. We have
\begin{align*}
|\BG^{(n)}_{\BC^{(n)}_{L_{k+1}}(\Bu)}(\Bu,\Bv;E)|&\leq 2^{Nd}\cdot Nd\cdot L_k^{nd-1}(\ee^{-\gamma(m,L_k,n)L_k})^{r_1}(\ee^{-\gamma'(m,L_k,n)L_k})^{r_2}\\
&\quad \times |\BG^{(n)}_{\BC^{(n)}_{L_{k+1}}(\Bu)}(\Bz_{r_1+r_2+1},\Bv;E)|.
\end{align*}

Notice that we have a lower bound for $r_1$ corresponding to the case where all the bad cubes $\BC^{(n)}_{\ell_i}$ are met, namely:
\begin{align*}
r_1\geq \frac{L_{k+1}-\sum_i (2\ell_i)}{L_k}&\geq \frac{L_{k+1}-2(7NL_k+J(14NL_k+1))}{L_k}
\\
&\geq L_{k+1}L_k^{-1} -44 JN.
\end{align*}
By the assumption (i) of the Lemma, $\BC^{(n)}_{L_{k+1}}(\Bu)$ is $E$-NR, so $\|\BG_{\BC^{(n)}_{L_{k+1}}(\Bu)}(E)\|<\ee^{L_{k+1}^{1/2}}$. Since  $\gamma'(m,L_k,n)>0$,  $\ee^{-r_2\gamma'(m,L_k,n)L_k}\leq 1$. Therefore 
\begin{align*}
|\BG^{(n)}_{\BC^{(n)}_{L_{k+1}}(\Bu)}(\Bu,\Bv;E)|&\leq \left( 2^{Nd}\cdot Nd\cdot L_k^{nd-1}\ee^{-\gamma(m,L_k,n)L_k}\right)^{r_1}\ee^{L_{k+1}^{1/2}}\\
&\leq \ee^{-m'L_{k+1}}
\end{align*}
where
\[
m'= \frac{1}{L_{k+1}}\left(r_1\gamma(m,L_k,n)L_k-r_1\ln(2^{Nd}Nd)L_k^{nd-1}\right)-\frac{1}{L_{k+1}^{1/2}}.
\]
Since 
$L_{k+1}L_k^{-1}- 44JN \le r_1\le L_{k+1}L_k^{-1}$
we obtain
\begin{align*}
m'&\geq \gamma(m,L_k,n)-\gamma(m,L_k,n)\, \frac{44JNL_k}{L_{k+1}}
\\
\qquad &-\frac{1}{L_{k+1}}\frac{L_{k+1}}{L_k}\ln(2^{Nd}Nd)L_k^{nd-1})-\frac{1}{L_{k+1}^{1/2}}
\\
&\geq \gamma(m,L_k,n)-\gamma(m,L_k,n)\, 44JN L_k^{-1/2}
\\
&\qquad -L_k^{-1}(\ln(2^{Nd}Nd))+(nd-1)\ln(L_k))-L_{k}^{-3/4}
\\
&\geq \gamma(m,L_k,n)[1-(44JN+\ln(2^{Nd}Nd)+Nd)L_k^{-1/2}]
\end{align*}

if $L_0\geq L_2^*(J,N,d)$ for some $L^*_2(J,N,d)>0$ large enough. Since $\gamma(m,L_k,n)=m(1+L_k^{-1/8})^{N-n+1}$,
\[
\frac{\gamma(m,L_k,n)}{\gamma(m,L_{k+1},n)}=\left(\frac{1+L_k^{-1/8}}{1+L_k^{-3/16}}\right)^{N-n+1}\geq \frac{1+L_k^{-1/8}}{1+L_k^{-3/16}}
\]
Therefore we can compute
\begin{align*}
&\frac{\gamma(m,L_k,n)}{\gamma(m,L_{k+1},n)}(1-(44JN+\ln(2^{Nd}Nd)+Nd)L_k^{-1/2})\\
 &\qquad \geq\frac{1+L_k^{-1/8}}{1+L_k^{-3/16}}(1-(44JN+\ln(2^{Nd}Nd)+Nd)L_k^{-1/2})>1,
\end{align*}
provided $L_0\geq \tilde{L}_2^*(J,N,d)$ for some large enough $\tilde{L}_2^*(J,N,d)>0$. Finally, we obtain that $m'>\gamma(m,l_{k+1},n)$ and $|\BG_{\BC^{(n)}_{L_{k+1}}(\Bu)}(\Bu,\Bv;E)|\leq \ee^{-\gamma(m,L_{k+1},n)L_{k+1}}$. This completes the proof of Lemma \ref{lem:CNR.NS}.
\end{proof}
Given $1\leq n\leq N$, assuming $\dsn{k-1,n'}$ for all $n'<n$ and $\dsn{k,n'}$ for any $1\leq n'\leq n$, we will prove $\dskonn$, separately for the following three types of pairs of cubes:

\begin{enumerate}[(I)]
\item
$\BC_{L_{k+1}}^{(n)}(\Bx)$ and $\BC_{L_{k+1}}^{(n)}(\By)$ are both PI-cubes.
\item
$\BC_{L_{k+1}}^{(n)}(\Bx)$ and $\BC_{L_{k+1}}^{(n)}(\By)$ are both FI-cubes.
\item
One of the cubes $\BC_{L_{k+1}}^{(n)}(\Bx)$ or $\BC_{L_{k+1}}^{(n)}(\By)$ is PI, while the other is FI.
\end{enumerate}
In the rest of this section we will denote by $I$ the interval $(-\infty,E^*]$.
\subsection{Pairs of partially interactive cubes}\label{ssec:PI.cubes}
Let  $\BC_{L_{k+1}}^{(n)}(\Bu)=\BC^{(n')}_{L_{k+1}}(\Bu')\times\BC^{(n'')}_{L_{k+1}}(\Bu'')$ be a PI-cube. We also write $\Bx=(\Bx',\Bx'')$ for any point $\Bx\in\BC_{L_{k+1}}^{(n)}(\Bu)$, in the same way as $(\Bu',\Bu'')$. So  the corresponding Hamiltonian $\BH^{(n)}_{\BC_{L_{k+1}}^{(n)}(\Bu)}$ is written in the form:
\begin{equation}\label{eq:decomp,H}
\BH_{\BC_{L_{k+1}}^{(n)}(\Bu)}^{(n)}\BPsi(\Bx)=(-\BDelta\BPsi)(\Bx)+\left[\BU(\Bx')+\BV(\Bx',\omega)+\BU(\Bx'')+\BV(\Bx'',\omega)\right]\BPsi(\Bx)
\end{equation}
or, in compact form
\[
\BH_{\BC_{L_{k+1}}^{(n)}(\Bu)}^{(n)}=\BH_{\BC_{L_{k+1}}^{(n')}(\Bu')}^{(n')}\otimes\mathbf{I}+ \mathbf{I}\otimes \BH_{\BC_{L_{k+1}}^{(n'')}(\Bu'')}^{(n'')}.
\]
We denote by $\BG^{(n')}(\Bu',\Bv';E)$ and $\BG^{(n'')}(\Bu'',\Bv'';E)$ the corresponding Green functions, respectively. Introduce the following notions
\begin{definition}[\cite{KN13}]\label{def:HNR}
Let $1\leq n\leq N$ and $E\in\DR$. Consider a PI cube  $\BC^{(n)}_L(\Bu)=\BC^{(n')}_L(\Bu')\times\BC^{(n'')}_L(\Bu'')$. Then $\BC^{(n)}_L(\Bu)$ is called $E$-highly non resonant ($E$-HNR) if
\begin{enumerate}[\rm(i)]
\item 
for all $\mu_j\in\sigma(\BH^{(n'')}_{\BC^{(n'')}_L(\Bu'')})$, the cube  $\BC^{(n')}_L(\Bu')$ is $(E-\mu_j)$-CNR and 
\item
for all $\lambda_i\in\sigma(\BH^{(n')}_{\BH^{(n')}_L(\Bu')})$ the cube $\BC^{(n'')}_L(\Bu'')$ is $(E-\lambda_i)$-CNR.
\end{enumerate}
\end{definition}

\begin{definition}[$(E,m)$-tunnelling]    \label{def:tunnelling}
Let $1\leq n\leq N$,  $E\in\DR$ and $m>0$. Consider a PI cube $\BC^{(n)}_L(\Bu)=\BC^{(n')}_L(\Bu')\times \BC^{(n'')}_L(\Bu'')$.

Then $\BC^{(n)}_L(\Bu)$ is called
\begin{enumerate}[(i)]
\item
$(E,m)$ left-tunnelling ($(E,m)$-LT) if
$\exists \mu_j\in\sigma(\BH^{(n'')}_{\BC^{(n'')}_L(\Bu'')})$ such that $\BC^{(n')}_L(\Bu')$ contains two separable  $(E-\mu_j,m)$-S cubes $\BC^{(n')}_{l}(\Bv_1)$ and $\BC^{(n')}_{l}(\Bv_2)$ with $L=\lfloor l^{\alpha}\rfloor+1$. Otherwise, it is called $(E,m)$ non-left-tunnelling ($(E,m)$-NLT).
\item
$(E,m)$ right-tunnelling ($(E,m)$-RT) if  $\exists \lambda_i\in\sigma(\BH^{(n')}_{\BC^{(n')}_L(\Bu')})$ such that $\BC^{(n'')}_L(\Bu'')$ contains two separable $(E-\lambda_i,m)$-S cubes  $\BC^{(n'')}_{l}(\Bv_1)$ and $\BC^{(n'')}_{l}(\Bv_2)$ with $L=\lfloor l^{\alpha}\rfloor+1$. Otherwise, it is called $(E,m)$ non-right-tunnelling ($(E,m)$-NRT).
\item
$(E,m)$-tunnelling ($(E,m)$-T) if either it is $(E,m)$-LT or $(E,m)$-RT.
Otherwise it is called $(E,m)$-non-tunnelling ($(E,m)$-NT).
\end{enumerate}
\end{definition}

We reformulate and prove Lemma 3.18 from \cite{KN13} in our context.

\begin{lemma}\label{lem:HNR}
Let $E\in\DR$. If a PI cube $\BC^{(n)}_L(\Bu)=\BC^{(n')}_L(\Bu')\times\BC^{(n'')}_L(\Bu'')$  is not $E$-HNR then
\begin{enumerate}[\rm(i)]
\item
either there exist $L^{1/\alpha}\leq \ell\leq L$, $\Bx\in\BC^{(n')}_L(\Bu')$ such that  the $n$-particle rectangle $\BC^{(n)}=\BC^{(n')}_{\ell}(\Bx)\times\BC^{(n'')}_L(\Bu'')\subset\BC^{(n)}_L(\Bu)$ is $E$-R.
\item
or there exist $L^{1/\alpha}\leq \ell\leq L$, $\Bx\in\BC^{(n'')}_L(\Bu'')$ such that  the $n$-particle rectangle $\BC^{(n)}=\BC^{(n')}_{L}(\Bu')\times\BC^{(n'')}_{\ell}(\Bx)\subset\BC^{(n)}_L(\Bu)$ is $E$-R.
\end{enumerate}
\end{lemma}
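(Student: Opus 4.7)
The plan is to unwind the negation of the $E$-HNR condition (Definition~\ref{def:HNR}) and then exploit the tensor-product structure of the Hamiltonian on a PI cube to transfer a resonance from one factor to the full $n$-particle rectangle.

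First I would observe that since $\BC^{(n)}_L(\Bu)$ is PI with decomposition $\BC^{(n')}_L(\Bu')\times\BC^{(n'')}_L(\Bu'')$ and the projections of the two factors are at distance $>r_0$ (by Lemma~\ref{lem:PI.cubes}), the interaction $\BU$ decouples and the restricted Hamiltonian factorizes as
\[
\BH^{(n)}_{\BC^{(n)}_L(\Bu)}=\BH^{(n')}_{\BC^{(n')}_L(\Bu')}\otimes\mathbf{I}+\mathbf{I}\otimes\BH^{(n'')}_{\BC^{(n'')}_L(\Bu'')}.
\]
The same factorization holds if we replace one of the two factor cubes by a subcube of smaller radius (still contained in the original factor), because the decoupling only depends on the distance between projections being larger than $r_0$, which is preserved when one shrinks a factor. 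Consequently, the spectrum of any such product operator is $\{\lambda_i+\mu_j\}$, where $\lambda_i$ and $\mu_j$ run over the spectra of the two factors.

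Next, I would negate Definition~\ref{def:HNR}: the failure of $E$-HNR means that either (a) there exists $\mu_j\in\sigma(\BH^{(n'')}_{\BC^{(n'')}_L(\Bu'')})$ such that $\BC^{(n')}_L(\Bu')$ is not $(E-\mu_j)$-CNR, or (b) there exists $\lambda_i\in\sigma(\BH^{(n')}_{\BC^{(n')}_L(\Bu')})$ such that $\BC^{(n'')}_L(\Bu'')$ is not $(E-\lambda_i)$-CNR. In case (a), Definition~\ref{def:E-CNR} provides some $\ell$ with $L^{1/\alpha}\leq\ell\leq L$ and some center $\Bx\in\BC^{(n')}_L(\Bu')$ such that $\BC^{(n')}_\ell(\Bx)$ is $(E-\mu_j)$-R, i.e.
\[
\dist\!\left[E-\mu_j,\ \sigma\bigl(\BH^{(n')}_{\BC^{(n')}_\ell(\Bx)}\bigr)\right]<\ee^{-\ell^{\beta}}.
\]
Hence there exists an eigenvalue $\lambda_i$ of $\BH^{(n')}_{\BC^{(n')}_\ell(\Bx)}$ with $|\lambda_i+\mu_j-E|<\ee^{-\ell^{\beta}}$. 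Setting $\BC^{(n)}=\BC^{(n')}_\ell(\Bx)\times\BC^{(n'')}_L(\Bu'')$ and applying the factorization, $\lambda_i+\mu_j\in\sigma(\BH^{(n)}_{\BC^{(n)}})$, so
\[
\dist\!\left[E,\ \sigma\bigl(\BH^{(n)}_{\BC^{(n)}}\bigr)\right]<\ee^{-\ell^{\beta}}.
\]
Since the minimal side length of $\BC^{(n)}$ is $\min(\ell,L)=\ell$, this is exactly the $E$-R condition for $\BC^{(n)}$, giving conclusion~(i). Case (b) is symmetric and yields conclusion~(ii) by swapping the roles of the two factors.

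No genuine obstacle arises here; the proof is a definition-chase. The only point worth being careful about is the convention that $E$-resonance of a rectangle is measured relative to its \emph{smallest} side length, which determines which of the $\ee^{-\ell^{\beta}}$ or $\ee^{-L^{\beta}}$ thresholds applies — but with $\ell\leq L$, the threshold $\ee^{-\ell^{\beta}}$ inherited from the CNR failure is exactly the one required.
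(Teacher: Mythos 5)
Your proof is correct and follows essentially the same route as the paper's: negate Definition~\ref{def:HNR}, extract the resonant subcube from the failure of $E$-CNR, and use the tensor-product factorization of the restricted Hamiltonian to transfer the resonance to the full rectangle. You are in fact slightly more explicit than the paper on two useful points --- that the factorization persists when one factor is shrunk (since the distance between projections only increases) and that the resonance threshold for the mixed rectangle is governed by the smaller side $\ell$ --- both of which the paper treats implicitly.
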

\begin{proof}
By Definition \ref{def:HNR}, if $\BC^{(n)}_L(\Bu)$ is not $E$-HNR then either (a) there exists $\mu_j\in\sigma(\BH^{(n'')}_{\BC^{(n'')}_L(\Bu'')})$ such that  $\BC^{(n')}_L(\Bu')$ is not $E-\mu_j$-CNR or (b) there exists $\lambda_i\in\sigma(\BH^{(n')}_{\BC^{(n')}_L(\Bu')})$ such that $\BC^{(n'')}_L(\Bu'')$ is not $E-\lambda_i$-CNR. Let us first focus on case (a). Since $\BC^{(n')}_L(\Bu')$ is not $E-\mu_j$-CNR there exists $L^{1/\alpha}\leq \ell\leq L$, $\Bx\in \BC^{(n')}_L(\Bu')$ such that $\BC^{(n')}_{\ell}(\Bx)\subset\BC^{(n')}_L(\Bu')$ and $\BC^{(n')}_{\ell}(\Bx)$ is $E-\mu_j$-R. So $\dist(E-\mu_j,\sigma(\BH^{(n')}_{\BC^{(n')}_{\ell}(\Bx)}))<\ee^{-\ell^{\beta}}$. Therefore there exists $\eta \in\sigma(\BH^{(n')}_{\BC^{(n')}_{\ell}(\Bx)})$ such that $|E-\mu_j-\eta|<\ee^{-\ell^{\beta}}$. Now consider $\BC^{(n)}=\BC^{(n')}_{\ell}(\Bx)\times\BC^{(n'')}_L(\Bu'')$, since the cube $\BC^{(n)}_L(\Bu)$ is PI we have $\sigma(\BH^{(n)}_{\BC^{(n)}})=\sigma(\BH^{(n')}_{\BC^{(n')}_{\ell}(\Bx)})+\sigma(\BH^{(n'')}_{\BC^{(n'')}_L(\Bu'')})$, hence 
\[
\dist(E,\sigma(\BH^{(n)}_{\BC^{(n)}}))\leq |E-\mu_j-\eta|<\ee^{-\ell^{\beta}}.
\]
Thus $\BC^{(n)}$ is $E$-R. The same arguments shows that case (ii) arises when (b) occurs.
\end{proof}
\begin{lemma} \label{lem:NDRoNS}
Let $E\in I$ and $\BC_{L_k}^{(n)}(\Bu)$ be a PI cube. Assume that
$\BC_{L_k}^{(n)}(\Bu)$ is $(E,m)$-NT and $E$-HNR.
Then $\BC^{(n)}_{L_k}(\Bu)$ is $(E,m)$-NS.
\end{lemma}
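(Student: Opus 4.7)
The plan is to exploit the product structure of a PI cube to reduce the statement to the $(n',n'')$-particle subsystems with $n',n''<n$, where the smaller induction hypotheses are at our disposal. By Lemma \ref{lem:PI.cubes} I decompose $\BC_{L_k}^{(n)}(\Bu)=\BC_{L_k}^{(n')}(\Bu')\times\BC_{L_k}^{(n'')}(\Bu'')$ with the two projections separated by more than $r_0$, so that the interaction $\BU$ decouples and the restricted Hamiltonian takes the tensor-sum form $\BH^{(n')}_{\BC^{(n')}_{L_k}(\Bu')}\otimes I+I\otimes\BH^{(n'')}_{\BC^{(n'')}_{L_k}(\Bu'')}$. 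A spectral expansion in the second factor then gives
\[
\BG_{\BC_{L_k}^{(n)}(\Bu)}(\Bu,\Bv;E)=\sum_{j}\psi_j(\Bu'')\,\overline{\psi_j(\Bv'')}\,\BG_{\BC_{L_k}^{(n')}(\Bu')}(\Bu',\Bv';E-\mu_j),
\]
where $\{(\mu_j,\psi_j)\}$ are the eigenpairs of $\BH^{(n'')}_{\BC_{L_k}^{(n'')}(\Bu'')}$, together with a symmetric expansion obtained by interchanging the two factors.

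Next I want to show that, for every $\mu_j$, the factor $\BC_{L_k}^{(n')}(\Bu')$ is $(E-\mu_j,m)$-NS, uniformly in $j$ (and similarly for the other factor at the energies $E-\lambda_i$). The $E$-HNR hypothesis supplies directly that $\BC_{L_k}^{(n')}(\Bu')$ is $(E-\mu_j)$-CNR. The $(E,m)$-NLT part of the NT hypothesis, combined with Lemma \ref{lem:MPI}, forces $M(\BC_{L_k}^{(n')}(\Bu'),E-\mu_j)\leq\kappa(n')+1\leq J$, since there cannot be two separable $(E-\mu_j,m)$-S sub-cubes of size $L_{k-1}$ inside $\BC_{L_k}^{(n')}(\Bu')$. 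Lemma \ref{lem:CNR.NS}, applied at the scale transition $L_{k-1}\leadsto L_k$ with particle number $n'$, then produces the desired uniform NS property of the factor. The NRT hypothesis yields the symmetric conclusion for $\BC_{L_k}^{(n'')}(\Bu'')$.

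To close the argument I pick $\Bv=(\Bv',\Bv'')\in\partial^-\BC_{L_k}^{(n)}(\Bu)$; at least one of the components lies in the internal boundary of its factor, and without loss of generality I may assume $\Bv'\in\partial^-\BC_{L_k}^{(n')}(\Bu')$. Inserting the uniform NS bound into the spectral expansion and applying Cauchy--Schwarz with the orthonormality identities $\sum_j|\psi_j(\Bu'')|^2=\sum_j|\psi_j(\Bv'')|^2=1$ gives
\[
|\BG_{\BC_{L_k}^{(n)}(\Bu)}(\Bu,\Bv;E)|\leq\ee^{-\gamma(m,L_k,n')L_k}\leq\ee^{-\gamma(m,L_k,n)L_k},
\]
where the last inequality is automatic from $\gamma(m,L,n)=m(1+L^{-1/8})^{N-n+1}$ since $n'<n$. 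The other case (with $\Bv''$ on the internal boundary of the second factor) is handled identically using the other spectral expansion.

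The main obstacle I anticipate is not analytic but bookkeeping: one must verify that the two structural hypotheses HNR and NT are exactly what Lemma \ref{lem:CNR.NS} demands on each smaller-particle factor at the shifted energies $E-\mu_j$ and $E-\lambda_i$, so that the induction on $n$ closes. The mild but essential gain $\gamma(m,L_k,n')>\gamma(m,L_k,n)$ built into the definition of $\gamma$ is precisely what absorbs the absence of any exponential decay from the spectral sum over $\psi_j$, and it is engineered for this reduction.
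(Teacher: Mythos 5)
Your proof is correct and follows the same overall strategy as the paper: decompose the PI cube via Lemma~\ref{lem:PI.cubes} into a tensor product, expand the resolvent kernel over the eigenbasis of one factor, use $E$-HNR to get CNR of the other factor at the shifted energies, use NT plus Lemma~\ref{lem:MPI} to cap the singularity count, and then invoke Lemma~\ref{lem:CNR.NS} at the smaller particle number to get $(E-\mu_j,m)$-NS of that factor uniformly in $j$.

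Where you improve on the paper's write-up is in the very last estimate. The paper bounds
\[
\sum_i \left|\varphi_i(\Bu')\varphi_i(\Bv')\right|
\le (2L_k+1)^{(n-1)d}
\]
by crude counting together with $\|\varphi_i\|_\infty\le 1$, and must then verify that the gap $\gamma(m,L_k,n-1)-\gamma(m,L_k,n)$ absorbs the resulting $L_k^{-1}\ln (2L_k+1)^{(n-1)d}$ correction, requiring $L_0$ large enough. Your Cauchy--Schwarz step, using the completeness identities $\sum_j |\psi_j(\Bu'')|^2 = \sum_j|\psi_j(\Bv'')|^2 = 1$, bounds the corresponding sum by $1$ directly, so you land immediately on $\ee^{-\gamma(m,L_k,n')L_k}$ and the comparison $\gamma(m,L_k,n')\ge\gamma(m,L_k,n)$ (which is automatic from $n'<n$ and the definition of $\gamma$) finishes the proof with no further computation. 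This eliminates a small but real technical verification from the paper's argument. The only nitpick: when you invoke Lemma~\ref{lem:CNR.NS} with particle number $n'$, the threshold there is $J'=\kappa(n')+5$ rather than $J=\kappa(n)+5$; your bound $M\le\kappa(n')+1$ satisfies both, so the slip is harmless, but it is worth stating the intended $J'$.
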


\begin{proof}
Let $\BC^{(n')}_{L_k}(\Bu')\times\BC^{(n'')}_{L_k}(\Bu'')$ be the decomposition of the PI cube $\BC^{(n)}_{L_k}(\Bu)$.
Let $\{\lambda_i,\varphi_i\}$ and $\{\mu_j,\phi_j\}$ be the eigenvalues and corresponding eigenvectors of $\BH_{\BC_{L_k}^{(n')}(\Bu')}^{(n')}$ and $\BH_{\BC_{L_k}^{(n'')}(\Bu'')}^{(n'')}$ respectively. Then we can choose the eigenvectors $\BPsi_{ij}$ and corresponding eigenvalues $E_{ij}$ of $\BH_{\BC^{(n)}_{L_k}(\Bu)}(\omega)$ as follows.
\[
\BPsi_{ij}=\varphi_i\otimes\phi_j,\qquad E_{ij}=\lambda_i+\mu_j.
\]
By the assumed  $E$-HNR property of the cube $\BC^{(n)}_{L_k}(\Bu)$, for all eigenvalues $\lambda_i$ one has $\BC^{(n'')}_L(\Bu'')$ is  $E-\lambda_i$-CNR. Next,  by assumption of $(E,m)$-NT, $\BC^{(n'')}_{L_k}(\Bu'')$ does not contain any pair of separable $(E-\lambda_i,m)$-S cubes of radius $L_{k-1}$ therefore by Lemma \ref{lem:MPI} $M(\BC^{(n)}_{L_{k+1}}(\Bu),E-\lambda_i)<\kappa(n)+2$ and Lemma \ref{lem:CNR.NS} implies that it is also $(E-\lambda_i,m)$-NS, yielding 
\begin{equation}\label{eq:lem.NDRoNS.1}
\max_{\{\lambda_i\}}
\max_{\Bv''\in \partial^- \BC_{L_k}^{(n'')}(\Bu'')}
\left|\BG^{(n'')}(\Bu'',\Bv'';E-\lambda_i)\right|\leq\ee^{-\gamma(m,L_k,n'')L_k}.
\end{equation}
The same analysis for $\BC^{(n')}_L(\Bu')$ also gives
\begin{equation}
\label{eq:lem.NDRoNS.2}
\max_{\{ \mu_j\}}
      \max_{\Bv'\in \partial^-\BC_{L_k}^{(n')}(\Bu')}
      \left|\BG^{(n')}(\Bu',\Bv';E-\mu_j)\right|\leq\ee^{-\gamma(m,L_k,n')L_k}.
\end{equation}
For any $\Bv\in\partial^-\BC^{(n)}_{L_k}(\Bu)$, $|\Bu-\Bv|=L_k$, thus either $|\Bv'-\Bu'|=L_k$ or $|\Bv''-\Bu''|=L_k$. Consider first the latter case. Equation \eqref{eq:lem.NDRoNS.1} applies and we get
\begin{gather*}
\left|\BG^{(n)}(\Bu,\Bv;E)\right| =\left|\sum_{i,j} \frac{\varphi_i(\Bu')\varphi_i(\Bv')\phi_j(\Bu'')\phi_j(\Bv'')}{E-\lambda_i-\mu_j}\right|\\
\leq \sum_i \left|\varphi_i(\Bu')\varphi_i(\Bv')\right|\cdot \left|\BG^{(n)}(\Bu'',\Bv'';E-\lambda_i)\right|\\
\leq (2L_k+1)^{(n-1)d}
\max_{\{ \lambda_i \}}\;
\max_{\Bv''\in\partial\BC^{(n)}_{L_k}(\Bu'') }
\left|\BG^{(n)}(\Bu'',\Bv'';E-\lambda_i)\right|, \tag{ since $\|\varphi\|_{\infty}\leq 1$}\\
\leq (2L_k+1)^{(n-1)d}\cdot\ee^{-\gamma(m,L_k,n-1)L_k}\\
=\ee^{-[\gamma(m,L_k,n-1)-L_k^{-1}\ln(2L_k+1)^{(n-1)d}]L_k}.
\end{gather*}
But by  Definition \eqref{eq:gamma}:
\[
\gamma(m,L_k,n)=m(1+L_k^{-1/8})^{N-n+1},
\]
with $m=(14N^N+6Nd)L_0^{-1/2}$.
For $2\leq n\leq N$,
\[
\gamma(m,L_k,n-1)-\gamma(m,L_k,n)>L_k^{-1}\ln(2L_k+1)^{(n-1)d}.
\]
Indeed, setting $C_1=14N^N+6Nd$,
\begin{align*}
\gamma(m,L_k,n-1)-\gamma(m,L_k,n)&=mL_k^{-1/8}(1+L_k^{-1/8})^{N-n+1}\\
&=C_1L_0^{-1/2}L_k^{-1/8}(1+L_k^{-1/8})^{N-n+1}>C_1L_k^{-5/8},
\end{align*}
and for $L_0$  sufficiently large, hence $L_k$,
\[
L_k^{-1}\ln(2L_k+1)^{(n-1)d}\leq L_k^{-1}(n-1)d(3L_k)^{3/8}\leq C_1 L_k^{-5/8}.
\]
Thus, $\BC^{(n)}_{L_k}(\Bu)$ is $(E,m)$-NS. Finally, the case $|\Bu'-\Bv'|=L_k$ is similar.
\end{proof}

\begin{lemma}\label{lem:T.estimate}
Let $2\leq n\leq N$ and assume property $\dsn{k,n'}$ for any $1\leq n'<n$.
Then for any  PI cube $\BC_{L_{k+1}}^{(n)}(\By)$ one has
\begin{equation}\label{eq:C.is.T}
\DP\bigl\{\exists E\in I, \BC_{L_{k+1}}^{(n)}(\By)\text{ is $(E,m)$-T}\bigr\}\leq \frac{1}{2}L_{k+1}^{-4p\,4^{N-n}}.
\end{equation}
\end{lemma}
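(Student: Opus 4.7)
The plan is to use the decomposition \eqref{eq:cartesian.cubes}--\eqref{eq:cartesian.cubes.2} to split the PI cube $\BC^{(n)}_{L_{k+1}}(\By) = \BC^{(n')}_{L_{k+1}}(\By') \times \BC^{(n'')}_{L_{k+1}}(\By'')$ with $1 \le n', n'' \le n-1$ and $\dist\bigl(\varPi\BC^{(n')}_{L_{k+1}}(\By'), \varPi\BC^{(n'')}_{L_{k+1}}(\By'')\bigr) > r_0$. The crucial consequence of this geometric disjointness, together with the i.i.d.\ nature of $V$ and the finite range of $\BU$, is that the sub-Hamiltonians $\BH^{(n')}_{\BC^{(n')}_{L_{k+1}}(\By')}$ and $\BH^{(n'')}_{\BC^{(n'')}_{L_{k+1}}(\By'')}$ depend on disjoint families of random variables. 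This means the spectrum of one partner can be effectively ``frozen'' while probabilistic estimates are applied to the other.

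Split $\mathrm{T} = \mathrm{LT} \cup \mathrm{RT}$ and bound each by $\tfrac{1}{4} L_{k+1}^{-4p\cdot 4^{N-n}}$. For the LT event, observe that $-\BDelta \ge 0$, $V \ge 0$ (by $\condP$, since $\supp\mu \subset [0,\infty)$), and $\BU \ge 0$ (by $\condI$) imply $\sigma\bigl(\BH^{(n'')}_{\BC^{(n'')}_{L_{k+1}}(\By'')}\bigr) \subset [0,\infty)$. Hence for any $E \in I = (-\infty, E^*]$ and any eigenvalue $\mu_j$ of this operator, the shifted energy $E'' := E - \mu_j$ still lies in $I$. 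Therefore the LT event is contained in
\[
\Bigl\{\exists\, E'' \in I : \BC^{(n')}_{L_{k+1}}(\By')\ \text{contains two separable}\ (E'',m)\text{-S cubes of radius } L_k\Bigr\},
\]
which is measurable with respect to the $\omega'$-variables in $\varPi\BC^{(n')}_{L_{k+1}}(\By')$ alone, with no further reference to $\mu_j$ or $\omega''$. This is the key reduction.

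Next, I union-bound over pairs $(\Bv_1, \Bv_2)$ of centers of separable $L_k$-cubes inside $\BC^{(n')}_{L_{k+1}}(\By')$, of which there are at most $|\BC^{(n')}_{L_{k+1}}(\By')|^2 \le (3L_{k+1})^{2n'd}$, and apply the inductive hypothesis $\dsn{k,n'}$ to each pair. This yields
\[
\DP(\mathrm{LT}) \le (3L_{k+1})^{2n'd} \cdot L_k^{-2p \cdot 4^{N-n'}},
\]
and an identical bound holds for $\DP(\mathrm{RT})$ by the symmetric argument with the roles of $\Bu'$ and $\Bu''$ exchanged. Since $n' \le n-1$, we have $4^{N-n'} \ge 4\cdot 4^{N-n}$; together with $L_k \ge (L_{k+1}-1)^{1/\alpha}$ and $\alpha = 3/2$, this gives the exponent $\tfrac{2}{3}\cdot 2p \cdot 4^{N-n'} \ge \tfrac{16}{3} p \cdot 4^{N-n}$ in $L_{k+1}$, which exceeds the target $4p \cdot 4^{N-n}$ by a margin of $\tfrac{4}{3} p \cdot 4^{N-n}$. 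Since $p > 6Nd$, this surplus absorbs the polynomial prefactor $(3L_{k+1})^{2n'd} \le (3L_{k+1})^{2Nd}$ for $L_0$ large enough, as well as the factor of $2$ from combining LT and RT, giving the claimed bound.

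The main obstacle is the reparametrization step: a priori the set of ``dangerous'' shifted energies $\{E - \mu_j\}$ is random through its dependence on $\omega''$, so one cannot directly apply $\dsn{k,n'}$ to the $\omega'$-potential. The nonnegativity of $V$ and $\BU$ — which ensures $I$ is closed under the shift $E \mapsto E - \mu_j$ — is precisely what allows one to enlarge the event to one quantified uniformly over $E'' \in I$, thereby decoupling it from $\omega''$ and making the inductive hypothesis directly applicable. This is exactly the role played by the low-energy nonnegativity assumptions throughout the induction.
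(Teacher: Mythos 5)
Your proof is correct and follows essentially the same route as the paper: the same decomposition of the PI cube, the same observation that nonnegativity of $V$ and $\BU$ forces $E-\mu_j\in I$, the same union bound over pairs of centers combined with the inductive hypothesis $\dsn{k,n'}$, and the same arithmetic comparison of exponents using $4^{N-n'}\ge 4\cdot 4^{N-n}$, $\alpha=3/2$, and $p>6Nd$. If anything your write-up is a touch cleaner, since by enlarging the tunnelling event to one quantified uniformly over $E''\in I$ you avoid the extraneous $|\BC^{(n'')}_{L_{k+1}}(\By'')|$ factor the paper carries, and you make explicit the decoupling from $\omega''$ that the paper leaves implicit.
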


\begin{proof}
Consider a PI cube  $\BC_{L_{k+1}}^{(n)}(\By)=\BC_{L_{k+1}}^{(n')}(\By')\times\BC_{L_{k+1}}^{(n'')}(\By'')$.
By definition \ref{def:tunnelling}, we have that the event
\[
\left\{\exists E\in I:\BC_{L_{k+1}}^{(n)}(\By)\text{ is $(E,m)$-T}\right\},
\]
is contained in the union
\[
\left\{\exists E\in I: \BC^{(n)}_{L_{k+1}}(\By)\text{ is $(E,m)$-RT}\right\}\cup\left\{\exists E\in I: \BC^{(n)}_{L_{k+1}}(\By)\text{ is $(E,m)$-LT}\right\}.
\]
Now, since $E\in I$ and $\mu_j\geq 0$ we have $ E-\mu_j\leq E^*$. So for any $j$, $E-\mu_j\in I$.
Further using property $\dsn{k,n'}$ we have
\begin{align*}
\prob{\text{$\exists E\in I$, $\BC^{(n)}_{L_{k+1}}(\By)$ is $(E,m)$-RT}}
&\leq \frac{|\BC^{(n')}_{L_{k+1}}(\By')|^{2}}{2}|\BC^{(n'')}_{L_{k+1}}(\By'')|L_k^{-2p4^{N-n'}}\\
&\leq C(n,N,d)L_{k+1}^{-2p\,\frac{4^{N-(n-1)}}{\alpha}+3(n-1)d}.
\end{align*}
A similar argument also shows that
\[
\prob{\text{$\exists E\in I$, $\BC^{(n)}_{L_{k+1}}(\By)$ is $(E,m)$-LT}}
\leq C(n,N,d)L_{k+1}^{-2p\,\frac{4^{N-(n-1)}}{\alpha}+3(n-1)d},
\]
so that
\[
\prob{\exists E\in I: \BC^{(n)}_{L_{k+1}}(\By)\text{ is $(E,m)$-T}}
\leq C(n,N,d)L_{k+1}^{-2p\,\frac{4^{N-(n-1)}}{\alpha}+3(n-1)d}.
\]
The assertion follows by observing that $2p\,4^{N-(n-1)}/\alpha \, -3(n-1)d>4p\,4^{N-n}$ for $\alpha=3/2$ provided
$L_0$ is large enough and  $p>4\alpha Nd=6Nd$.
\end{proof}

\begin{theorem}\label{thm:partially.interactive}
Let $1\leq n\leq N$. There exists $L_1^*=L_1^*(N,d)>0$ such that if $L_0\geq L_1^*$ and if for $k\geq 0$ $\dsn{k,n'}$ holds true for any $1\leq n'<n$, then $\dskonn$ holds true
for any pair of separable PI cubes $\BC_{L_{k+1}}^{(n)}(\Bx)$ and $\BC_{L_{k+1}}^{(n)}(\By)$.
\end{theorem}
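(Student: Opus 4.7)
The plan is to decompose the target event using the contrapositive of Lemma \ref{lem:NDRoNS} and bound the resulting pieces with Lemma \ref{lem:T.estimate} and a Wegner-type estimate drawn from Lemma \ref{lem:HNR} combined with Theorem \ref{thm:Wegner}.

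Write $\BC_1:=\BC_{L_{k+1}}^{(n)}(\Bx)$ and $\BC_2:=\BC_{L_{k+1}}^{(n)}(\By)$. By Lemma \ref{lem:NDRoNS}, any PI cube that is $(E,m)$-singular at some $E\in I$ is either $(E,m)$-tunnelling or not $E$-HNR. Hence
\[
\{\exists E\in I:\BC_1,\BC_2 \text{ are both } (E,m)\text{-S}\}\subset \mathcal{T}_1\cup\mathcal{T}_2\cup\mathcal{A},
\]
where $\mathcal{T}_i:=\{\exists E\in I:\BC_i \text{ is } (E,m)\text{-T}\}$ and $\mathcal{A}:=\{\exists E\in I: \text{neither } \BC_1 \text{ nor } \BC_2 \text{ is } E\text{-HNR}\}$. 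Lemma \ref{lem:T.estimate} immediately gives $\DP(\mathcal{T}_i)\leq \tfrac{1}{2}L_{k+1}^{-4p\cdot 4^{N-n}}$, well below the target.

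The core step is to bound $\DP(\mathcal{A})$. On this event, Lemma \ref{lem:HNR} furnishes in each $\BC_i$ a sub-rectangle $\widetilde{\BC}_i\subset\BC_i$ of product form with one side of length $\ell_i\in[L_{k+1}^{1/\alpha},L_{k+1}]$ and the other equal to $L_{k+1}$, such that $\widetilde{\BC}_i$ is $E$-R. Since $\widetilde{\BC}_i\subset\BC_i$, one has $\varPi_j\widetilde{\BC}_i\subset\varPi_j\BC_i$ for each single-particle coordinate $j$, so any subset $\CJ\subset\{1,\ldots,n\}$ witnessing pre-separability of $(\BC_1,\BC_2)$ also witnesses pre-separability of $(\widetilde{\BC}_1,\widetilde{\BC}_2)$. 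Theorem \ref{thm:Wegner} then yields, for any such fixed pair,
\[
\DP\bigl\{\exists E\in\DR:\widetilde{\BC}_1,\widetilde{\BC}_2 \text{ are both } E\text{-R}\bigr\}\leq C\,L_{k+1}^{(2n+1)d}\,s\bigl(F_V,2\ee^{-L_{k+1}^{\beta/\alpha}}\bigr),
\]
and by $\condP$ with $\beta=1/2$ the last factor is at most $C\,L_{k+1}^{-A/\alpha}$.

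Summing over the at most $(3L_{k+1})^{2nd}\,L_{k+1}^2$ admissible centers and side lengths of the two sub-rectangles gives $\DP(\mathcal{A})\leq C\,L_{k+1}^{(4n+1)d+2-A/\alpha}$. Using $\alpha=3/2$, $p>6Nd$ and the lower bound $A>\tfrac{3}{2}\cdot 4^Np+9Nd$ of $\condP$, an arithmetic check gives $A/\alpha-(4n+1)d-2>2p\cdot 4^{N-n}$ for every $1\leq n\leq N$, so that $\DP(\mathcal{A})\leq\tfrac{1}{2}L_{k+1}^{-2p\cdot 4^{N-n}}$ for $L_0\geq L_1^*(N,d)$ large enough. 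Adding the three contributions establishes $\dskonn$. The main obstacle I foresee is the exponent bookkeeping in this last step: the margin in $\condP$ must simultaneously absorb the $(2n+1)d$ Wegner factor from Theorem \ref{thm:Wegner}, the $L_{k+1}^{2nd+2}$ enumeration factor, and still leave a $2p\cdot 4^{N-n}$ negative power, and this is exactly why the lower bound on $A$ in $\condP$ is tuned the way it is.
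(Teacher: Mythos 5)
Your proof takes essentially the same route as the paper: decompose the double-singularity event via the contrapositive of Lemma \ref{lem:NDRoNS} into the two tunnelling events and the event that neither cube is $E$-HNR, bound the former with Lemma \ref{lem:T.estimate}, and bound the latter by passing through Lemma \ref{lem:HNR} to Theorem \ref{thm:Wegner} exactly as in the proof of Corollary \ref{cor:Wegner}. The paper in fact compresses the HNR part to one sentence (\textquotedblleft by combining Theorem \ref{thm:Wegner} and Lemma \ref{lem:HNR} we obtain as in Corollary \ref{cor:Wegner}\ldots\textquotedblright) and records the sharper bound $\prob{\rR}\le L_{k+1}^{-4^Np}$; your exponent bookkeeping reproduces this (with $A/\alpha>4^Np+6Nd$ one gets $A/\alpha-(4n+1)d-2>4^Np>2p\cdot4^{N-n}$), so the argument closes just as the paper's does.
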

\begin{proof}
Let $\BC_{L_{k+1}}^{(n)}(\Bx)$ and $\BC_{L_{k+1}}^{(n)}(\By)$ be two separable PI-cubes. Consider the events:
\begin{align*}
\rB_{k+1}
&=\bigl\{\exists\,E\in I:\BC_{L_{k+1}}^{(n)}(\Bx)\text{ and $\BC_{L_{k+1}}^{(n)}(\By)$ are $(E,m)$-S}\bigr\},\\
\rR
&=\bigl\{\exists\,E\in I:\text{neither $\BC_{L_{k+1}}^{(n)}(\Bx)$ nor $\BC_{L_{k+1}}^{(n)}(\By)$ is $E$-HNR}\bigr\},\\
\rT_{\Bx}
&=\bigl\{\exists E\in I: \BC_{L_{k+1}}^{(n)}(\Bx)\text{ is $(E,m)$-T}\bigr\},\\
\rT_{\By}
&=\bigl\{\exists E\in I: \BC_{L_{k+1}}^{(n)}(\By)\text{ is $(E,m)$-T}\bigr\}.
\end{align*}
If $\omega\in\rB_{k+1}\setminus\rR$, then $\forall E\in I$, $\BC_{L_{k+1}}^{(n)}(\Bx)$ or $\BC_{L_{k+1}}^{(n)}(\By)$ is $E$-HNR. If $\BC_{L_{k+1}}^{(n)}(\By)$ is $E$-HNR, then it must be $(E,m)$-T: otherwise it would have been $(E,m)$-NS by Lemma~\ref{lem:NDRoNS}. Similarly, if $\BC_{L_{k+1}}^{(n)}(\Bx)$ is $E$-HNR, then it must be $(E,m)$-T. This implies that
\[
\rB_{k+1}\subset\rR\cup\rT_{\Bx}\cup\rT_{\By}.
\]
Therefore,
\begin{align*}
\DP\left\{\rB_{k+1}\right\}&\leq\DP\left\{\rR\right\}+\DP\{\rT_{\Bx}\}+\DP\{\rT_{\By}\}\\
&\leq \prob{\rR}+\frac{1}{2}L_{k+1}^{-4p\,4^{N-n}}+\frac{1}{2}L_{k+1}^{-4p\,4^{N-n}}\\
\end{align*}
where we used  \eqref{eq:C.is.T} to estimate $\DP\{\rT_{\Bx}\}$ and $\DP\{\rT_{\By}\}$. Next by combining Theorem \ref{thm:Wegner} and Lemma \ref{lem:HNR} we obtain as in corollary \ref{cor:Wegner} that $\prob{\rR}\leq L_{k+1}^{-4^N\,p}$. Finally 
\begin{equation}\label{eq:bound.PI}
\prob{\rB_{k+1}}\leq L_{k+1}^{-4^Np}+L_{k+1}^{-4p4^{N-n}}<L_{k+1}^{-2p4^{N-n}}.
\end{equation}
\end{proof}

\begin{lemma}\label{lem:MND}
With the above notations, assume that $\dsn{k-1,n'}$ holds true for all $1\leq n'<n$ then
\begin{equation}\label{eq:MND}
\DP\left\{M_{\pai}(\BC_{L_{k+1}}^{(n)}(\Bu),I)\geq \kappa(n)+ 2\right\}\leq \frac{3^{2nd}}{2} L_{k+1}^{2nd}\left(L_k^{-4^Np}+L_k^{-4p\,4^{N-n}}\right).
\end{equation}
\end{lemma}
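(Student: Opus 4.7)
The strategy is to reduce the estimate on $M_{\pai}$ to the two-cube probability bound already established in Theorem~\ref{thm:partially.interactive}, and then apply a union bound over the polynomially many candidate pairs of sub-cube centers inside $\BC_{L_{k+1}}^{(n)}(\Bu)$. The first step is to invoke the PI-version of Lemma~\ref{lem:MPI}: if $M_{\pai}(\BC_{L_{k+1}}^{(n)}(\Bu),E)\geq \kappa(n)+2$ for some $E\in I$, then $M_{\pai}^{\sep}(\BC_{L_{k+1}}^{(n)}(\Bu),E)\geq 2$, so one can exhibit two separable $(E,m)$-singular PI cubes $\BC_{L_k}^{(n)}(\Bu^{(1)}), \BC_{L_k}^{(n)}(\Bu^{(2)}) \subset \BC_{L_{k+1}}^{(n)}(\Bu)$. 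Consequently,
\[
\bigl\{ M_{\pai}(\BC_{L_{k+1}}^{(n)}(\Bu),I)\geq \kappa(n)+2 \bigr\} \;\subset\; \bigcup_{\{\Bu^{(1)},\Bu^{(2)}\}} \rB_{\Bu^{(1)},\Bu^{(2)}},
\]
where $\rB_{\Bu^{(1)},\Bu^{(2)}} := \bigl\{\exists E\in I:\ \BC_{L_k}^{(n)}(\Bu^{(1)}),\ \BC_{L_k}^{(n)}(\Bu^{(2)})\ \text{are separable PI and both}\ (E,m)\text{-S} \bigr\}$, and the union ranges over unordered pairs of centers in $\BC_{L_{k+1}}^{(n)}(\Bu)$, of which there are at most $\tfrac{1}{2}|\BC_{L_{k+1}}^{(n)}(\Bu)|^2 \leq \tfrac{1}{2}(3L_{k+1})^{2nd}$.

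Next, since the hypothesis provides $\dsn{k-1,n'}$ for every $1\leq n'<n$, I would apply Theorem~\ref{thm:partially.interactive} at the scale transition $L_{k-1}\leadsto L_k$ (rather than $L_k\leadsto L_{k+1}$ as in its statement): the argument goes through verbatim and produces, for each fixed separable PI pair of $L_k$-cubes, the intermediate bound \eqref{eq:bound.PI} transcribed to scale $L_k$, namely
\[
\DP\{\rB_{\Bu^{(1)},\Bu^{(2)}}\} \;\leq\; L_k^{-4^N p} + L_k^{-4p\,4^{N-n}}.
\]
Here it is important to keep the two summands separate rather than absorbing them into $L_k^{-2p\,4^{N-n}}$, since the statement of Lemma~\ref{lem:MND} records both.

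Combining the combinatorial count of pairs with this pointwise probability estimate via the union bound then yields exactly
\[
\DP\bigl\{ M_{\pai}(\BC_{L_{k+1}}^{(n)}(\Bu),I)\geq \kappa(n)+2 \bigr\} \;\leq\; \tfrac{3^{2nd}}{2}\,L_{k+1}^{2nd}\bigl(L_k^{-4^N p}+L_k^{-4p\,4^{N-n}}\bigr),
\]
which is \eqref{eq:MND}. There is no genuine obstacle: the entire argument is essentially bookkeeping once Lemma~\ref{lem:MPI} and Theorem~\ref{thm:partially.interactive} are in hand. The one subtlety worth flagging is the \emph{scale shift}: Theorem~\ref{thm:partially.interactive} must be invoked at the previous scale, which is precisely why the hypothesis of Lemma~\ref{lem:MND} calls upon $\dsn{k-1,n'}$ rather than $\dsn{k,n'}$, and the standing lower bound $L_0\geq L_1^*(N,d)$ continues to suffice.
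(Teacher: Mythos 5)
Your proposal is correct and follows the same path as the paper's own proof: pass via (the PI-version of) Lemma~\ref{lem:MPI} from $M_{\pai}\geq\kappa(n)+2$ to the existence of two separable $(E,m)$-singular PI cubes at scale $L_k$, count candidate pairs of centers by $\tfrac{3^{2nd}}{2}L_{k+1}^{2nd}$, and bound the per-pair probability by the intermediate estimate \eqref{eq:bound.PI} from the proof of Theorem~\ref{thm:partially.interactive} applied at the previous scale step (which is exactly why $\dsn{k-1,n'}$ appears in the hypothesis). Your observation about keeping the two summands $L_k^{-4^Np}$ and $L_k^{-4p\,4^{N-n}}$ separate, rather than merging them, also matches how the paper states the bound.
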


\begin{proof}
Suppose that $M_{\pai}(\BC^{(n)}_{L_{k+1}}(\Bu),I)\geq \kappa(n)+2$, then by Lemma \ref{lem:MPI}, $M_{\pai}^{\sep}(\BC^{(n)}_{L_{k+1}}(\Bu), I)\geq 2$, i.e., there are at least two separable $(E,m)$-singular PI cubes $\BC^{(n)}_{L_k}(\Bu^{(j_1)})$, $\BC^{(n)}_{L_k}(\Bu^{(j_2)})$ inside $\BC^{(n)}_{L_{k+1}}(\Bu)$.
The  number of possible pairs of centers $\{\Bu^{(j_1)},\Bu^{(j_2)}\}$ such that
\[
\BC_{L_k}^{(n)}(\Bu^{(j_1)}),\,\BC_{L_k}^{(n)}(\Bu^{(j_2)})\subset\BC_{L_{k+1}}^{(n)}(\Bu)
\]
 is bounded by $\frac{3^{2nd}}{2}L_{k+1}^{2nd}$. Then, setting 
\[
\rB_k=\{\text{$\exists E\in I$, $\BC^{(n)}_{L_k}(\Bu^{(j_1)})$, $\BC^{(n)}_{L_k}(\Bu^{(j_2)})$ are $(E,m)$-S}\},
\]
\[
\DP\left\{M_{\pai}^{\sep}(\BC_{L_{k+1}}^{(n)}(\Bu),I)\geq 2\right\}\leq\frac{3^{2nd}}{2}L_{k+1}^{2nd}\times\prob{\rB_k}
\]
with  $\prob{\rB_k}\leq L_k^{-4^Np}+L_k^{-4p\,4^{N-n}}$ by \eqref{eq:bound.PI}.
Here $\rB_k$ is defined as in Theorem \ref{thm:partially.interactive}.

\end{proof}

\subsection{Pairs of fully interactive cubes}\label{ssec:FI.cubes}
Our aim now is to prove $\dskonn$ for a pair of separable fully interactive cubes $\BC_{L_{k+1}}^{(n)}(\Bx)$ and $\BC_{L_{k+1}}^{(n)}(\By)$.
The main result of this subsection is Theorem \ref{thm:fully.interactive}. We need the following preliminary result.

\begin{lemma}\label{lem:MD}
Given $k\geq0$, assume that property $\dsknn$ holds true for all pairs of separable FI cubes. Then for any $\ell\geq 1$
\begin{equation}\label{eq:MD}
\DP\left\{M_{\fui}(\BC_{L_{k+1}}^{(n)}(\Bu),I)\geq 2\ell\right\}\leq C(n,N,d,\ell)L_k^{2\ell dn\alpha}L_k^{-2\ell p\,4^{N-n}}.
\end{equation}
\end{lemma}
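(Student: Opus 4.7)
The plan is to show that the event $\{M_{\fui}(\BC_{L_{k+1}}^{(n)}(\Bu),I)\geq 2\ell\}$ decomposes into the existence of $\ell$ disjoint pairs of separable FI cubes of radius $L_k$, all $(E,m)$-S for a common energy $E\in I$, and then to bound its probability by a union bound over the possible centers combined with $\dsknn$ applied to each pair. The delicate ingredient will be that the $\ell$ pair-events turn out to be \emph{independent}, thanks to the disjointness of the $\DZ^d$-projections of FI cubes provided by Lemma \ref{lem:disjointness}.

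First I would unfold the definition: if $M_{\fui}(\BC_{L_{k+1}}^{(n)}(\Bu),I)\geq 2\ell$, then there exist $E\in I$ and FI cubes $\BC_{L_k}^{(n)}(\Bu^{(1)}),\dots,\BC_{L_k}^{(n)}(\Bu^{(2\ell)})\subset\BC_{L_{k+1}}^{(n)}(\Bu)$, pairwise at distance $>7NL_k$, all $(E,m)$-S. By the footnote in the definition of $M_{\fui}$, each pair is automatically separable, so $\dsknn$ applies to it. Lemma \ref{lem:disjointness} (valid since $L_k>2r_0$ for $L_0$ large, and since $7NL_k\geq 7nL_k$) yields the stronger statement that the $2\ell$ lattice projections $\varPi\BC_{L_k}^{(n)}(\Bu^{(j)})\subset\DZ^d$ are \emph{pairwise} disjoint.

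Next I would fix the canonical pairing $(\BC_{L_k}^{(n)}(\Bu^{(2i-1)}),\BC_{L_k}^{(n)}(\Bu^{(2i)}))$, $i=1,\dots,\ell$, and introduce the events
\[
\rB_i := \bigl\{\exists\,E_i\in I : \BC_{L_k}^{(n)}(\Bu^{(2i-1)})\text{ and }\BC_{L_k}^{(n)}(\Bu^{(2i)})\text{ are both }(E_i,m)\text{-S}\bigr\}.
\]
Allowing distinct energies $E_i$ only enlarges each event, hence
\[
\bigl\{\exists\,E\in I: \text{all }\BC_{L_k}^{(n)}(\Bu^{(j)})\text{ are }(E,m)\text{-S}\bigr\}\subset\bigcap_{i=1}^{\ell}\rB_i.
\]
The number of admissible tuples $(\Bu^{(1)},\dots,\Bu^{(2\ell)})\in\bigl(\BC_{L_{k+1}}^{(n)}(\Bu)\bigr)^{2\ell}$ is at most $(3L_{k+1})^{2\ell nd}\leq C(n,N,d,\ell)\,L_k^{2\ell nd\alpha}$, using $L_{k+1}\leq L_k^{\alpha}+1$; this produces the combinatorial prefactor in \eqref{eq:MD}.

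The main obstacle, and the place where Lemma \ref{lem:disjointness} enters decisively, is the mutual independence of $\rB_1,\dots,\rB_\ell$. Each $\rB_i$ is measurable with respect to the $\sigma$-algebra generated by $\{V(x,\omega):x\in\varPi\BC_{L_k}^{(n)}(\Bu^{(2i-1)})\cup\varPi\BC_{L_k}^{(n)}(\Bu^{(2i)})\}$, and by the projection disjointness established above these index sets are pairwise disjoint across different $i$. Since the $V(x,\omega)$ are i.i.d., the $\rB_i$ are mutually independent, and $\dsknn$ provides $\DP(\rB_i)\leq L_k^{-2p\cdot 4^{N-n}}$ for every $i$. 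Multiplying the $\ell$ factors and combining with the tuple count then yields \eqref{eq:MD}.
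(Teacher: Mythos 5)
Your proposal is correct and follows essentially the same route as the paper's proof: invoke Lemma \ref{lem:disjointness} to obtain disjoint $\DZ^d$-projections for the FI cubes, deduce mutual independence of the pair-events, apply $\dsknn$ to each pair, and count the centers combinatorially. If anything, you are slightly more careful than the paper, which asserts independence of the events $\rA_1,\dots,\rA_\ell$ after only mentioning within-pair disjointness, whereas you spell out that cross-pair disjointness of all $2\ell$ projections is what makes the $\ell$ events mutually independent.
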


\begin{proof}
Suppose there exist $2\ell$ pairwise separable, fully interactive cubes $\BC_{L_k}^{(n)}(\Bu^{(j)})$ $\subset\BC_{L_{k+1}}^{(n)}(\Bu)$, $1\leq j\leq 2\ell$. Then, by Lemma \ref{lem:disjointness}, for any pair $\BC_{L_k}^{(n)}(\Bu^{(2i-1)})$, $\BC_{L_k}^{(n)}(\Bu^{(2i)})$, the corresponding random Hamiltonians $\BH_{\BC_{L_k}^{(n)}(\Bu^{(2i-1)})}^{(n)}$ and $\BH^{(n)}_{\BC_{L_k}^{(n)}(\Bu^{(2i)})}$ are independent, and so are their spectra and their Green functions. For $i=1,\dots,\ell$ we consider the events:
\[
\rA_i=\left\{\exists\,E\in I:\BC_{L_k}^{(n)}(\Bu^{(2i-1)})\text{ and $\BC_{L_k}^{(n)}(\Bu^{(2i)})$ are $(E,m)$-S}\right\}.
\]
Then by assumption $\dsknn$, we have, for $i=1,\dots,\ell$,
\begin{equation}
\DP\left\{\rA_i\right\}\leq L_k^{-2p\,4^{N-n}},
\end{equation}
and, by independence of events $\rA_1,\dots,\rA_{\ell}$,
\begin{equation}
\DP\Bigl\{\bigcap_{1\leq i\leq\ell}\rA_i\Bigr\}=\prod_{i=1}^{\ell}\DP(\rA_i)\leq\bigl(L_k^{-2p\,4^{N-n}}\bigr)^{\ell}.
\end{equation}
To complete the proof, note that the total number of different families of $2\ell$ cubes $\BC_{L_k}^{(n)}(\Bu^{(j)})\subset\BC_{L_{k+1}}^{(n)}(\Bu)$, $1\leq j\leq 2\ell$, is bounded by
\[
\frac{1}{(2\ell)!}\left|\BC_{L_{k+1}}^{(n)}(\Bu)\right|^{2\ell}\leq C(n,N,\ell,d)L_{k}^{2\ell dn\alpha}.\qedhere
\]
\end{proof}

\begin{theorem}\label{thm:fully.interactive}
Let $1\leq n\leq N$. There exists $L_2^*=L_2^*(N,d)>0$ such that if $L_0\geq L_2^*$ and if for $k\geq 0$
\begin{enumerate}[\rm(i)]
\item
$\dsn{k-1,n'}$ for all $1\leq n'<n$ holds true,
\item
$\dsn{k,n}$ holds true for all pairs of FI cubes,
\end{enumerate}
then $\dskonn$ holds true
for any pair of separable FI cubes $\BC_{L_{k+1}}^{(n)}(\Bx)$ and $\BC_{L_{k+1}}^{(n)}(\By)$.
\end{theorem}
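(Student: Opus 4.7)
The plan is to follow the blueprint of Theorem~\ref{thm:partially.interactive}, substituting a direct count of singular subcubes at scale $L_k$ for the tunnelling reduction, which is unavailable for fully interactive cubes. Fix two separable FI cubes $\BC_{L_{k+1}}^{(n)}(\Bx)$ and $\BC_{L_{k+1}}^{(n)}(\By)$, and introduce
\begin{align*}
\rB_{k+1} &= \bigl\{\exists E\in I:\ \BC_{L_{k+1}}^{(n)}(\Bx),\ \BC_{L_{k+1}}^{(n)}(\By)\ \text{are both $(E,m)$-S}\bigr\},\\
\rR &= \bigl\{\exists E\in I:\ \text{neither $\BC_{L_{k+1}}^{(n)}(\Bx)$ nor $\BC_{L_{k+1}}^{(n)}(\By)$ is $E$-CNR}\bigr\},\\
\rS_{\Bz} &= \bigl\{M_{\pai}(\BC_{L_{k+1}}^{(n)}(\Bz),I)\geq \kappa(n)+2\bigr\}\cup \bigl\{M_{\fui}(\BC_{L_{k+1}}^{(n)}(\Bz),I)\geq 4\bigr\}
\end{align*}
for $\Bz\in\{\Bx,\By\}$.

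The first step is the deterministic inclusion $\rB_{k+1}\subset \rR\cup \rS_{\Bx}\cup \rS_{\By}$. If $\omega\in\rB_{k+1}\setminus\rR$, then for some $E\in I$ both cubes are $(E,m)$-S while at least one, say $\BC_{L_{k+1}}^{(n)}(\Bx)$, is $E$-CNR. Provided $L_0\geq \tilde L_2^{*}(J,N,d)$ with $J=\kappa(n)+5$, the contrapositive of Lemma~\ref{lem:CNR.NS} forces $M(\BC_{L_{k+1}}^{(n)}(\Bx),E)\geq \kappa(n)+6$; since $M\leq M_{\pai}+M_{\fui}$, either $M_{\pai}\geq \kappa(n)+2$ or $M_{\fui}\geq 5\geq 4$, which places $\omega\in \rS_{\Bx}$.

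I would then assemble the probabilistic bounds. Corollary~\ref{cor:Wegner} gives $\DP\{\rR\}\leq L_{k+1}^{-4^Np}$. By hypothesis~(i), Lemma~\ref{lem:MND} yields
\[
\DP\bigl\{M_{\pai}(\BC_{L_{k+1}}^{(n)}(\Bz),I)\geq \kappa(n)+2\bigr\}\leq \tfrac{3^{2nd}}{2}L_{k+1}^{2nd}\bigl(L_k^{-4^Np}+L_k^{-4p\cdot 4^{N-n}}\bigr),
\]
and by hypothesis~(ii), Lemma~\ref{lem:MD} applied with $\ell=2$ gives $\DP\{M_{\fui}(\BC_{L_{k+1}}^{(n)}(\Bz),I)\geq 4\}\leq C(n,N,d)\,L_k^{4dn\alpha-4p\cdot 4^{N-n}}$. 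Summing the five resulting contributions and using $L_{k+1}\sim L_k^{\alpha}$ with $\alpha=3/2$, the target bound $L_{k+1}^{-2p\cdot 4^{N-n}}$ reduces to the inequality $p\cdot 4^{N-n}>6nd$, which is binding at $n=N$ and holds thanks to $p>6Nd$; the $\rR$-term is of lower order since $4^Np\geq 4p\cdot 4^{N-n}$.

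The main obstacle will be the joint choice of the parameter $\ell$ in Lemma~\ref{lem:MD}: it must be small enough that the deduction ``$E$-CNR plus $(E,m)$-S'' still forces $M_{\fui}\geq 2\ell$ once $M_{\pai}$ is capped at $\kappa(n)+1$ (which forces $\ell\leq 2$), and simultaneously large enough that $L_k^{2\ell dn\alpha}$ is dominated by $L_k^{2\ell p\cdot 4^{N-n}}$ to the required degree (which forces $\ell\geq 2$ when $\alpha=3/2$ and $p$ is close to $6Nd$). These two constraints meet exactly at $\ell=2$, which is precisely why the threshold $p>6Nd$ is built into assumption~$\condP$; taking $L_0\geq L_2^{*}(N,d)$ large enough to absorb the polynomial prefactors then concludes.
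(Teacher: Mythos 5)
Your proposal matches the paper's own proof essentially line by line: the same events (with $\rS_{\Bz}$ presented as a union rather than as the threshold event on $M$, which is a cosmetic difference), the same deterministic inclusion via the contrapositive of Lemma~\ref{lem:CNR.NS} with $J=\kappa(n)+5$, the same split $M\leq M_{\pai}+M_{\fui}$ feeding into Lemma~\ref{lem:MND} and Lemma~\ref{lem:MD} with $\ell=2$, and the same arithmetic showing the target bound reduces to $p>6Nd$. No gaps.
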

Above, we used the convention that $\dsn{-1,n}$ means no assumption.

\begin{proof}
Consider a pair of separable FI cubes $\BC_{L_{k+1}}^{(n)}(\Bx)$, $\BC_{L_{k+1}}^{(n)}(\By)$ and set $J=\kappa(n)+5$. Define
\begin{align*}
\rB_{k+1}
&=\left\{\exists\,E\in I:\BC_{L_{k+1}}^{(n)}(\Bx)\text{ and $\BC_{L_{k+1}}^{(n)}(\By)$ are $(E,m)$-S}\right\},\\
\Sigma
&=\left\{\exists\,E\in I:\text{neither $\BC_{L_{k+1}}^{(n)}(\Bx)$ nor $\BC_{L_{k+1}}^{(n)}(\By)$ is $E$-CNR}\right\},\\
\rS_{\Bx}
&=\left\{\exists\,E\in I:M(\BC_{L_{k+1}}^{(n)}(\Bx);E)\geq J+1\right\},\\
\rS_{\By}
&=\left\{\exists\,E\in I:M(\BC_{L_{k+1}}^{(n)}(\By),E)\geq J+1\right\}.
\end{align*}
Let $\omega\in\rB_{k+1}$. If $\omega\notin\Sigma\cup\rS_{\Bx}$, then $\forall E\in I$ either $\BC_{L_{k+1}}^{(n)}(\Bx)$ or $\BC_{L_{k+1}}^{(n)}(\By)$ is $E$-CNR and $M(\BC_{L_{k+1}}^{(n)}(\Bx),E)\leq J$. The cube $\BC_{L_{k+1}}^{(n)}(\Bx)$ cannot be $E$-CNR: indeed, by Lemma \ref{lem:CNR.NS} it would be $(E,m)$-NS. So the cube $\BC_{L_{k+1}}^{(n)}(\By)$ is $E$-CNR and $(E,m)$-S. This implies again by Lemma \ref{lem:CNR.NS} that
\[
M(\BC_{L_{k+1}}^{(n)}(\By),E)\geq J+1.
\]
Therefore $\omega\in\rS_{\By}$, so that $\rB_{k+1}\subset\Sigma\cup\rS_{\Bx}\cup\rS_{\By}$, hence
\[
\DP\left\{\rB_{k+1}\right\}
\leq\DP\{\Sigma\}+\DP\{\rS_{\Bx}\}+\DP\{\rS_{\By}\}
\]
and $\prob{\Sigma}\leq L_{k+1}^{-4^N\,p}$ by corollary \ref{cor:Wegner}.
Now let us estimate $\DP\{\rS_{\Bx}\}$ and similarly $\DP\{\rS_{\By}\}$. Since 
\[
M_{\pai}(\BC_{L_{k+1}}^{(n)}(\Bx),E)+M_{\fui}(\BC_{L_{k+1}}^{(n)}(\Bx),E)\geq M(\BC_{L_{k+1}}^{(n)}(\Bx),E),
\] 
the inequality $M(\BC_{L_{k+1}}^{(n)}(\Bx),E)\geq\kappa(n)+ 6$, implies that either $M_{\pai}(\BC_{L_{k+1}}^{(n)}(\Bx),E)\geq\kappa(n)+ 2$, or $M_{\fui}(\BC_{L_{k+1}}^{(n)}(\Bx),E)\geq 4$. Therefore, by Lemma \ref{lem:MND} and Lemma \ref{lem:MD} (with $\ell=2$),
\begin{align*}
\DP\{\rS_{\Bx}\}&\leq\DP\left\{\exists\,E\in I:M_{\pai}(\BC_{L_{k+1}}^{(n)}(\Bx),E)\geq \kappa(n)+2\right\}\\
&\quad+\DP\left\{\exists\,E\in I:M_{\fui}(\BC_{L_{k+1}}^{(n)}(\Bx),E)\geq 4\right\}\\
&\leq\frac{3^{2nd}}{2}L_{k+1}^{2nd}(L_k^{-4^Np}+L_k^{-4p\,4^{N-n}})+C'(n,N,d)L_{k+1}^{4 dn-\frac{4 p}{\alpha}4^{N-n}}\\
&\leq C''(n,N,d)\left(L_{k+1}^{-\frac{4^Np}{\alpha}+2nd}+L_{k+1}^{-\frac{4p}{\alpha}4^{N-n}+2nd}+L_{k+1}^{-\frac{4p}{\alpha}4^{N-n}+4nd}\right)\\
&\leq C'''(n,N,d) L_{k+1}^{-\frac{4p}{\alpha}4^{N-n}+4nd}\tag{$\alpha=3/2$}\\
&\leq\frac{1}{4}L_{k+1}^{-2p\,4^{N-n}},
\end{align*}
where we used $p>4\alpha Nd=6Nd$. Finally 
\[ 
\prob{\rB_{k+1}}\leq L_{k+1}^{-4^Np}+\frac{1}{2}L_{k+1}^{-2p4^{N-n}}<L_{k+1}^{-2p4^{N-n}}.
\]
\end{proof}

\subsection{Mixed pairs of cubes}\label{ssec:mixed.S}
Now it remains only to derive $\dskonn$ in case (III), i.e., for pairs of $n$-particle cubes where one is PI while the other is FI.

\begin{theorem}\label{thm:mixed}
Let $1\leq n\leq N$. There exists $L_3^*=L_3^*(N,d)>0$ such that if $L_0\geq L_3^*(N,d)$ and if for $k\geq 0$,
\begin{enumerate}[\rm(i)]
\item
$\dsn{k-1,n'}$ holds true for all $1\leq n'<n$,
\item
$\dsn{k,n'}$ holds true for all $1\leq n'<n$ and
\item
$\dsknn$ holds true for all pairs of FI cubes,
\end{enumerate}
then $\dskonn$ holds true for any pair of separable cubes $\BC_{L_{k+1}}^{(n)}(\Bx)$, $\BC_{L_{k+1}}^{(n)}(\By)$ where one is PI while the other is FI.
\end{theorem}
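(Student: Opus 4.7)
The plan is to combine the strategies of the two preceding theorems: treat the PI cube using Lemma~\ref{lem:NDRoNS} (its singularity forces either HNR failure or tunneling), and treat the FI cube using Lemma~\ref{lem:CNR.NS} (its singularity forces either CNR failure or a large count of singular sub-cubes). Assume without loss of generality that $\BC_{L_{k+1}}^{(n)}(\Bx)$ is PI and $\BC_{L_{k+1}}^{(n)}(\By)$ is FI, and set $J=\kappa(n)+5$.

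Introduce the events
\begin{align*}
\rB_{k+1} &= \{\exists E\in I:\ \BC_{L_{k+1}}^{(n)}(\Bx)\text{ and }\BC_{L_{k+1}}^{(n)}(\By)\text{ are }(E,m)\text{-S}\},\\
\rR &= \{\exists E\in I:\ \BC_{L_{k+1}}^{(n)}(\Bx)\text{ is not $E$-HNR and }\BC_{L_{k+1}}^{(n)}(\By)\text{ is not $E$-CNR}\},\\
\rT_{\Bx} &= \{\exists E\in I:\ \BC_{L_{k+1}}^{(n)}(\Bx)\text{ is }(E,m)\text{-T}\},\\
\rS_{\By} &= \{\exists E\in I:\ M(\BC_{L_{k+1}}^{(n)}(\By),E)\geq J+1\}.
\end{align*}
On $\rB_{k+1}$, Lemmas \ref{lem:NDRoNS} and \ref{lem:CNR.NS} simultaneously force (non-HNR or $(E,m)$-T) for the PI cube and (non-CNR or $M\geq J+1$) for the FI cube. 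Expanding this intersection into four cases and absorbing each mixed case into $\rR$, $\rT_{\Bx}$ or $\rS_{\By}$ (using $A\cap B\subset A$) yields
\[
\rB_{k+1} \subset \rR \cup \rT_{\Bx} \cup \rS_{\By}.
\]

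Each event is then estimated separately. For $\DP(\rT_{\Bx})$, Lemma \ref{lem:T.estimate} applies directly under hypothesis (ii), giving $\DP(\rT_{\Bx})\leq \tfrac{1}{2}L_{k+1}^{-4p\cdot 4^{N-n}}$. For $\DP(\rS_{\By})$, I would reproduce the decomposition $M\geq \kappa(n)+6$ into $M_{\pai}\geq \kappa(n)+2$ or $M_{\fui}\geq 4$ used in the proof of Theorem \ref{thm:fully.interactive}, and bound the two branches via Lemma \ref{lem:MND} (invoking hypothesis (i)) and Lemma \ref{lem:MD} with $\ell=2$ (invoking hypothesis (iii)), to obtain $\DP(\rS_{\By})\leq \tfrac{1}{4}L_{k+1}^{-2p\cdot 4^{N-n}}$. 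For $\DP(\rR)$, Lemma \ref{lem:HNR} provides an $E$-resonant sub-rectangle of $\BC_{L_{k+1}}^{(n)}(\Bx)$ and Definition \ref{def:E-CNR} provides an $E$-resonant sub-cube of $\BC_{L_{k+1}}^{(n)}(\By)$ of side $\geq L_{k+1}^{1/\alpha}$; these two $n$-particle rectangles inherit pre-separability from the separability of their parent cubes, since every $j$-projection only shrinks when passing to a sub-rectangle. Theorem \ref{thm:Wegner} combined with condition $\condP$ and the polynomial counting of centers and side-lengths of Corollary \ref{cor:Wegner} then yield $\DP(\rR)\leq L_{k+1}^{-4^N p}$. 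Summing and using $p>6Nd$ together with $4^N p,\ 4p\cdot 4^{N-n}>2p\cdot 4^{N-n}$ for $1\leq n\leq N$ absorbs the constants and gives $\DP(\rB_{k+1})<L_{k+1}^{-2p\cdot 4^{N-n}}$ for $L_0\geq L_3^*(N,d)$ large enough.

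The only step that is genuinely new compared with the PI and FI cases is the pre-separability transfer used in the bound on $\rR$: the two resonant sub-structures live in parent cubes of different geometric types, so one must check from Definition \ref{def:separability} that the $\CJ$-projection witnessing pre-separability of the parents still witnesses pre-separability of the chosen sub-rectangle and sub-cube. This is bookkeeping rather than a new obstacle, and no other step requires technology beyond what was already deployed in Theorems \ref{thm:partially.interactive} and \ref{thm:fully.interactive}.
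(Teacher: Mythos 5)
Your proposal matches the paper's proof essentially step for step: the same four events ($\Sigma$ in the paper is your $\rR$), the inclusion $\rB_{k+1}\subset\rR\cup\rT_{\Bx}\cup\rS_{\By}$ obtained from Lemmas \ref{lem:NDRoNS} and \ref{lem:CNR.NS}, and the three probability bounds imported from the PI and FI cases (Lemma \ref{lem:T.estimate}, the $M_{\pai}/M_{\fui}$ split with Lemmas \ref{lem:MND}/\ref{lem:MD}, and a Wegner estimate via Theorem \ref{thm:Wegner} and Lemma \ref{lem:HNR}). The only cosmetic difference is that you expand the inclusion into four cases while the paper argues by first removing $\Sigma\cup\rT_{\Bx}$ and then deducing $\rS_{\By}$ --- logically equivalent --- and you actually supply slightly more detail than the paper on the pre-separability transfer underlying the $\rR$ bound.
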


\begin{proof}
Consider a pair of separable $n$-particle cubes $\BC_{L_{k+1}}^{(n)}(\Bx)$, $\BC_{L_{k+1}}^{(n)}(\By)$ and suppose that $\BC_{L_{k+1}}^{(n)}(\Bx)$ is PI while $\BC_{L_{k+1}}^{(n)}(\By)$ is FI. Set $J=\kappa(n)+5$ and introduce the events
\begin{align*}
\rB_{k+1}
&=\left\{\exists\,E\in I:\BC_{L_{k+1}}^{(n)}(\Bx)\text{ and $\BC_{L_{k+1}}^{(n)}(\By)$ are $(E,m)$-S}\right\},\\
\Sigma
&=\left\{\exists\,E\in I: \BC_{L_{k+1}}^{(n)}(\Bx)\ \text{is not $E$-HNR and}\ \BC_{L_{k+1}}^{(n)}(\By)\text{ is not $E$-CNR}\right\},\\
\rT_{\Bx}
&=\left\{\exists E\in I: \BC_{L_{k+1}}^{(n)}(\Bx)\text{ is $(E,m)$-T}\right\},\\
\rS_{\By}
&=\left\{ \exists\,E\in I:M(\BC_{L_{k+1}}^{(n)}(\By),E)\geq J+1\right\}.
\end{align*}
Let $\omega\in \rB_{k+1}\setminus (\Sigma\cup \rT_{\Bx})$, then for all $E\in I$ either $\BC_{L_{k+1}}^{(n)}(\Bx)$ is $E$-HNR or $\BC_{L_{k+1}}^{(n)}(\By)$ is $E$-CNR and  $\BC_{L_{k+1}}^{(n)}(\Bx)$ is $(E,m)$-NT. The cube $\BC_{L_{k+1}}^{(n)}(\Bx)$ cannot be $E$-HNR. Indeed, by Lemma \ref{lem:NDRoNS} it would have been $(E,m)$-NS. Thus the cube $\BC_{L_{k+1}}^{(n)}(\By)$ is $E$-CNR, so by Lemma \ref{lem:CNR.NS}, $M(\BC_{L_{k+1}}^{(n)}(\By);E)\geq J+1$: otherwise $\BC_{L_{k+1}}^{(n)}(\By)$ would be $(E,m)$-NS. Therefore $\omega\in \rS_{\By}$. Consequently,
\[
\rB_{k+1}\subset \Sigma \cup \rT_{\Bx}\cup\rS_{\By}.
\]
Recall that the probabilities $\DP\{\rT_{\Bx}\}$ and $\DP\{\rS_{\By}\}$ have already been estimated in Sections \ref{ssec:PI.cubes} and \ref{ssec:FI.cubes}. We therefore obtain
\begin{align*}
\DP\left\{\rB_{k+1}\right\}&\leq \DP\{\Sigma\}+\DP\{\rT_{\Bx}\}+\DP\{\rS_{\By}\}\\
&\leq L_{k+1}^{-4^Np}+ \frac{1}{2}L_{k+1}^{-4p\,4^{N-n}}+ \frac{1}{4}L_{k+1}^{-2p\,4^{N-n}}\leq L_{k+1}^{-2p\,4^{N-n}}.\qedhere
\end{align*}
\end{proof}


\subsection{Conclusion}
\begin{theorem}\label{thm:DS.k.N}
Let $1\leq n\leq N$ and  $\BH^{(n)}(\omega)=-\BDelta+\sum_{j=1}^nV(x_j,\omega)+\BU$, where $\BU$, $V$ satisfy
 $\condI$ and $\condP$  respectively. Then for any $p>6Nd$ there exist
 $m>0$ and $E^*>E_0^{(N)}$  such that $\dsknn$ holds true for all $k\geq 0$ provided $L_0$ is large enough.
\end{theorem}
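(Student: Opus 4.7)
The plan is to fix the parameters $m$ and $E^*$ once and for all as in Theorem \ref{thm:initial.estimate}, then choose $L_0$ large enough to satisfy simultaneously the lower bounds $L_0 \geq \max\{L_1^*, L_2^*, L_3^*, \tilde L_2^*\}$ required by Theorems \ref{thm:partially.interactive}, \ref{thm:fully.interactive}, \ref{thm:mixed} and Lemma \ref{lem:CNR.NS}, as well as the threshold required for the initial scale bound. The strategy is a double induction: an outer induction on the particle number $n$ from $1$ to $N$, and, for each fixed $n$, an inner induction on the scale index $k \geq 0$.

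The base of the outer induction ($n=1$) is the standard single-particle MSA. For $n=1$ there is no distinction between PI and FI cubes (every $1$-particle cube is FI in our terminology), and so Theorem \ref{thm:fully.interactive} applied with no PI hypothesis (the convention $\dsn{-1,n}$ meaning no assumption), together with the initial scale estimate Theorem \ref{thm:initial.estimate}, yields $\dsn{k,1}$ for all $k \geq 0$ by induction on $k$.

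For the outer inductive step, assume that $\dsn{k,n'}$ holds for every $k \geq 0$ and every $1 \leq n' < n$; we then establish $\dsn{k,n}$ for every $k \geq 0$ by inner induction on $k$. The base case $k=0$ is furnished by Theorem \ref{thm:initial.estimate}, which provides $\dsn{0,n}$. Suppose now that $\dsn{k,n}$ holds. To derive $\dsknn$ with $k$ replaced by $k+1$, we treat separately the three types of pairs of separable $n$-particle cubes $\BC^{(n)}_{L_{k+1}}(\Bx), \BC^{(n)}_{L_{k+1}}(\By)$:
\begin{enumerate}[\rm(I)]
\item Both FI: apply Theorem \ref{thm:fully.interactive}. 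Its hypothesis (i) $\dsn{k-1,n'}$ for $n'<n$ follows from the outer induction, and (ii) $\dsn{k,n}$ restricted to FI pairs is part of the inner inductive hypothesis.
\item Both PI: apply Theorem \ref{thm:partially.interactive}. Its hypothesis $\dsn{k,n'}$ for $n'<n$ is again furnished by the outer induction.
\item Mixed: apply Theorem \ref{thm:mixed}. Its three hypotheses reduce to $\dsn{k-1,n'}, \dsn{k,n'}$ for $n'<n$ (outer induction) and $\dsknn$ for FI pairs, which has just been established in step (I).
\end{enumerate}
In all three cases the conclusion is the same bound $L_{k+1}^{-2p \cdot 4^{N-n}}$, so $\dsn{k+1,n}$ holds for every pair of separable cubes. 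This closes the inner induction on $k$; closing the outer induction on $n$ then yields the theorem.

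The only delicate point is to verify that the three sub-theorems can indeed be chained in the above order without circular dependence. Steps (I)--(III) must be executed sequentially, not simultaneously, because the mixed case genuinely requires the FI conclusion at scale $L_{k+1}$. The remaining work is purely book-keeping: one must also check that the single choice of $L_0$ made at the outset is compatible with the finitely many lower bounds $L_j^*(N,d)$ arising throughout the three theorems, and that the hypothesis $p > 6Nd$ (used already in Corollary \ref{cor:Wegner} and in Theorems \ref{thm:partially.interactive} and \ref{thm:fully.interactive} to absorb the combinatorial factors $L_{k+1}^{O(nd)}$) is preserved uniformly in $n \leq N$.
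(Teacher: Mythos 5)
Your double induction --- outer on the particle number $n$, inner on the scale index $k$, with the $k=0$ base from Theorem \ref{thm:initial.estimate} and the $k\to k+1$ step split into the FI/PI/mixed Theorems \ref{thm:fully.interactive}, \ref{thm:partially.interactive}, \ref{thm:mixed} --- is exactly the paper's proof. Two small remarks. First, hypothesis (iii) of Theorem \ref{thm:mixed} asks for $\dsknn$ for FI pairs, i.e.\ at scale $L_k$; this is supplied by the inner inductive hypothesis, not by your step (I), which produces $\dskonn$ for FI pairs at scale $L_{k+1}$. The three sub-cases are therefore logically independent at each step, and your claim that they must be chained sequentially because ``the mixed case genuinely requires the FI conclusion at scale $L_{k+1}$'' is incorrect --- it requires it only at scale $L_k$. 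This does not break your argument (the needed bound is available either way), but the ``delicate point'' you flag is not actually there. Second, for the base case $n=1$ the paper simply invokes single-particle MSA together with the Wegner bound of Corollary \ref{cor:Wegner}, whereas you argue that the multi-particle machinery degenerates correctly: every single-particle cube is FI, the PI-related hypotheses become vacuous, and Theorem \ref{thm:fully.interactive} alone carries the inner induction. This is a cleaner, more self-contained route and is correct, though it implicitly requires checking that the lemmas of Section \ref{sec:MP.induction} remain sensible with $n=1$ (e.g.\ $\kappa(1)=1$ and $M_{\pai}\equiv 0$), which the paper does not spell out.
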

\begin{proof}
We show by induction on $n$ that $\dsknn$ holds true for all $k\geq 0$. Notice that the parameters $m, E^*>0$ are given by Theorem \ref{thm:initial.estimate}.  For $n=1$ and owing to the Wegner estimates Corollary \ref{cor:Wegner} for Log-H\"older continuous distributions, property $\dsn{k,1}$ holds true for all $k\geq0$ by  single particle localization theory (\cites{DK89,K08}). Now suppose that for all $1\leq n'<n$, $\dsn{k,n'}$ holds true for all	 $k\geq 0$, we aim to prove that $\dsknn$ holds true for all $k\geq 0$. For $k=0$, $\dsn{0,n}$ is valid using Theorem \ref{thm:initial.estimate}. Next, suppose that $\dsn{k',n}$ holds true for all $k'<k$, then combining this last assumption with $\dsn{k,n'}$ above, one can conclude that
\begin{enumerate}
\item[\rm(i)] $\dsknn$ holds true for all $k\geq 0$ and for all pairs of PI cubes using Theorem \ref{thm:partially.interactive},
\item[\rm(ii)] $\dsknn$ holds true for all $k\geq 0$ and for all pairs of FI cubes using Theorem \ref{thm:fully.interactive},
\item[\rm(iii)] $\dsknn$ holds true for all $k\geq 0$ and for all pairs of MI cubes using Theorem \ref{thm:mixed}.
\end{enumerate}
 Hence Theorem \ref{thm:DS.k.N} is proven.
\end{proof}


\section{Proof of the results}\label{sec:proof.results}

\subsection{Proof of Theorem \ref{thm:inf.spectrum}}
Assumption $\condI$ implies that the interaction $\BU$ is non-negative and $\condP$ implies that the random potential is non-negative almost surely. So
$\sigma(\BH^{(n)}(\omega))\subset [0,+\infty)$ almost surely.
 It remains to prove that $[0,4nd]\subset \sigma(\BH^{(n)}(\omega))$ \emph{a.s.}
Let $k,m\in\DN^*$. Define $I_{k,m}=[0,\frac{1}{km}]$ and 
\[
\rB_{k,m}=\left\{\Bx\in\DZ^{nd}: \min_{i\neq j}|x_i-x_j|>r_0+2km\right\}
\]
where $r_0$ is the range of the interaction $\BU$. For $k,m\in\DN^*$, introduce the following sequence $\{\Bx^{\ell,k,m}\}_{\ell\in\DN^*}$ defined by 
\[
\Bx^{\ell,k,m}=C_{k,m}(C_{k,m}\ell+1,C_{k,m}\ell+2,\cdots,C_{k,m}\ell+nd)\in\DZ^{nd},
\]
where $C_{k,m}=r_0+2km+Nd+1$. Obviously $\Bx^{\ell,k,m}\in \rB_{k,m}$ for any $\ell\in\DN^*$. Using the identification $\DZ^{nd}\cong(\DZ^d)^n$, $\Bx^{\ell,k,m}$ write as $\Bx^{\ell,k,m}=C_{k,m}(\Bx_1,\ldots,\Bx_n)$ with $\Bx_i=((i-1)d+1+C_{k,m}\ell,\ldots, id+C_{k,m}\ell)$, $i=1,\ldots,n$.

We claim that for $\ell\neq \ell'$, $\varPi\BC^{(n)}_{km}(\Bx^{\ell,k,m})\cap\varPi\BC^{(n)}_{km}(\Bx^{\ell',k,m})=\emptyset$. Indeed 
\begin{align*}
\dist\left(\varPi\BC^{(n)}_{km}(\Bx^{\ell,k,m}),\varPi\BC^{(n)}_{km}(\Bx^{\ell',k,m})\right)&=
\min_{1\leq i,j\leq n}\dist\left(\BC^{(n)}_{km}(\Bx^{\ell,k,m}_i),\BC^{(n)}_{km}(\Bx^{\ell',k,m}_j)\right)\\
&\geq \min_{i,j}\left|\Bx^{\ell,k,m}_i-\Bx^{\ell',k,m}_j\right|-2km.
\end{align*}
For any $i,j\in\{1,\ldots,n\}$,
\begin{align*}
\left|\Bx^{\ell,k,m}_i-\Bx^{\ell',k,m}_j\right|&=C_{k,m}|(i-j)d+C_{k,m}(\ell-\ell')|\\
&\geq C_{k,m}^2|\ell-\ell'|-C_{k,m}|i-j|d.
\end{align*}
Thus 
\begin{align*}
\dist\left(\varPi\BC^{(n)}_{km}(\Bx^{\ell,k,m}),\varPi\BC^{(n)}_{km}(\Bx^{\ell',k,m})\right)&\geq C_{k,m}^2-C_{k,m}\max_{i,j}|i-j|d-2km\\
&\geq C_{k,m}^2-C_{k,m}Nd-2km\\
&=C_{k,m}(C_{k,m}-Nd)-2km>0.
\end{align*}
For any $\ell,k,m\in\DN^*$, define 
\[
\Omega_{\ell,k,m}(\Bx^{\ell,k,m})=\left\{\omega\in\Omega:\forall j=1,\ldots,n, \forall y_j\in C^{(1)}_{km}(x_j^{\ell,k,m}), V(y_j,\omega)\in I_{km}\right\}.
\]
We have that $\prob{\Omega_{\ell,k,m}(\Bx^{\ell,k,m})}=\mu(I_{km})^{n(2km+1)^d}$ and the latter quantity is positive since $0\in\supp\mu$. So $\sum_{\ell\geq1}\prob{\Omega_{\ell,k,m}(\Bx^{\ell,k,m})}=+\infty$. Since $\varPi\BC^{(n)}_{km}(\Bx^{\ell,k,m})\cap\varPi\BC^{(n)}_{km}(\Bx^{\ell',k,m})=\emptyset$, the corresponding events $\Omega_{\ell,k,m}(\Bx^{\ell,k,m})$ and $\Omega_{\ell,k,m}(\Bx^{\ell',k,m})$ are independent. Set
\[
\Omega_{k,m}=\left\{\omega:\Omega_{\ell,k,m}(\Bx^{\ell,k,m})\text{ occurs for infinitely many $\ell\geq 1$}\right\},
\]
the Borel-Cantelli Lemma implies that $\prob{\Omega_{k,m}}=1$. Note that if $\omega\in\Omega_{k,m}$, then $\exists \ell\geq 1$ such that $\omega\in\Omega^{\ell,k,m}(\Bx^{\ell,k,m})$. Define
\[
\Omega_{\infty}=\bigcap_{k\geq 1}\bigcap_{m\geq 1}\Omega_{k,m}.
\]
We have that $\prob{\Omega_{\infty}}=1$. Now, let $\omega\in\Omega_{\infty}$. For this $\omega$ we show that $[0,4nd]\subset \sigma(\BH^{(n)}(\omega))$. Since $\omega\in\Omega_{\infty}$, we have that for any $k,m\geq 1$, $\exists\ell^*\geq 1$ such that $\omega \in\Omega_{\ell^*,k,m}(\Bx^{\ell^*,k,m})$. Recall that $[0,4nd]=\sigma(-\BDelta)$. Let $E\in [0,4nd]$. There exists a Weyl sequence $\phi^E_m$ for $E$ and $-\BDelta$ with a bounded support, i.e., $\|\phi^E_m\|=1$, $\|((-\BDelta)-E)\phi^E_m\|\rightarrow 0 $ as $m\rightarrow \infty$ and $\supp\phi^E_m\subset \BC^{(n)}_{k_E m}(\Bzero)$ for some $k_E\in\DN^*$. The translated sequence $\widetilde{\phi}^E_m$ defined by $\widetilde{\phi}^E_m(\Bx)=\phi^E_m(\Bx-\Bx^{\ell^*,k_E,m})$ is also a Weyl sequence for $E$ and $(-\BDelta)$, i.e,
\begin{equation}\label{eq:Delta.phi.m}
\|((-\BDelta)-E)\widetilde{\phi}^E_m\| \tto{m\to\infty} 0,
\end{equation}
 and $\supp \widetilde{\phi}^E_m\subset \BC^{(n)}_{k_E m}(\Bx^{\ell^*,k_E,m})$. Further, for all $\By\in\DZ^{nd}$ one has 
\begin{equation}\label{eq:V.supp.phi}
\left|\BV(\By,\omega)\widetilde{\phi}^E_m(\By)\right|\leq \frac{n}{k_E m}|\widetilde{\phi}^E_m(\By)|.
\end{equation}
Indeed, both sides of \eqref{eq:V.supp.phi} vanish outside
$\BC^{(n)}_{k_E m}(\Bx^{\ell^*,k_E,m})$, while on  $\BC^{(n)}_{k_E m}(\Bx^{\ell^*,k_E,m})$ it follows directly from our choice
of  $\omega$. Therefore,
\begin{equation}\label{eq:norm.V.phi}
\left \| (\BV(\omega)) \widetilde{\phi}^E_m  \right\| \leq \frac{n}{k_E m} \|\widetilde{\phi}^E_m \|,
\end{equation}
while $\BU \widetilde{\phi}^E_m = 0$.
Collecting \eqref{eq:Delta.phi.m} and \eqref{eq:norm.V.phi},
we conclude that
\begin{align*}
\| (\BH(\omega)-E) \widetilde{\phi}_m\| \le
\|((-\BDelta)-E)\widetilde{\phi}_m\| + \| (\BV) \widetilde{\phi}_m\| \tto{m\to\infty} 0.
\end{align*}
Thus $\{ \widetilde{\phi}^E_m, m\in\DN^*\}$ is indeed a Weyl sequence for $\BH^{(n)}(\omega)$ and $E$. Therefore $E\in\sigma(\BH^{(n)}(\omega))$. Finally, since $\prob{\Omega_{\infty}}=1$, we conclude that $[0,4nd]\subset \sigma(\BH^{(n)}(\omega))$ almost surely. $\qed$

\subsection{Proof of Theorem \ref{thm:low.energy.exp.loc}} \label{sec:low.energy.exp.loc}

Now we will derive from the results of Section \ref{sec:MP.induction} spectral exponential localization in the low energy regime. We will use the well-known fact (\cites{B68,K08,S83}) that almost every energy $E$ with respect to the spectral measure of $\BH^{(N)}(\omega)$ is a generalised eigenvalue of $\BH^{(N)}(\omega)$, i.e., there is a polynomially bounded solution  of the equation $\BH^{(N)}(\omega)\BPsi=E\BPsi$. It suffices therefore to prove that with probability one, the generalised eigenfunctions of $\BH^{(N)}(\omega)$ decay exponentially fast at infinity.
Let $E\in [E_0^{(N)},E^*]$ be a generalised eigenvalue of $\BH^{(N)}(\omega)$. Following essentially \cites{DK89,CS09a,FMSS85}, we will prove that if the bound $\dskn$ is satisfied for all $k\geq 0$, with some $m>0$, then
\[
\forall \tilde{\rho}\in(0,1):\quad \limsup_{|\Bx|\to \infty} \ln \frac{|\BPsi(\Bx,\omega)|}{|\Bx|}\leq -\tilde{\rho}m.
\]
Given $\Bu\in \DZ^{Nd} $ and an integer $k\geq 0$, set, using the notations of Lemma \ref{lem:separability},
\[
R(\Bu):=\max_{\ell=1,\ldots,\kappa(N)}|\Bu-\Bu^{(\ell)}|,\quad b_k(\Bu):=7N+R(\Bu)L_k^{-1},\quad M_k(\Bu)=\bigcup_{\ell=1}^{\kappa(N)} \BC^{(N)}_{7NL_k}(\Bu^{(\ell)})
\]
and define
\[
A_{k+1}(\Bu):= \BC^{(N)}_{bb_{k+1}L_{k+1}}(\Bu)\setminus \BC^{(N)}_{b_kL_k}(\Bu),
\]
where the parameter $b>0$ is to be chosen later. One can easily check that:
 \[
M_k(\Bu)\subset \BC^{(N)}_{b_kL_k}(\Bu).
\]
Moreover, if $\Bx \in A_{k+1}(\Bu)$, then the cubes $\BC^{(N)}_{L_k}(\Bx)$ and $\BC^{(N)}_{L_k}(\Bu)$ are separable by Lemma \ref{lem:separability}.
Define the event
\[
\Omega_k(\Bu):=\bigl\{\exists E\in [E_0^{(N)},E^*], \text{ and $\Bx\in A_{k+1}(\Bu)$: $\BC^{(N)}_{L_k}(\Bx)$ and $\BC^{(N)}_{L_k}(\Bu)$ are $(E,m)$-S}\Bigr\}.
\]
Property $\dskn$ implies that
\begin{align*}
\prob{\Omega_k}&\leq \prob{\exists E\in (-\infty,E^*], \text{ and $\Bx\in A_{k+1}(\Bu)$: $\BC^{(N)}_{L_k}(\Bx)$ and $\BC^{(N)}_{L_k}(\Bu)$ are $(E,m)$-S}}\\
&\leq (2bb_{k+1}L_{k+1}+1)^{Nd}L_k^{-2p}\\
& \leq (2bb_{k+1}+1)^{Nd}L_k^{-2p+\alpha Nd},
\end{align*}
since $p>(\alpha Nd+1)/2$ (in fact $p>6Nd$)  we have $\sum_{k=0}^{\infty} \prob{\Omega_k(\Bu)}<\infty$. Thus, setting  
\[
\Omega_{<\infty}:=\{ \forall \Bu\in\DZ^{Nd}, \Omega_k(\Bu) \text{ occurs finitely many times}\},
\]
by the Borel-Cantelli Lemma and the countability of $\DZ^d$, we have that
\[
\prob{\Omega_{<\infty}}=1.
\]
Therefore, it suffices to pick $\omega\in\Omega_{<\infty}$ and prove the exponential decay of any nonzero generalised eigenfunction $\BPsi$ of $\BH^{(N)}(\omega)$. Since $\BPsi$ is polynomially bounded, there exist $C,t\in(0,\infty)$ such that for all $\Bx\in \DZ^{Nd}$
\[
|\BPsi(\Bx,\omega)|\leq C(|\Bx|)^t.
\]
Since $\BPsi$ is not identically zero, there exists $\Bu\in \DZ^{Nd}$ such that $\BPsi(\Bu)\neq 0$. Let us show that there is an integer $k_1=k_1(\omega,E,\Bu)$ such that $\forall k\geq k_1$, the cube $\BC^{(n)}_{L_k}(\Bu)$ is $(E,m)$-S. Indeed, given an integer $k\geq 0$, assume that $\BC^{(n)}_{L_k}(\Bu)$ is $(E,m)$-NS. Then by the geometric resolvent inequality for the eigenfunction, combined with the definition of an $(E,m)$-NS cube (cf. \eqref{eq:singular}), we have
\begin{align*}
|\BPsi(\Bu)|&\leq C(N,d)L_k^{Nd-1}\ee^{-mL_k}\cdot \max_{\Bv:|\Bv-\Bu|\leq L_k+1}|\BPsi(\Bv)|\\
&\leq C'(N,d)L_k^{Nd-1}\ee^{-mL_k}(1+|\Bu|+L_k)^t\tto{L_k\to\infty} 0.
\end{align*}
This shows that if $\BC^{(n)}_{L_k}(\Bu)$ is $(E,m)$-NS for arbitrary large values of $L_k$ (i.e., for an infinite number of values $k$), then $|\BPsi(\Bu)|=0$, in contradiction with the definition of the point $\Bu$. So there is an integer $k_1=k_1(\omega,E,\Bu)<\infty$ such that $\forall k \geq k_1$ the cube $\BC^{(N)}_{L_k}(\Bu)$ is $(E,m)$-S. At the same time since $\omega\in \Omega_{<\infty}$ there exists $k_2=k_2(\omega,\Bu)$ such that if $k\geq k_2	 $, $\Omega_{k}(\Bu)$ does not occurs. We conclude that $\forall k\geq \max \{k_1,k_2\}$ for all $\Bx\in A_{k+1}(\Bu)$, $\BC^{(N)}_{L_k}(\Bx)$ is $(E,m)$-NS.
Let $\rho\in(0,1)$ and choose $b$ such that
\[
b>\frac{1+\rho}{1-\rho},
\]
so that obviously $\BC^{(N)}_{\frac{b_{k}L_{k}}{1-\rho}}(\Bu)\subset\BC^{(N)}_{\frac{bb_{k+1}L_{k+1}}{1+\rho}}(\Bu)$.
Define
\[
\tilde{A}_{k+1}(\Bu)=\BC^{(N)}_{\frac{bb_{k+1}L_{k+1}}{1+\rho}}(\Bu)\setminus \BC^{(N)}_{\frac{b_kL_k}{1-\rho}}(\Bu)\subset A_{k+1}(\Bu).
\]
Fix $\Bx\in\tilde{A}_{k+1}(\Bu)$.
\begin{enumerate}
\item
Since $|\Bx-\Bu|>\frac{b_kL_k}{1-\rho}$,
\begin{align*}
\dist\Bigl(\Bx,\partial^+ \BC^{(N)}_{b_kL_k}(\Bu)\Bigr)&=|\Bx-\Bu|-b_kL_k\\
& > |\Bx-\Bu|-(1-\rho)|\Bx-\Bu|\\
&=\rho|\Bx-\Bu|.
\end{align*}
\item
Since $|\Bx-\Bu|\leq \frac{bb_{k+1}L_{k+1}}{1+\rho}$,
\begin{align*}
\dist\Bigl(\Bx,\partial^-\BC^{(N)}_{bb_{k+1}L_{k+1}}(\Bu)\Bigr)&=bb_{k+1}L_{k+1}-|\Bx-\Bu|\\
&\geq (1+\rho)|\Bx-\Bu|-|\Bx-\Bu|\\
&=\rho|\Bx-\Bu|.
\end{align*}
\end{enumerate}
The interior boundary of the annulus $A_{k+1}(\Bu)$ is given by $\partial^-A_{k+1}(\Bu)=\partial^-\BC^{(N)}_{bb_{k+1}L_{k+1}}(\Bu)\cup\partial^+\BC^{(N)}_{b_kL_k}(\Bu)$. Thus
\begin{align*}
\dist\Bigl(\Bx,\partial^- A_{k+1}(\Bu)\Bigr)&=\min\Bigl[\dist\Bigl(\Bx,\partial^+\BC^{(N)}_{b_kL_k}(\Bu)\Bigr),\dist\Bigl(\Bx,\partial^-\BC^{(N)}_{bb_{k+1}L_{k+1}}(\Bu)\Bigr)\Bigr]\\
&\geq \rho |\Bx-\Bu|.
\end{align*}
We have that if $|\Bx-\Bu|>b_0L_0/(1-\rho)$ then there exists $k\geq 0$ such that $\Bx\in \tilde{A}_{k+1}(\Bu)$. 

Now, let $k\geq \max\{k_1,k_2\}$, so  the cube $\BC^{(N)}_{L_k}(\Bx)$ must be $(E,m)$-NS. Hence by the geometric resolvent inequality for eigenfunctions, we get
\begin{equation}\label{eq:exp.decay.eigenf}
|\BPsi(\Bx)|\leq C(N,d)L_k^{Nd-1}\ee^{-m(1+L_k^{-1/8})L_k}|\BPsi(\Bv_1)| \quad \text{with $\Bv_1\in\partial^+\BC^{(N)}_{L_k}(\Bx)$}.
\end{equation}

If $\Bx\in\tilde{A}_{k+1}(\Bu)$ with $k\geq \max\{k_1,k_2\}$, we  can iterate the bound \eqref{eq:exp.decay.eigenf} at least $(L_k+1)^{-1}\rho|\Bx-\Bu|$ times and, using the polynomial bound on $\BPsi$ to obtain
\begin{equation}\label{eq:eigenf.bound}
|\BPsi(\Bx)|\leq \Bigl[C(N,d)L_k^{Nd-1}\ee^{-m(1+L_k^{-1/8})L_k}\Bigr]^{(L_k+1)^{-1}\rho|\Bx-\Bu|}C(1+|\Bu|+ bL_{k+1})^t.
\end{equation}
We can conclude that given $\rho'$, $0<\rho'<1$, we can find $k_3\geq \max\{k_1,k_2\}$ such that if $k\geq k_3$ then 
\[
|\BPsi(\Bx)| \leq \ee^{-\rho\rho'm|\Bx-\Bu|},
\]
if $|\Bx-\Bu|>\frac{b_{k_3}L_{k_3}}{1-\rho}$.
Finally, we see that
\[
\limsup_{|\Bx|\to \infty}\frac{1}{|\Bx|}\ln|\BPsi(\Bx)|\leq -\rho\rho'm.
\]
\subsection{Proof of Theorem \ref{thm:low.energy.dynamical.loc}}\label{sec:low.energy.dynamical.loc}

Dynamical localization via multi-scale analysis was proven initially by Germinet and De Bi\`evre \cite{GB98} and by Damanik and Stollmann (cf. \cites{DS01,St01}). In our work, we use a different approach, originally developed by Germinet and Klein \cite{GK01} in the framework of differential operators in $\DR^d$. This will allows  us to derive from the results of the multiparticle multi-scale analysis strong localization.

We need first to introduce necessary  notions and summarize in Theorem \ref{thm:spectral,densities} below some well-known results on expansions in generalized eigenfunctions for lattice Schr\"odinger operators which can be found, e.g., in the book \cite{K08}. Let $I\subset \DR$ and denote by $\nu(I)=P_I(H)$ the projection valued measure associated to $H$.  Further, given any pair of points  $n,m\in\DZ^D$, introduce a real valued Borel measure $\nu_{n,m}(\cdot)$ by
\[
\nu_{n,m}(I)=\left\langle \delta_n,\nu(I)\delta_m\right\rangle.
\]
Consider a sequence $\{\alpha_n\}_{n\in\DZ^D}$ with $\alpha_n>0$, $\sum \alpha_n=1$ and define a positive spectral measure $\rho(\cdot)$ by
\begin{equation}\label{eq:spectral,measure}
\rho(I)=\sum_{n\in\DZ^D}\alpha_n\nu_{n,n}(I).
\end{equation}
Observe that $\rho$ is a normalized Borel measure: $\rho(\DR)=1$. 

\begin{theorem}\label{thm:spectral,densities}
Let $\rho$ be a spectral measure for $H=\Delta+W(x)$ acting in $\ell^{2}(\DZ^D)$. Then there exist measurable functions $F_{n,m}:\DR\rightarrow\DR$ such that
\begin{equation}\label{eq:spectral,densities}
\left\langle \delta_n,f(H)\delta_m\right\rangle=\int f(\lambda)F_{n,m}(\lambda)d\rho(\lambda)
\end{equation}
for any bounded measurable function $f:\DR\rightarrow\DR$. Furthermore, the functions\\ $\BPsi^{(m,\lambda)}: n\to F_{n,m}(\lambda)$ on $\DZ^D$ satisfy
\[
H\BPsi =\lambda \BPsi \quad  \text{for $\rho$-a.e. $\lambda$},
\]
and are polynomially bounded, i.e,
\[
|\BPsi(n)|\leq C(1+|n|)^t, \quad \text{for some $C>0$ and $t>0$}.
\]
\end{theorem}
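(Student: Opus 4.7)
The plan is to follow the classical Berezansky--Schnol eigenfunction-expansion scheme, which is particularly clean in the lattice setting since $H$ is bounded and of finite range. I proceed in three stages.

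\emph{Stage 1 (densities and integral formula).} For any $n,m\in\DZ^D$ the complex Borel measure $\nu_{n,m}$ is absolutely continuous with respect to $\rho$: by the operator Cauchy--Schwarz inequality $|\nu_{n,m}(I)|^2\le\nu_{n,n}(I)\,\nu_{m,m}(I)$, and by \eqref{eq:spectral,measure} one has $\alpha_k\,\nu_{k,k}\le\rho$ for every $k\in\DZ^D$, so $\nu_{k,k}\ll\rho$ and hence $\nu_{n,m}\ll\rho$. The Radon--Nikodym theorem produces measurable densities $F_{n,m}=d\nu_{n,m}/d\rho$, and formula \eqref{eq:spectral,densities} then reduces to the functional-calculus identity $\langle\delta_n,f(H)\delta_m\rangle=\int f\,d\nu_{n,m}$.

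\emph{Stage 2 (eigenvalue equation).} Fix $m$. For each $n\in\DZ^D$ and every bounded Borel $g:\DR\to\DR$, using self-adjointness and boundedness of $H$ together with its finite range on the lattice,
\[
\int g(\lambda)\,\lambda\,F_{n,m}(\lambda)\,d\rho(\lambda)=\langle\delta_n,Hg(H)\delta_m\rangle=\sum_k H(n,k)\int g(\lambda)\,F_{k,m}(\lambda)\,d\rho(\lambda),
\]
where the sum is finite. Since $g$ is arbitrary, $\lambda\,F_{n,m}(\lambda)=\sum_k H(n,k)\,F_{k,m}(\lambda)$ for $\rho$-a.e. $\lambda$. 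Taking a countable union over $n\in\DZ^D$ gives a single $\rho$-null set outside which $H\BPsi^{(m,\lambda)}(n)=\lambda\,\BPsi^{(m,\lambda)}(n)$ simultaneously for every $n$.

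\emph{Stage 3 (polynomial bound) and main obstacle.} The weights $\{\alpha_n\}$ in the definition of $\rho$ are at our disposal, so I choose $\alpha_n=c_D(1+|n|)^{-2t}$ with $t>D/2$ and $c_D$ normalising $\sum\alpha_n=1$. The inequality $\alpha_n\nu_{n,n}\le\rho$ immediately gives $F_{n,n}(\lambda)\le c_D^{-1}(1+|n|)^{2t}$ for $\rho$-a.e. $\lambda$, and the pointwise Cauchy--Schwarz relation $|F_{n,m}(\lambda)|^2\le F_{n,n}(\lambda)F_{m,m}(\lambda)$ (the Radon--Nikodym form of the measure inequality used above) then produces
\[
|F_{n,m}(\lambda)|\le c_D^{-1}(1+|m|)^{t}(1+|n|)^{t}
\]
for $\rho$-a.e. $\lambda$. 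The only delicate point, and the one the whole argument hinges on, is that both the eigenvalue identity in Stage 2 and the pointwise envelope just derived hold $\rho$-a.e. with an exceptional set \emph{a priori} depending on $n$; countability of $\DZ^D$ is what lets us take a single union of null sets and push the $\rho$-a.e. quantifier outside the pointwise statement in $n$, yielding the claimed bound $|\BPsi^{(m,\lambda)}(n)|\le C(1+|n|)^t$ with $C=c_D^{-1}(1+|m|)^t$.
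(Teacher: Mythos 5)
Your argument is correct and is essentially the same Berezansky--Schnol eigenfunction-expansion scheme that the paper invokes by citing Proposition~7.4 of Kirsch \cite{K08}: absolute continuity of $\nu_{n,m}$ with respect to $\rho$ via operator Cauchy--Schwarz, the eigenvalue equation for the Radon--Nikodym densities by commuting $H$ past $g(H)$ and using finite range, and the polynomial bound from a polynomially decaying choice of weights $\alpha_n$ together with the pointwise Cauchy--Schwarz $|F_{n,m}|^2\le F_{n,n}F_{m,m}$. The only point worth flagging is that the pointwise Cauchy--Schwarz needs a short justification (positive semidefiniteness of the $2\times2$ matrix of densities, obtained $\rho$-a.e.\ by running the argument over a countable dense set of vectors), but this is standard and does not affect the correctness of the approach.
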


\begin{proof}
See the proof of Proposition 7.4 in \cite{K08}.
\end{proof}
Before we turn to the proof of Theorem \ref{thm:low.energy.dynamical.loc}, we need to make the following observations.
For every bounded set $\BK\subset \DZ^{Nd}$ figuring in Theorem \ref{thm:low.energy.dynamical.loc} there exists $k_0>0$ such that $\BK\subset\BC^{(N)}_{L_{k_0}}(\Bzero)$. Now  for $j\geq k_0$, set
\[
\BM_j(\Bzero)=\BC^{(N)}_{(7N+1)L_{j+1}}(\Bzero)\setminus \BC^{(N)}_{(7N+1)L_j}(\Bzero).
\]
Observe that for any  $\By\in \BC^{(N)}_{L_{j}}(\Bzero)$, $\BC^{(N)}_{7NL_j}(\By)\subset \BC^{(N)}_{(7N+1)L_j}(\Bzero)$. Then, if $\Bx\in\BM_j(\Bzero)$ and $\By\in\BC^{(N)}_{L_j}(\Bzero)$, we have that $|\By-\Bx|>7NL_j$. Since $\diam (\varPi\By)\leq 2L_j$, it follows that $|\By-\Bx|>\diam(\varPi\By) + 3NL_j$. Thus using assertion (B) of Lemma \ref{lem:separability}, the cubes $\BC^{(N)}_{L_j}(\Bx)$ and $\BC^{(N)}_{L_j}(\By)$ are separable. We need  the following statement establishing the decay of the kernels in the Hilbert-Schmidt norm.


\begin{lemma}\label{lem:decay,kernels}
Under assumptions $\condI$ and $\condP$. Then there exists an integer $k_1\geq 0$ such that for any bounded measurable function $f:\DR\rightarrow \DC$ all $j\geq k_1$, $\Bx\in\BM_j(\Bzero)$ and $\By\in\BC^{(N)}_{L_j}(\Bzero)$
\begin{equation}\label{eq:decay,kernels}
\esm{\sup_{\|f\|_{\infty}\leq 1}\left\|\delta_{\Bx}f(\BH)\BP_I(\BH)\delta_{\By}\right\|_{HS}^2}\leq \ee^{-mL_j/2}+L_j^{-2p},
\end{equation}
where $p>6Nd$.
\end{lemma}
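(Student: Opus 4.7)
The plan is to split the expectation according to a ``good'' event controlled by the MSA estimate $\dskn$, and estimate the complementary event trivially. Introduce
\[
\mathcal{G}_j := \bigl\{\omega:\forall\, E\in I,\ \BC^{(N)}_{L_j}(\Bx)\text{ is }(E,m)\text{-NS or }\BC^{(N)}_{L_j}(\By)\text{ is }(E,m)\text{-NS}\bigr\}.
\]
The geometric observation preceding the lemma ensures that $\BC^{(N)}_{L_j}(\Bx)$ and $\BC^{(N)}_{L_j}(\By)$ are separable. Applying property $\dskn$ from Theorem~\ref{thm:DS.k.N} with $n=N$ (so the exponent $-2p\cdot 4^{N-n}$ equals $-2p$) gives $\DP(\mathcal{G}_j^c)\le L_j^{-2p}$. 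Since $\delta_\Bx f(\BH)\BP_I(\BH)\delta_\By$ is a rank-one operator, its Hilbert--Schmidt norm equals $|\langle\delta_\Bx,f(\BH)\BP_I(\BH)\delta_\By\rangle|\le\|f\|_\infty\le 1$, and this yields the $L_j^{-2p}$ contribution on $\mathcal{G}_j^c$.

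On $\mathcal{G}_j$ the plan is to use Theorem~\ref{thm:spectral,densities} to rewrite
\[
\langle\delta_\Bx,f(\BH)\BP_I(\BH)\delta_\By\rangle=\int_I f(\lambda)\,F_{\Bx,\By}(\lambda)\,d\rho(\lambda),
\]
so that the supremum over $\|f\|_\infty\le 1$ reduces to the total variation $\int_I|F_{\Bx,\By}(\lambda)|\,d\rho(\lambda)$. For $\rho$-a.e.\ $\lambda\in I$ the map $\BPsi^{(\By,\lambda)}:\Bz\mapsto F_{\Bz,\By}(\lambda)$ is a polynomially bounded generalised eigenfunction of $\BH$ with eigenvalue $\lambda$, and by definition of $\mathcal{G}_j$ at least one of the two cubes is $(\lambda,m)$-NS. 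If it is $\BC^{(N)}_{L_j}(\Bx)$, then the geometric resolvent identity applied to $\BPsi^{(\By,\lambda)}$ on that cube gives
\[
|F_{\Bx,\By}(\lambda)|\le C(N,d)\,L_j^{Nd-1}\,e^{-\gamma(m,L_j,N)L_j}\max_{\Bv\in\partial^+\BC^{(N)}_{L_j}(\Bx)}|F_{\Bv,\By}(\lambda)|;
\]
the symmetric case where $\BC^{(N)}_{L_j}(\By)$ is the NS cube is handled by the identity $F_{\Bx,\By}=F_{\By,\Bx}$ coming from the self-adjointness and reality of $\BH$.

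The main obstacle is to integrate this pointwise bound uniformly in $\lambda$ without invoking $\lambda$-dependent polynomial constants from the eigenfunction expansion. I would circumvent this by applying the pointwise Cauchy--Schwarz inequality $|F_{n,m}(\lambda)|^2\le F_{n,n}(\lambda)F_{m,m}(\lambda)$ (valid $\rho$-a.e.\ because $\BP_I(\BH)\ge 0$) to bound the boundary maximum by $F_{\By,\By}(\lambda)^{1/2}\bigl(\max_\Bv F_{\Bv,\Bv}(\lambda)\bigr)^{1/2}$, and then another Cauchy--Schwarz in $\lambda$ together with the trivial bound $\int_I F_{n,n}\,d\rho=\nu_{n,n}(I)\le 1$ to arrive at
\[
\int_I|F_{\Bx,\By}(\lambda)|\,d\rho(\lambda)\le C'(N,d)\,L_j^{\frac{3Nd}{2}-1}\,e^{-\gamma(m,L_j,N)L_j}.
\]
Squaring, and choosing $k_1$ large enough to absorb the polynomial prefactor into the exponential (which is possible since $\gamma(m,L_j,N)\ge m$), gives the bound $e^{-mL_j/2}$ on $\mathcal{G}_j$; combining this with the $L_j^{-2p}$ estimate on $\mathcal{G}_j^c$ completes the proof.
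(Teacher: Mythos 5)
Your argument follows the same overall architecture as the paper's: split the expectation over the event $\mathcal{G}_j=\{\forall\lambda\in I,\ \BC^{(N)}_{L_j}(\Bx)\text{ or }\BC^{(N)}_{L_j}(\By)\text{ is }(\lambda,m)\text{-NS}\}$ and its complement, bound $\DP(\mathcal{G}_j^c)\le L_j^{-2p}$ via $\dskn$ with $n=N$ and the separability observation preceding the lemma, use the trivial bound $\|\delta_\Bx f_I(\BH)\delta_\By\|_{HS}\le\|f\|_\infty$ on $\mathcal{G}_j^c$, and on $\mathcal{G}_j$ combine Theorem \ref{thm:spectral,densities} with the geometric resolvent inequality on the NS cube. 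The one step where you genuinely diverge is how the boundary term is controlled after the GRI. The paper invokes the polynomial bound $|\BPsi^{(\By,\lambda)}(n)|\le C(1+|n|)^t$ on the generalized eigenfunction and plugs in $|\Bx|\lesssim L_{j+1}\sim L_j^\alpha$; this tacitly requires $C$ and $t$ to be uniform in $\lambda$ (and in $\By$), which does hold with a judicious choice of the weights $\alpha_n$ in \eqref{eq:spectral,measure}, but is not made explicit in the statement of Theorem~\ref{thm:spectral,densities}. You sidestep that issue entirely with the pointwise Cauchy--Schwarz $|F_{n,m}(\lambda)|^2\le F_{n,n}(\lambda)F_{m,m}(\lambda)$, a second Cauchy--Schwarz in $\lambda$, and the bound $\int_I F_{n,n}\,d\rho=\nu_{n,n}(I)\le 1$; the resulting polynomial prefactor (whose exact exponent, $3(Nd-1)/2$ rather than your $3Nd/2-1$, is immaterial) is then absorbed into the exponential for $j\ge k_1$ large. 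Both routes are valid; yours is slightly longer but self-contained, while the paper's is shorter and leans on a uniformity in $\lambda$ that the cited proposition does not quite spell out. One small imprecision is shared by both proofs: which of the two cubes is $(\lambda,m)$-NS can vary with $\lambda$, so strictly one should split $I$ into the set of $\lambda$ where $\BC^{(N)}_{L_j}(\Bx)$ is NS and its complement, applying the GRI on the appropriate cube over each piece; this changes nothing substantive, since the two pieces are handled symmetrically.
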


\begin{proof}
Define
\[
B_j:=\left\{\forall \lambda\in I, \text{either $\BC^{(N)}_{L_j}(\Bx)$ or $\BC^{(N)}_{L_j}(\By)$ is $(\lambda,m)$-NS}\right\}
\]
Consider a bounded measurable function $f:\DR\rightarrow\DC$ and set $f_I=f\chi_I$, where $\chi_I$ is the indicator function of the interval $I$. We have:
\begin{align*}
\left\|\delta_{\Bx}f_I(\BH)\delta_{\By}\right\|_{HS}^2&=\sum_{\Bz,\Bv\in \DZ^{Nd}}\left|\left\langle\delta_{\Bx}f_I(\BH)\delta_{\By}\delta_{\Bz},\delta_{\Bv}\right\rangle\right|^2\\
&=\sum_{\Bz,\Bv\in\DZ^{Nd}}\left|\left\langle f_{I}(\BH)\delta_{\By}\delta_{\Bz},\delta_{\Bx}\delta_{\Bv}\right\rangle\right|^2\\
&=\left|\left\langle f_I(\BH)\delta_{\By},\delta_{\Bx}\right\rangle\right|^2\\
&=\left|\left\langle \delta_{\Bx},f_I(\BH)\delta_{\By}\right\rangle\right|^2.
\end{align*}
If $\omega\in B_j$ then either $\BC^{(N)}_{L_j}(\Bx)$ or $\BC^{(N)}_{L_j}(\By)$ is $(\lambda,m)$-NS for all $\lambda\in I$. Since
\[
|\left\langle \delta_{\Bx},f_I(\BH)\delta_{\By}\right\rangle|=|\left\langle \delta_{\By},\overline{f_I}(\BH)\delta_{\Bx}\right\rangle|,
\]
we can assume without loss of generality that $\BC^{(N)}_{L_j}(\Bx)$ is $(\lambda,m)$-NS. Now using Theorem \ref{thm:spectral,densities}, we get
\begin{align*}
\left\|\delta_{\Bx}f_I(\BH)\delta_{\By}\right\|_{HS}&\leq \left| \left\langle \delta_{\Bx},f_I(\BH)\delta_{\By}\right\rangle\right|\\
&\leq \int_I \left|f(\lambda)\right| \left|F_{\Bx,\By}(\lambda)\right|d\rho(\lambda)
\end{align*}
and the function $\BPsi:\Bx\rightarrow F_{\Bx,\By}(\lambda)$ is polynomially bounded for $\rho$ a.e. $\lambda$.
Next, the geometric resolvent inequality for generalised eigenfunctions  gives
\begin{align*}
|\BPsi(\Bx)|&\leq \left|\partial \BC^{(N)}_{L_j}(\Bx)\right|\ee^{-mL_j}|\BPsi(\Bx')|\\
&\leq C(N,d)L_j^{Nd-1}\ee^{-mL_j}(1+|\Bx|+L_j)^t\\
&\leq C(N,d)L_j^{\alpha t+Nd-1}\ee^{-mL_j}<\ee^{-mL_j/2}
\end{align*}
for $j\geq k_1$ with $k_1\geq0$ large enough.
Yielding
\begin{align*}
\left\|\delta_{\Bx}f_I(\BH)\delta_{\By}\right\|_{HS} & \leq \|f\|_{\infty}\rho(I)\ee^{-mL_j/2}\\
&\leq \|f\|_{\infty}\ee^{-mL_j/2}.
\end{align*}
For $\omega\in B_j^c$, we have
\[
\left\|\delta_{\Bx}f_I(\BH)\delta_{\By}\right\|_{HS}=\left|\left\langle \delta_{\Bx},f_I(\BH)\delta_{\By}\right\rangle\right|\leq \|f\|_{\infty}.
\]
Finally, we can conclude that
\begin{align*}
\esm{\sup_{\|f\|_{\infty}\leq 1}\left\| \delta_{\Bx}f_I(\BH)\delta_{\By}\right\|^2_{HS}}&\leq \ee^{-mL_j/2}\prob{B_j}+ \prob{B_j^c}\\
&\leq \ee^{-mL_j/2}+L_j^{-2p}.
\end{align*}
Above we used $\dskn$ to bound $\prob{B_j^c}$.
\end{proof}

We are now ready to finish the proof of Theorem \ref{thm:low.energy.dynamical.loc}. Namely, let $\BK\subset\DZ^{Nd}$, $j\geq k_1$ as in Lemma \ref{lem:decay,kernels} and $p>6Nd$ the parameter appearing in the RHS of property $\dskn$. Set $s^*=\frac{2p}{\alpha}-Nd-1$. For any $s\in(0,s^*)$ and any bounded measurable function $f:\DR\rightarrow\DC$ we have
\begin{align*}
&\esm{\sup_{\|f\|_{\infty}\leq 1}\left\| |X|^{\frac{s}{2}}f_I(\BH)\Bone_{\BK}\right\|^2_{HS}}\\
&\leq \esm{\sup_{\|f\|_{\infty}
\leq 1}\left\|\Bone_{\BC^{(N)}_{(7N+1)L_{k_1}}(\Bzero)}|X|^{\frac{s}{2}}f_I(\BH)\Bone_{\BK}\right\|_{HS}^2}\\
&\qquad+\sum_{j\geq k_1}C_1(N,d)L_{j+1}^s \sum_{\Bx\in \BM_j(\Bzero)} \esm{
\sup_{\|f\|_{\infty}\leq 1}\left\| \delta_{\Bx}f_I(\BH)\Bone_{\BK}\right\|_{HS}^2} \\
&\leq \esm{\sup_{\|f\|_{\infty}
\leq 1}\sum_{\Bx\in\DZ^{Nd}}\left\|\Bone_{\BC^{(N)}_{(7N+1)L_{k_1}}(\Bzero)}|X|^{\frac{s}{2}}f_I(\BH)\Bone_{\BK}\delta_{\Bx}\right\|^2}\\
&\qquad+\sum_{j\geq k_1}C_1(N,d)L_{j+1}^s \sum_{\substack{\Bx\in \BM_j(\Bzero)\\ \By\in \BC^{(N)}_{L_{k_1}}(\Bzero)}} \esm{
\sup_{\|f\|_{\infty}\leq 1}\left\| \delta_{\Bx}f_I(\BH)\delta_{\By}\right\|_{HS}^2} \\
&\leq \esm{\sup_{\|f\|_{\infty}
\leq 1}\sum_{\Bx\in\BK}\left\|\Bone_{\BC^{(N)}_{(7N+1)L_{k_1}}(\Bzero)}|X|^{\frac{s}{2}}f_I(\BH)\delta_{\Bx}\right\|^2}\\
&\qquad+ C_2(N,d)L_{k_1}^{Nd}\sum_{j\geq k_1} L_j^{\alpha s+\alpha Nd}\left( \ee^{-mL_j/2} +L_j^{-2p}\right)\\
&\leq C_3(N,d,|\BK|)L_{k_1}^s + C_2(N,d)L_{k_1}^{Nd}\sum_{j\geq k_1} L_j^{\alpha s+\alpha Nd}\left( \ee^{-\frac{mL_j}{2}} +L_j^{-2p}\right)\\
&< \infty \tag{\text{since $s<\frac{2}{\alpha}p-Nd-1$}}
\end{align*}
\hfill $\qed$

\section{Appendix}\label{sec:appendix}


\subsection{Proof of Lemma \ref{lem:separability}}
(A) This assertion is a reformulation of Lemma 1 in \cite{CS09b}
 (see also Lemma 2.1 in \cite{BCS11}), so we omit the proof. \\
(B) Set $R(\By)=\max_{1\leq i,j\leq n}|y_i-y_j|+3NL$ and consider a cube $\BC^{(n)}_L(\Bx)$ with $|\By-\Bx|> R(\By)$. Then there exists $i_0\in\{1,\ldots,n\}$ such that $|y_{i_0}-x_{i_0}|>R(\By)$. Consider the maximal connected component $\Lambda_{\Bx}:=\bigcup_{i\in\CJ} C^{(1)}_{L}(x_i)$ of the union $\bigcup_i C^{(1)}_L(x_i)$ containing $x_{i_0}$. Its diameter is bounded by $2nL$, and by triangle inequality,
\[
\dist(\Lambda_{\Bx},\varPi\BC^{(n)}_L(\By))\geq R(\By)-(\max_{1\leq i,j\leq n}|y_i-y_j| + 2L) - \max_{u,v\in \Lambda_{\Bx}}|u-v|>0,
\]
this implies that $\BC^{(n)}_L(\Bx)$ is $\CJ$-separable from $\BC^{(n)}_L(\By)$ with $\CJ$ the index subset appearing in the definition of $\Lambda_{\Bx}$. $\qed$

\subsection{Proof of Theorem \ref{thm:Wegner}}
The proof extends that of Theorem 2 from \cite{CS08} to an arbitrary $n\geq2$ and is essentially based on Stollmann Lemma (cf. \cite{St01}). Without loss of generality, we can assume that
$\BC^{(n)}(\Bu)$ is pre-separable from $\BC^{(n)}(\Bu')$:
\[
\exists\, \CJ\subset\{1, \ldots, n\}:\quad
\varPi_{\CJ}\BC^{(n)}(\Bu)\cap\left(\varPi_{\CJ^c}\BC^{(n)}(\Bu)\cup\varPi\BC^{(n)}(\Bu')\right)=\emptyset.
\]
(Otherwise, we exchange the roles of $\BC^{(n)}(\Bu)$ and $\BC^{(n)}(\Bu')$.)
Let
\[
\{\lambda^{k}: k=1,\ldots,|\BC^{(n)}(\Bu)|\}, \quad \{\lambda^{k'}: k'=1,\cdots,|\BC^{(n)}(\Bu')|\},
\]
be the eigenvalues of $\BH^{(n)}_{\BC^{(n)}(\Bu)}$ and $\BH^{(n)}_{\BC^{(n)}(\Bu')}$ respectively. Further, let $\FB(\varPi\Bu')$ be the sigma-algebra generated by the random variables
$\{V(y,\cdot),\, y\in \varPi\BC^{(n)}(\Bu')\}$. Then the operator
$\BH^{(n)}_{\BC^{(n)}(\Bu')}$ is $\FB(\varPi\Bu')$-measurable,
thus conditional on $\FB(\varPi\Bu')$, its eigenvalues become non-random.
Therefore, we have:
\begin{equation}\label{eq:prob.esm.cond}
\begin{aligned}
&\prob{\dist(\sigma(\BH^{(n)}_{\BC^{(n)}(\Bu)}),\sigma(\BH^{(n)}_{\BC^{(n)}(\Bu')}))\leq \varepsilon}
\\
& =\esm{ \prob{\dist(\sigma(\BH^{(n)}_{\BC^{(n)}(\Bu)}),\sigma(\BH^{(n)}_{\BC^{(n)}(\Bu')}))\leq \varepsilon \, \Big| \, \FB(\varPi\Bu')}}
\\
& \leq |\BC^{(n)}(\Bu')| \cdot \sup_{\lambda \in\DR }  \; \esm{\prob{\dist(\sigma(\BH^{(n)}_{\BC^{(n)}(\Bu)}),\lambda) \leq \varepsilon
\, \Big| \, \FB(\varPi\Bu')}}\\
& \leq |\BC^{(n)}(\Bu)|\cdot |\BC^{(n)}(\Bu')|\cdot\max_k \cdot \sup_{\lambda \in\DR }  \; \esm{\prob{|\lambda^k_{\BC^{(n)}(\Bu)}(\omega)-\lambda| \leq \varepsilon \, \Big| \, \FB(\varPi\Bu')}}.
\end{aligned}
\end{equation}
For any $k=1,\ldots,|\BC^{(n)}(\Bu)|$, and any $i=1,\ldots,n$,
\begin{equation}\label{eq:prob.esm.cond.2}
\prob{|\lambda^k_{\BC^{(n)}(\Bu)}-\lambda|\leq \varepsilon\, \Big|\,\FB(\varPi\Bu')}=\esm{\prob{|\lambda^k_{\BC^{(n)}(\Bu)}-\lambda|\leq \varepsilon\, \Big|\,\FB(\varPi_{\neq i}\Bu\cup\varPi\Bu')}\, \Big| \, \FB(\varPi\Bu')},
\end{equation}
where $\FB(\varPi_{\neq i}\Bu\cup\varPi\Bu')$ is the sigma-algebra generated by
$\{V(y,\cdot),\, y\in (\varPi\BC^{(n)}(\Bu)\setminus\varPi_i\BC^{(n)}(\Bu))\cup \varPi\BC^{(n)}(\Bu')\}$.  For any $i\in\CJ$, set $J=\varPi_{i}\BC^{(n)}(\Bu)$, $p=|J|$ and denote by $\{y_i:i=1,\ldots,p\}$ the elements of $J$. 
For $v\in\{(V(y_1,\omega),\ldots,V(y_p,\omega)),\omega\in\Omega\}\subset\DR^p$ $\Bx\in\BC^{(n)}(\Bu)$ and $\phi\in\ell^2(\BC^{(n)}(\Bu))$, we define the operators family 
$\BB(v): \phi \mapsto \BB(v)\phi$ by
$$
\BB(v) \phi(\Bx) =\sum_{i=1}^n\sum_{j=1}^p \delta_{x_i,y_j} V(y_j,\omega)\phi(\Bx)=\sum_{j=1}^p \delta_{x_i,y_j} v_j\phi(\Bx).
$$
It is not difficult to see that it satisfies the following two properties in the sense
of quadratic forms:
\begin{enumerate}[\rm(i)]
\item
For all  $r\in\DR^{J}_{+}$, we have
\[
\BB(v+r)\geq \BB(v).
\]
\item
Let $e=e_1+\cdots+e_p\in\DR^p\cong\DR^J$, where $p=|J|$ and $\{e_j\}_{j=1,\ldots,p}$
is the canonical basis of $\DR^J$. Then for all $t \ge 0$
\[
\BB(v+te)-\BB(v)\geq t.
\]
\end{enumerate}
Indeed, item (i) is obvious using the definition of $\BB(v)$. Next, for $v\in\{(V(y_1,\omega),\ldots,V(y_p,\omega)),\omega\in\Omega\}$ by linearity,
\[
((\BB(v+te) - \BB(v))\Phi,\Phi)
=\sum_{\Bx\in\BC^{(n)}(\Bu)}\sum_{i=1}^n\sum_{j=1}^p \delta_{x_i,y_j} t|\Phi(\Bx)|^2
\geq t\|\Phi\|^2.
\]
This proves (ii). 

We will call a parametric family of operators acting in a Hilbert space $\CH$
and indexed by vectors of a Euclidean space $\DR^J$
(with the canonical basis $\{e_i\}$), \emph{diagonally monotone}, if it satisfies
the above properties (i) and (ii).

Note that if $\{\BB(v)\}$ is a diagonally monotone family of operators in a Hilbert space $\CH$, then for any operator $\BK:\CH\rightarrow\CH$, the operator family $v\mapsto \BK+\BB(v)$ is  also diagonally monotone. Thus, since  the operator
$\BH^{(n)}_{\BC^{(n)}(\Bu)}$ admits the decomposition
\[
\BH^{(n)}_{\BC^{(n)}(\Bu)}(\omega) = \sum_{i=1}^n\sum_{j=1}^p \delta_{x_i,y_j}V(y_j,\omega) + \BK(\omega)
\]
where
\[
\BK(\omega) := \BDelta + \BU + \sum_{i=1}^n\sum_{y\in \varPi\BC^{(n)}(\Bu)\setminus J}\delta_{x_i,y} V(y,\omega)
\]
is $\FB(\varPi_{\neq i}\Bu\cup\varPi\Bu')$-measurable, i.e., non-random
conditional on  $\FB(\varPi_{\CJ^{\comp}}\Bu\cup\varPi\Bu')$. As a result,
$\BH^{(n)}_{\BC^{(n)}(\Bu)}$ is a diagonally monotone
family parameterized by $v$. By the min-max principle, each eigenvalue $\lambda^k_{\BC^{(n)}(\Bu)}$, $k=1,\ldots,|\BC^{(n)}(\Bu)|$ veiwed  as a function of $v\in\{(V(y_1,\omega),\ldots,V(y_p,\omega)),\omega\in\Omega\}$ where the random variables $\{V(x,\omega):x\in\varPi\BC^{(n)}(\Bu)\setminus J\}$ are fixed, is also diagonally monotone.
Therefore, Stollmann's lemma (cf. e.g. Lemma 2.1 in \cite{CS08}) applies to each eigenvalue, and we obtain by the independence of the sigma-algebras $\FB(\varPi_{\CJ}\Bu)$ and $\FB(\varPi_{\CJ^{\comp}}\Bu\cup\varPi\Bu')$:

\begin{equation}\label{eq:prob.esm.cond.3}
\begin{aligned}
\prob{|\lambda^k_{\BC^{(n)}(\Bu)}(\omega)-\lambda| \leq \varepsilon
\, \Big| \, \FB(\varPi_{\neq i}\Bu\cup\varPi\Bu')}&\leq\mu^{J}\left\{v: |\lambda-\lambda^k_{\BC^{(n)}(\Bu)}(v)|\leq \varepsilon\right\}
\\
&
\leq |J| \cdot s(F_V,2\varepsilon),
\end{aligned}
\end{equation}
where $\mu^J:=\DP_J$ is the restriction of the probability measure $\DP$ to the sigma-algebra $\FB(\varPi_{\CJ}\Bu)$.
Collecting \eqref{eq:prob.esm.cond}, \eqref{eq:prob.esm.cond.2} and \eqref{eq:prob.esm.cond.3}, we come to the
assertion 
\begin{align*}
&
\prob{\dist(\sigma(\BH^{(n)}_{\BC^{(n)}(\Bu)}),\sigma(\BH^{(n)}_{\BC^{(n)}(\Bu')}))
\leq \varepsilon}
\\
&\qquad
 \leq|\BC^{(n)}(\Bu')|\cdot |\BC^{(n)}(\Bu)| \cdot\max_{i=1,\ldots,n}\max_{\Bu,\Bu'}\{ |\varPi_i\BC^{(n)}(\Bu)|,|\varPi_i\BC^{(n)}(\Bu')|\} \cdot s(F_V,2\varepsilon). \qedhere
\end{align*}


\section*{Acknowledgements}
This work is done in the framework of my PhD Thesis at the Universit\'e Paris Diderot Paris 7. I am  grateful to Anne Boutet de Monvel who suggested me this problem by the year 2009 which marks the begening of the project. In addition, I would  like to thank Victor Chulaevsky for helpful discussions and encouragement. I also thank Mostafa Sabri for reading an earlier version of the text.
\begin{bibdiv}
\begin{biblist}
\bib{AW09}{article}{
   author={Aizenman, M.},
   author={Warzel, S.},
   title={Localization bounds for multiparticle systems},
   journal={Commun. Math. Phys.},
   date={2009},
   pages={903--934},
}
\bib{AW10}{article}{
   author={Aizenman, M.},
   author={Warzel, S.},
   title={Complete dynamical localization in disordered quantum multi-particle
   systems},
   conference={title={XVIth International Congress on Mathematical Physics},},
   book={publisher={World Sci. Publ., Hackensack, NJ},},
   date={2010},
   pages={556--565},
}
\bib{B68}{book}{
   author={ Berezanskii, J.M.},
   title={Expansion in eigenfunctions of self adjoint  operators},
   series={providence, R.I. American Mathematical Society},
   volume={17},
   publisher={Trans. Math. Monographs},
   date={1968},
}
\bib{BCS11}{article}{
   author={ Boutet de Monvel, A.},
   author={Chulaevsky, V.},
   author={Suhov, Y.},
   title={Dynamical localization for multiparticle model with an alloy-type external random potential},
   journal={Nonlinearity},
   volume={24},
   year={2011},
   pages={1451--1472},
}
\bib{CS08}{article}{
   author={ Chulaevsky, V.},
   author={Suhov, Y.},
   title={Wegner bounds for a two particle tight-binding model},
   journal={Commun. Math. Phys.},
   volume={283},
   date={2008},
   pages={479--489},
}
\bib{CS09a}{article}{
   author={Chulaevsky, V.},
   author={Suhov, Y.},
   title={Eigenfunctions in a two-particle Anderson tight binding model},
   journal={Comm. Math. Phys.},
   volume={289},
   date={2009},
   pages={701--723},
}
\bib{CS09b}{article}{
   author={C{h}ulaevsky, V.},
   author={Suhov, Y.},
   title={Multi-particle Anderson Localization: Induction on the number of particles},
   journal={Math. Phys. Anal. Geom.},
   volume={12},
   date={2009},
   pages={117--139},
}
\bib{DS01}{article}{
   author={Damanik, D.},
   author={Stollmann, P.},
   title={Multi-scale analysis implies strong dynamical localization},
   journal={Geom. Funct. Anal.},
   volume={11},
   date={2001},
   number={1},
   pages={11--29},
}
\bib{DK89}{article}{
   author={von Dreifus, H.},
   author={Klein, A.},
   title={A new proof of localization in the Anderson tight binding model},
   journal={Commun. Math. Phys.},
   volume={124},
   date={1989},
   pages={285--299},
}
\bib{E11}{article}{
   author={Ekanga, T.},
   title={On two-particle Anderson localization at low energies},
   journal={C. R. Acad. Sci. Paris, Ser. I},
   volume={349},
   date={2011},
   pages={167--170},
}
\bib{FMSS85}{article}{
   author={Fr\"{o}hlich, J.},
   author={Martinelli, F.},
   author={Scoppola, E.},
   author={Spencer, T.},
   title={Constructive proof of localization in the Anderson tight binding
   model},
   journal={Commun. Math. Phys.},
   volume={101},
   date={1985},
   pages={21--46},
}
\bib{GB98}{article}{
   author={Germinet, F.},
   author={De Bi\`{e}vre, S.},
   title={Dynamical localization for discrete and continuous random
   Schr\"odinger operators},
   journal={Comm. Math. Phys.},
   volume={194},
   date={1998},
   number={2},
   pages={323--341},
}

\bib{GK01}{article}{
   author={Germinet, F.},
   author={Klein, A.},
   title={Bootstrap Multi-Scale Analysis and localization in random media},
   journal={Commun. Math. Phys.},
   volume={222},
   date={2001},
   pages={415--448},
}

\bib{K08}{misc}{
   author={Kirsch, W.},
   title={An Invitation to Random Schr\"{o}dinger Operators},
   date={2008},
   status={Panorama et Synth\`eses, 25, Soc. Math. France, Paris},
}
\bib{KN13}{article}{
   author={Klein, A.},
   author={T. Nguyen},
   title={The boostrap multiscale analysis for the multiparticle Anderson model},
   journal={J. Stat. Phys.},
   date={2013},
}
\bib{S83}{article}{
   author={Simon, B.},
   title={Schr\"odinger semigroups},
   journal={: Bull. Am. Math. Soc.},
   volume={7},
   date={1983},
   pages={447--526},
}

\bib{St01}{book}{
   author={Stollmann, P.},
   title={Caught by disorder},
   series={Progress in Mathematical Physics},
   volume={20},
   note={Bound states in random media},
   publisher={Birkh\"auser Boston Inc.},
   place={Boston, MA},
   date={2001},
}
\end{biblist}
\end{bibdiv}
\end{document}